\documentclass[12pt,draftcls,onecolumn]{IEEEtran}
\usepackage{latexsym}
\usepackage{graphicx}
\usepackage{array}
\usepackage{amsmath}
\usepackage{amsfonts}
\usepackage{amssymb}
\usepackage{amsthm}
\usepackage{algpseudocode}
\usepackage{algorithm}

%\usepackage{amsfonts}

%structure declarations
\newtheorem{thm}{Theorem}
\newtheorem{lemma}{Lemma}
\newtheorem{prop}{Proposition}

\newtheorem{cor}{Corollary}

\newcommand{\bx} {\boldsymbol{x}}

\newcommand{\by} {\boldsymbol{y}}

\newcommand{\bH} {\boldsymbol{H}}
\newcommand{\bB} {\boldsymbol{B}}
\newcommand{\bA} {\boldsymbol{A}}

\newcommand{\bI} {\boldsymbol{I}}
\newcommand{\bR} {\boldsymbol{R}}
\newcommand{\bU} {\boldsymbol{U}}
\newcommand{\bu} {\boldsymbol{u}}

\newcommand{\bW} {\boldsymbol{W}}
\newcommand{\bQ} {\boldsymbol{Q}}
\newcommand{\bM} {\boldsymbol{M}}
\newcommand{\bP} {\boldsymbol{P}}

\newcommand{\bLam} {\boldsymbol{\Lambda}}
\newcommand{\gl}{\lambda}

\newcommand{\bxi} {\boldsymbol{\xi}}

\newcommand{\sN} {\mathcal{N}}
\newcommand{\sR} {\mathcal{R}}
\newcommand{{\diag}} {\mathrm{diag}}

\def\bal#1\eal{\begin{align}#1\end{align}}
\newcommand{\bp} {\begin{proof}}
\newcommand{\ep} {\end{proof}}

\newcommand{{\bRF}} {\right\}}

\newcommand{\tr}{\operatorname{tr}}

\begin{document}

\title{The Capacity and Optimal Signaling for Gaussian MIMO Channels Under Interference Constraints (full version)}

\author{Sergey Loyka, Senior Member, IEEE

\vspace*{-1\baselineskip}

\thanks{This paper was presented in part at the 5th IEEE Global Conference on Signal and Information Processing, Montreal, Canada, Nov. 2017 \cite{Loyka-17-2}.}

\thanks{S. Loyka is with the School of Electrical Engineering and Computer Science, University of Ottawa, Ontario, Canada, e-mail: sergey.loyka@uottawa.ca.}

}

\maketitle

\vspace*{-1\baselineskip}
\begin{abstract}
Gaussian MIMO channel under total transmit and interference power constraints (TPC and IPC) is considered. A closed-form solution for the optimal transmit covariance matrix in the general case is obtained using the KKT-based approach (up to dual variables). While closed-from solutions for optimal dual variables are possible in special cases, an iterative bisection algorithm (IBA) is proposed to find the optimal dual variables in the general case and its convergence is proved for some special cases. Numerical experiments illustrate its efficient performance. Bounds for the optimal dual variables are given, which facilitate numerical solutions. An interplay between the TPC and IPC is studied, including the transition from power-limited to interference-limited regimes as the total transmit power increases. Sufficient and necessary conditions for each constraint to be redundant are given.  A number of explicit closed-form solutions are obtained, including full-rank and rank-1 (beamforming) cases as well as the case of identical eigenvectors (typical for massive MIMO settings). A bound on the rank of optimal covariance is established. A number of unusual properties of optimal covariance matrix are pointed out.
\end{abstract}

\vspace*{-1.5\baselineskip}

%=====================================================================================
\section{Introduction}

Growing volume of high-rate mobile wireless traffic stimulated active development of 5G standards and systems. Due to very high expectations, several new key technologies have been identified to meet those demands, including massive MIMO, millimeter waves (mmWave) and non-orthogonal multiple-access \cite{Shafi-17}. The ultimate goal is to increase significantly the available bandwidth as well as spectral efficiency to meet the growing traffic demands. However, aggressive frequency re-use and non-orthogonal access schemes can potentially generate significant amount of inter-user interference, which thus has to be carefully managed \cite{Liang-17}-\cite{Song-17}. This is somewhat similar to cognitive radio (CR) systems, which also emerged as a powerful approach to exploit underutilized spectrum and hence possibly resolve the spectrum scarcity problem \cite{Haykin-09}. In both settings, allowing spectrum re-use calls for a careful management of possible interference. In this respect, multi-antenna (MIMO) systems have significant potential due to their significant signal processing capabilities, including interference cancellation and precoding \cite{Biglieri}, which can also be done in an adaptive and distributed manner \cite{Scurati-10}. A promising approach is to limit interference to primary receivers (PR) by properly designing secondary transmitters (Tx) while exploiting their  multi-antenna capabilities.

The capacity and optimal signalling for the Gaussian MIMO channel under the total power constraints (TPC) is well-known: the optimal (capacity-achieving) signaling is Gaussian and, under the TPC, is on the eigenvectors of the channel with power allocation to the eigenmodes given by the water-filling (WF) \cite{Biglieri}-\cite{Telatar-95}. Under per-antenna power constraints (PAC), in addition or instead of the TPC, Gaussian signalling is still optimal but not on the channel eigenvectors anymore so that the standard water-filling solution over the channel eigenmodes does not apply \cite{Vu-11}\cite{Loyka-17}. Much less is known under the added interference power constraint (IPC), which limits the power of interference induced by the secondary transmitter to a primary receiver. A game-theoretic approach to this problem was proposed in \cite{Scurati-10}, where a fixed-point equation was formulated from which the optimal covariance matrix can in principle be determined. Unfortunately, no closed-form solution is known for this equation. In addition, this approach is limited in the following respects: the channel to the primary receiver is required to be full-rank (hence excluding the important case of single-antenna devices communicating to a multi-antenna base station or, in general, the cases where the number of Rx antennas is less than the number of Tx antennas - typical for massive MIMO downlink); the TPC is not included explicitly (rather, being "absorbed" into the IPC), hence eliminating the important case of inactive IPC (since this is the only explicit constraint); consequently, no interplay between the TPC and the IPC can be studied.

Earlier studies on cognitive radio MIMO system optimization under interference constraint using game-theoretic approach are extended to the case of channel uncertainty in \cite{Yang-13} by developing a number of numerical algorithms for Tx optimization under global interference constraints.  Due to the non-convex nature of the original problem, a number of approximate and sub-optimal approaches are adopted, for which provable convergence to a global optimum is out of reach. No closed-form solutions are known for this setting either. Weighted sum-rate maximization in multiuser MISO channel is considered in \cite{Huh-10} under interference constraints and numerically-efficient algorithms for Tx optimization based on zero-forcing beamforming are developed. No closed-form solutions are obtained for this problem. Gaussian MIMO broadcast (BC) and multiple-access channels (MAC) are considered in \cite{Zhang-12} under general linear Tx covariance constraint, which can also be interpreted as interference constraint, and the earlier BC-MAC duality result is extended to this more general setting. However, no closed-form solutions are obtained for an optimal Tx covariance matrix. In the related context of physical-layer security, the secrecy capacity of the Gaussian MIMO wiretap channel under interference constraints has been characterized in \cite {Dong-18} as a non-convex maximization  problem (over feasible Tx covariance matrices under the TPC and IPC) or as a convex-concave max-min problem (where "min" is over noise covariance matrices of a genie-aided channel), for which no closed-form solution is known in general.

Unlike most of the studies above, we concentrate here on analysis of the problem and obtain a number of closed-form solutions, which are validated via numerical experiments. This provides deeper understanding of the problem and a number of insights unavailable from numerical algorithms alone. Specifically, we obtain closed-form solutions for an optimal covariance matrix of the Gaussian MIMO channel under the TPC and the IPC using the KKT conditions. Both constraints are included explicitly and hence anyone is allowed to be inactive. This allows us to study the interplay between the power and interference constraints and, in particular, the transition from power-limited to interference limited regimes as the Tx power increases. As an added benefit, no limitations is placed on the rank of the channel to the PR, so that the number of antennas of the PR can be any. Under the added IPC, independent signaling is shown to be sub-optimal for parallel channels to the intended receiver (Rx), unless the PR channels are also parallel or if the IPC is inactive. These results are also extended to multiple IPCs.

Optimal signaling for the Gaussian MIMO channel under the TPC and the IPC has been also studied in \cite{Zhang-10}\cite{Yu-10} using the dual problem approach, and was later extended to multi-user settings in \cite{Liu-12}. However, constraint matrices are required to be full-rank and no closed-form solution was obtained for optimal dual variables. Hence, various numerical algorithms or sub-optimal solutions were proposed (e.g. partial channel projection). Similar problem has been also considered in \cite{Pham-18} under multiple linear constraints at the transmitter and an iterative numerical algorithm was developed via a min-max reformulation of the problem. However, no closed-form solution was obtained and the properties of optimal signaling as well as those of the capacity were not studied. Here, a closed-form solution for an optimal covariance matrix is obtained in the general case (up to dual variables), its properties are studied and a number of more explicit solutions are obtained in some special cases (including explicit solutions for dual variables). Our KKT-based approach does not require full-rank constraint matrices and includes explicit equations for the optimal dual variables, which can be solved efficiently. To this end, we propose an iterative (gradient-free) bisection algorithm (IBA) and prove its convergence. Numerical experiments demonstrate its efficient performance. In some cases, our KKT-based approach leads to closed-form solutions for the optimal dual variables, including full-rank and rank-1 (beamforming) solutions. Bounds to the optimal dual variables are derived, which facilitate numerical solutions. Properties of the optimal Tx covariance as a function of dual variables are explored: the total Tx power as well as interference power are shown to be decreasing functions of dual variables, which is an important part in the proof of the IBA convergence.

The above solutions for optimal covariance posses a number of unusual properties (not found in the standard WF procedure), namely: an optimal covariance is not necessarily unique; its rank can exceed the main channel rank; the TPC can be inactive; signaling on the main channel eigenmodes is not optimal anymore. Ultimately, these are due to an interplay between the TPC and the IPC.

A simple rank condition is given to characterize the cases where spectrum sharing is possible for any interference power constraint. In general, the primary user has a major impact on the capacity at high SNR while being negligible at low SNR. The high-SNR behaviour of the capacity is qualitatively determined by the null space of the PR's channel matrix.

The presented closed-form solutions of optimal signaling can be used directly in massive MIMO settings. Since numerical complexity of generic convex solvers can be prohibitively large for massive MIMO (in general, it scales as $m^6$ with the number $m$ of antennas), the above analytical solutions are a valuable low-complexity alternative.

In all considered cases, optimal Tx covariance matrices are significantly different from those of the standard Gaussian MIMO channel under the TPC and/or the PAC \cite{Vu-11}\cite{Loyka-17}, and from those of the wiretap channel in \cite{Dong-18}. In the latter two cases, optimal Tx covariance matrix remains unknown in the general case while some special cases have been solved.

Finally, it should be pointed out that the channel model we consider here (the standard point-to-point Gaussian MIMO channel under interference constraints at the transmitter) is different from the Gaussian interference channel (G-IC) where multi-user interference is present at each receiver and there is no interference constraint at the transmitters, as in e.g. \cite{Shang-13}-\cite{Lagen-16}. In the latter case, the capacity and optimal signaling are not known in general, even for the 2-user SISO channel \cite{Shang-13} (it is not even known whether Gaussian signaling is optimal in general), except for some special cases, such as strong and weak interference regimes \cite{Sato-81}\cite{Annapureddy-11}, so that various bounds \cite{Etkin-08}\cite{Motahari-09} and ad-hoc signaling techniques \cite{Ye-03}-\cite{Lagen-16} are used instead. On the contrary, optimal signaling is known to be Guassian for the channel model considered here and the capacity can be expressed as an optimization problem over all feasible transmit covariance matrices. An analytical solution of this problem in the general case and its properties are the main contributions of the present paper.

\textit{Notations}: bold capitals ($\bR$) denote matrices while bold lower-case letters ($\bx$) denote column vectors; $\bR^+$ is the Hermitian conjugation of $\bR$; $\bR \ge 0$ means that $\bR$ is positive semi-definite; $|\bR|,\ tr(\bR),\ r(\bR)$ denote determinant, trace and rank of $\bR$, respectively; $\gl_i(\bR)$ is $i$-th eigenvalue of $\bR$; unless indicated otherwise, eigenvalues are in decreasing order, $\gl_1\ge \gl_2\ge ..$; $\lceil\cdot\rceil$ denotes ceiling, while $(x)_+=\max[0,x]$ is the positive part of $x$; $\mathcal{R}(\bR)$ and $\mathcal{N}(\bR)$ denote the range and null space of $\bR$ while $\bR^{\dag}$ is its Moore-Penrose pseudo-inverse; $\mathbb{E}\{\cdot\}$ is statistical expectation.

%======================================================================================
\section{Channel Model}
\label{sec.Channel Model}

Let us consider the standard discrete-time model of the Gaussian MIMO channel:
\bal
\label{eq.ch.mod}
\by_1 = \bH_1\bx +\bxi_1
\eal
where $\by_1, \bx, \bxi_1$ and $\bH_1$ are the received and transmitted signals, noise and channel matrix, of dimensionality $n_1\times 1$, $m\times 1$, $n_1\times 1$, and $n_1\times m$, respectively, where $n_1, m$ are the number of Rx and Tx antennas. This is illustrated in Fig. 1. The noise is assumed to be complex Gaussian with zero mean and unit variance, so that the SNR equals to the signal power. A complex-valued channel model is assumed throughout the paper, with full channel state information available both at the transmitter and the receiver. Gaussian signaling is known to be optimal in this setting \cite{Biglieri}-\cite{Telatar-95} so that finding the channel capacity $C$ amounts to finding an optimal transmit covariance matrix $\bR$, which can be expressed as the following optimization problem (P1):
\bal
\label{eq.C.def}
(P1):\ C = \max_{\bR \in S_R} C(\bR)
\eal
where $C(\bR) = \log|\bI +\bW_1\bR|$, $\bW_1=\bH_1^+\bH_1$, $\bR$ is the Tx covariance and $S_R$ is the constraint set. In the case of the total power constraint (TPC) only, it takes the form
\bal
S_R=\{\bR: \bR\ge 0, \tr(\bR) \le P_T\},
\eal
where $P_T$ is the maximum total Tx power. The solution to this problem is well-known: optimal signaling is on the eigenmodes of $\bW_1$, so that they are also the eigenmodes of optimal covariance $\bR^*$, and the optimal power allocation is via the water-filling (WF). This solution can be compactly expressed as follows:
\bal
\label{eq.RWF}
\bR^*=\bR_{WF}\triangleq(\mu^{-1}\bI-\bW_1^{-1})_+
\eal
where $\mu\ge 0$ is the "water" level found from the total power constraint $tr(\bR^*)=P_T$ and $(\bR)_+$ denotes positive eigenmodes of Hermitian matrix $\bR$:
\bal
\label{eq.R+}
(\bR)_+=\sum_{i: \gl_i>0} \gl_i\bu_i\bu_i^+
\eal
where $\gl_i,\ \bu_i$ are $i$-th eigenvalue and eigenvector of $\bR$. Note that this definition allows $\bW_1$ in \eqref{eq.RWF} to be singular, since the respective summation (as in \eqref{eq.R+}) includes only strictly-positive modes: $\bR_{WF} = \sum_{i: \gl_{1i} >\mu} (\mu^{-1} -\gl_{1i}^{-1})\bu_{1i}\bu_{1i}^+$, where $\gl_{1i}, \ \bu_{1i}$ are $i$-th eigenvalue and eigenvector of $\bW_1$, so that $\gl_{1i} =0$ are excluded from the summation due to $\gl_{1i} >\mu \ge 0$.

In a multi-user system, there is a 2nd channel from the Tx to an external (primary) user U, see Fig. \ref{fig.system},
\bal
\by_2=\bH_2\bx+\bxi_2
\eal
where $\bH_2$ is the matrix of the Tx-U channel (see Fig. 1); all vector/matrix dimensions are equal to the respective number of antennas. There is a limit on how much interference the Tx can induce (via $\bx$) to the user U:
\bal
\mathbb{E}\{\bx^+\bH_2^+\bH_2\bx\} = tr(\bH_2\bR\bH_2^+) \le P_I
\eal
where $P_I$ is the maximum acceptable interference power and the left-hand side is the actual interference power at user U. In this setting, the constraint set becomes
\bal
\label{eq.SR.2}
S_R=\{\bR: \bR\ge 0,\ tr(\bR) \le P_T,\ tr(\bW_2\bR) \le P_I\},
\eal
where $\bW_2=\bH_2^+\bH_2$; this is extended to multiple IPCs in Section \ref{sec.multi}. The IPC in \eqref{eq.SR.2} is consistent with the respective per-user constraints in the CR setting in \cite{Scurati-10}-\cite{Zhang-10}, with the relay system setting in \cite{Yu-10} as well as with the wiretap channel setting in \cite{Dong-18}. The Gaussian signalling is still optimal in this setting and the capacity subject to the TPC and IPC can still be expressed as in \eqref{eq.C.def} but the optimal covariance is not $\bR_{WF}$ anymore, as discussed in the next section.

\begin{figure}[t]%[htbp]
	\centerline{\includegraphics[width=3.5in]{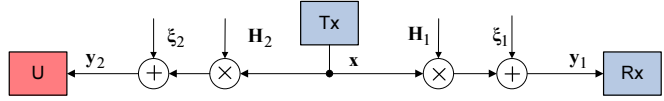}}
	\caption{A block diagram of Gaussian MIMO channel under interference constraint. $\bH_1$ and $\bH_{2}$ are the channel matrices to the Rx and an external (primary) user U respectively.}
		\label{fig.system}
\end{figure}

%\newpage
%=====================================================================================
\section{Optimal Signalling Under Interference Constraint}
\label{sec.C.Int}

The following Theorem gives a closed-form solution for the optimal Tx covariance matrix under the TPC and the IPC in \eqref{eq.SR.2} in the general case.

%\newpage
\begin{thm}
\label{thm.R*}
Consider the capacity of the Gaussian MIMO channel in \eqref{eq.C.def} under the joint TPC and IPC in \eqref{eq.SR.2},
\bal
%\label{eq.C.def}
C = \max_{\bR} C(\bR)\ \mbox{s.t.}\ \bR\ge 0,\ tr(\bR) \le P_T,\ tr(\bW_2\bR) \le P_I
\eal
The optimal Tx covariance matrix to achieve the capacity can be expressed as follows:
\bal
\label{eq.thm.R*.1}
\bR^* = \bW_{\mu}^{\dag} (\bI-\bW_{\mu}\bW_1^{-1}\bW_{\mu})_+ \bW_{\mu}^{\dag}
\eal
where $\bW_{\mu}=(\mu_1\bI+\mu_2\bW_2)^{\frac{1}{2}}$; $\bW_{\mu}^{\dag}$ is Moore-Penrose pseudo-inverse of $\bW_{\mu}$; $\mu_1, \mu_2\ge 0$ are Lagrange multipliers (dual variables) responsible for the total Tx and interference power constraints found as solutions of the following non-linear equations:
\bal
\label{eq.thm.R*.2}
\mu_1(tr(\bR^*) - P_T)=0,\ \mu_2(tr(\bW_2\bR^*)-P_I)=0
\eal
subject to $tr(\bR^*)\le P_T$, $tr(\bW_2\bR^*) \le P_I$. The capacity can be expressed as follows:
\bal
\label{eq.C}
C = \sum_{i: \gl_{ai}>1} \log \gl_{ai}
\eal
where $\gl_{ai} = \gl_i(\bW_{\mu}^{\dag}\bW_1\bW_{\mu}^{\dag})$.
\end{thm}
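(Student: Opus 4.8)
The plan is to set up the Lagrangian for the (convex) problem (P1) with the two linear constraints and derive the KKT conditions, then solve the stationarity condition explicitly for $\bR^*$. First I would note that $C(\bR) = \log|\bI + \bW_1 \bR|$ is concave in $\bR \ge 0$ and the constraint set in \eqref{eq.SR.2} is convex and compact (and has nonempty interior, e.g. $\bR = \epsilon\bI$ for small $\epsilon$), so Slater's condition holds and the KKT conditions are necessary and sufficient for optimality. Introducing dual variables $\mu_1,\mu_2 \ge 0$ for the TPC and IPC and a positive-semidefinite matrix multiplier $\bM \ge 0$ for $\bR \ge 0$, the Lagrangian is $L = \log|\bI+\bW_1\bR| - \mu_1(\tr\bR - P_T) - \mu_2(\tr\bW_2\bR - P_I) + \tr(\bM\bR)$. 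Differentiating with respect to $\bR$ (using $\partial \log|\bI+\bW_1\bR|/\partial\bR = \bW_1(\bI+\bW_1\bR)^{-1}$ appropriately symmetrized, or $(\bI+\bR\bW_1)^{-1}\bW_1$) gives the stationarity condition

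$(\bI + \bW_1\bR^*)^{-1}\bW_1 = \mu_1\bI + \mu_2\bW_2 - \bM = \bW_\mu^2 - \bM,$

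together with complementary slackness $\bM\bR^* = \bo$ and \eqref{eq.thm.R*.2}.

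The heart of the proof is to invert this matrix equation for $\bR^*$. I would proceed on the range space of $\bW_\mu$: on $\sN(\bW_\mu)$ one checks $\bR^*$ must vanish (there $\bW_\mu^2$ is zero, forcing $\bM$ to carry the whole left side, and complementary slackness $\bM\bR^*=\bo$ together with the positivity structure kills that component), so effectively we work with $\bW_\mu$ invertible on its range. Left- and right-multiplying the stationarity equation by $\bW_\mu^\dag$ and rearranging, one wants to isolate $\bW_\mu \bR^* \bW_\mu$. Writing $\bI + \bW_1\bR^* = \bW_1(\bW_\mu^2 - \bM)^{-1}$ (on the appropriate subspace), then $\bR^* = \bW_1^{-1}[\bW_1(\bW_\mu^2-\bM)^{-1} - \bI] = (\bW_\mu^2 - \bM)^{-1} - \bW_1^{-1}$. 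Conjugating by $\bW_\mu$: $\bW_\mu\bR^*\bW_\mu = \bW_\mu(\bW_\mu^2-\bM)^{-1}\bW_\mu - \bW_\mu\bW_1^{-1}\bW_\mu$. The standard water-filling-type argument then identifies the "slack" matrix $\bM$ via the positive-part operation: since $\bR^* \ge 0$ and $\bM \ge 0$ with $\bM\bR^* = \bo$, the matrix $\bW_\mu\bR^*\bW_\mu$ is exactly the positive part of $(\bI - \bW_\mu\bW_1^{-1}\bW_\mu)$ — the negative eigenmodes are "absorbed" by $\bM$ while on the positive eigenmodes $\bM$ is zero. This yields $\bW_\mu\bR^*\bW_\mu = (\bI - \bW_\mu\bW_1^{-1}\bW_\mu)_+$, and applying $\bW_\mu^\dag$ on both sides gives \eqref{eq.thm.R*.1}. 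I expect the main obstacle to be the careful bookkeeping on $\sN(\bW_\mu)$ and $\sR(\bW_\mu)$ when $\bW_\mu$ is singular (i.e. $\mu_1 = 0$ and $\bW_2$ rank-deficient) — showing rigorously that $\bR^*$ has no component outside $\sR(\bW_\mu)$ and that the pseudo-inverse manipulations are valid there, rather than the formal algebra, which is routine once the subspace issue is settled.

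For the capacity formula \eqref{eq.C}, I would substitute the optimal $\bR^*$ back into $C(\bR^*) = \log|\bI + \bW_1\bR^*|$. Using \eqref{eq.thm.R*.1} and the identity $|\bI + \bA\bB| = |\bI + \bB\bA|$, one rewrites the determinant in terms of $\bW_\mu^\dag \bW_1 \bW_\mu^\dag$ and the positive-part matrix $(\bI - \bW_\mu\bW_1^{-1}\bW_\mu)_+$; since both matrices share the eigenbasis of $\bW_\mu\bW_1^{-1}\bW_\mu$ (equivalently of $\bW_\mu^\dag\bW_1\bW_\mu^\dag$ restricted to $\sR(\bW_\mu)$), the determinant factorizes over eigenvalues. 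On the $i$-th mode with $\gl_{ai} = \gl_i(\bW_\mu^\dag\bW_1\bW_\mu^\dag)$, the contribution is $1 + \gl_{ai}(1 - \gl_{ai}^{-1})_+ = \gl_{ai}$ when $\gl_{ai} > 1$ and $1$ otherwise, so $\log|\bI + \bW_1\bR^*| = \sum_{i:\gl_{ai}>1}\log\gl_{ai}$, which is \eqref{eq.C}. The only care needed here is matching up the positive part of $\bI - \bW_\mu\bW_1^{-1}\bW_\mu$ with the condition $\gl_{ai} > 1$: since $\gl_i(\bW_\mu\bW_1^{-1}\bW_\mu) = 1/\gl_{ai}$ on $\sR(\bW_\mu)$, the condition $\gl_i(\bW_\mu\bW_1^{-1}\bW_\mu) < 1$ is exactly $\gl_{ai} > 1$, and modes in $\sN(\bW_\mu)$ contribute nothing since $\bR^*$ vanishes there.
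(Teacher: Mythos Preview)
Your approach is essentially the paper's: invoke convexity and Slater to justify KKT, write down the stationarity condition with multiplier $\bM$ for $\bR\ge 0$, then change variables via $\bW_\mu$ to reduce to the classical water-filling form $\tilde{\bR}=(\bI-\tilde{\bW}_1^{-1})_+$ and undo the change. The full-rank $\bW_\mu$ case and the derivation of the capacity formula \eqref{eq.C} are correct and match the paper closely.

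The one genuine gap is your handling of singular $\bW_\mu$ (i.e.\ $\mu_1=0$, $\bW_2$ rank-deficient). You claim that on $\sN(\bW_\mu)$ the matrix $\bM$ ``carries the whole left side'' and then complementary slackness $\bM\bR^*=\bo$ forces $\bR^*$ to vanish there. That reasoning is backwards. Taking $\bx\in\sN(\bW_\mu)$ in the stationarity condition $\bQ_1(\bI+\bQ_1\bR\bQ_1)^{-1}\bQ_1+\bM=\bW_\mu^2$ and pairing with $\bx$ gives two nonnegative terms summing to zero, hence $\bW_1\bx=0$ \emph{and} $\bM\bx=0$. So $\bM$ is actually zero on $\sN(\bW_\mu)$, and complementary slackness tells you nothing about the $\sN(\bW_\mu)$-component of $\bR^*$. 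Indeed the KKT system is genuinely degenerate there: any $\bR^*$-component living in $\sN(\bW_\mu)\subseteq\sN(\bW_1)$ contributes nothing to $C(\bR)$, nothing to the IPC, and only \emph{consumes} TPC budget --- but since $\mu_1=0$ the TPC is slack, so such components are neither forced to be zero nor prevented from being zero. (The paper's Example~3 exhibits exactly this non-uniqueness.)

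The fix, which is what the paper does, is not to argue that $\bR^*$ \emph{must} vanish on $\sN(\bW_\mu)$ but that one may \emph{choose} it to vanish there without loss of optimality: since $\sN(\bW_\mu)\subseteq\sN(\bW_1)$, zeroing out that block leaves $C(\bR)$ and $\tr(\bW_2\bR)$ unchanged and can only decrease $\tr(\bR)$. Concretely, the paper diagonalizes $\bW_2=\bU_2\bLam_2\bU_2^+$, observes that in the $\bU_2$-basis all of $\bW_1,\bW_2,\bM$ are block-diagonal with zero lower-right block, solves the reduced (now full-rank) problem on $\sR(\bW_2)$, and then sets the remaining blocks of $\bR^*$ to zero by the optimality argument above. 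Once you replace your ``must vanish'' claim with this ``may be taken to vanish'' step, the rest of your argument goes through and coincides with the paper's.
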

\begin{proof}
See Appendix.
\end{proof}

As indicated in Section \ref{sec.multi}, this solution can be easily extended to multiple IPCs. When $\bW_{\mu}$ is full-rank, $\bW_{\mu}^{\dag}=\bW_{\mu}^{-1}$ \cite{Horn-85} and $\bR^*$ in \eqref{eq.thm.R*.1} reduces to the respective solutions in \cite{Zhang-10}\cite{Yu-10}.

Next, we explore some general properties of the capacity. It is well-known that, without the IPC, $C(P_T)$ grows unbounded as $P_T$ increases, $C(P_T) \rightarrow \infty$ as $P_T \rightarrow \infty$ (assuming $\bW_1 \neq 0$). This, however, is not necessarily the case under the IPC with fixed $P_I$. The following proposition gives sufficient and necessary conditions when it is indeed the case.

\begin{prop}
\label{prop.C.inf}
Let $0 \le P_I < \infty$ be fixed. Then, the capacity grows unbounded as $P_T$ increases, i.e. $C(P_T) \rightarrow \infty$ as $P_T \rightarrow \infty$, if and only if $\sN(\bW_2) \notin \sN(\bW_1)$.
\end{prop}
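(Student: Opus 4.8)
The plan is to translate the null-space inclusion into a positive-semidefinite (PSD) ordering between $\bW_1$ and $\bW_2$ and then dispatch the two implications separately. The central algebraic fact I would establish first is that $\sN(\bW_2)\subseteq\sN(\bW_1)$ is equivalent to the existence of a finite $c>0$ with $\bW_1\le c\bW_2$; only the forward direction of this equivalence is actually needed below.

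For the ``if'' part (i.e.\ $\sN(\bW_2)\not\subseteq\sN(\bW_1)$ implies $C(P_T)\to\infty$) I would use an explicit feasible point. Pick a unit vector $\bx_0\in\sN(\bW_2)\setminus\sN(\bW_1)$; since $\bW_1,\bW_2\ge 0$ this gives $\bx_0^+\bW_2\bx_0=\|\bH_2\bx_0\|^2=0$ while $a\triangleq\bx_0^+\bW_1\bx_0=\|\bH_1\bx_0\|^2>0$. The beamforming covariance $\bR=P_T\bx_0\bx_0^+$ is feasible for every $P_T>0$: it is PSD, $\tr(\bR)=P_T$, and $\tr(\bW_2\bR)=P_T\bx_0^+\bW_2\bx_0=0\le P_I$. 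Using $|\bI+\bA\bB|=|\bI+\bB\bA|$ one gets $C(P_T)\ge C(\bR)=\log(1+aP_T)\to\infty$, as required.

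For the ``only if'' part I would argue the contrapositive: assuming $\sN(\bW_2)\subseteq\sN(\bW_1)$, show $C(P_T)$ stays bounded uniformly in $P_T$. First establish $\bW_1\le c\bW_2$: let $\bP$ be the orthogonal projector onto $\sR(\bW_2)=\sN(\bW_2)^\perp$ and let $\gamma>0$ be the smallest positive eigenvalue of $\bW_2$, so $\bW_2\ge\gamma\bP$ (the degenerate case $\bW_2=0$ forces $\bW_1=0$ and $C\equiv0$). Since range and null space are orthogonal complements for Hermitian matrices, $\sN(\bW_2)\subseteq\sN(\bW_1)$ yields $\sR(\bW_1)\subseteq\sR(\bW_2)$, hence $\bW_1=\bP\bW_1\bP\le\|\bW_1\|\bP\le(\|\bW_1\|/\gamma)\bW_2=:c\bW_2$. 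Then for any feasible $\bR$, congruence by $\bR^{1/2}$ preserves the ordering and, combining monotonicity of $\log|\bI+\cdot|$ with $\log|\bI+\bA|\le\tr(\bA)$ for $\bA\ge0$,
\bal
C(\bR)=\log|\bI+\bR^{1/2}\bW_1\bR^{1/2}|\le\log|\bI+c\,\bR^{1/2}\bW_2\bR^{1/2}|\le c\,\tr(\bW_2\bR)\le cP_I, \nonumber
\eal
which is independent of $P_T$; this is the desired contrapositive.

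I expect the main obstacle to be the algebraic lemma $\sN(\bW_2)\subseteq\sN(\bW_1)\Rightarrow\bW_1\le c\bW_2$, chiefly because $\bW_1$ and $\bW_2$ may be singular, so $\bW_2$ cannot simply be inverted; the projector-plus-smallest-positive-eigenvalue argument above is the clean way to sidestep this. Once that inequality is in hand, the two directions reduce, respectively, to a one-line feasible-point construction and to standard determinant/trace inequalities.
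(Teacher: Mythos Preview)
Your proof is correct. The ``if'' direction is essentially identical to the paper's: both pick a unit vector in $\sN(\bW_2)\setminus\sN(\bW_1)$ and use the rank-one covariance $P_T\bx_0\bx_0^+$ to drive $C$ to infinity.

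The ``only if'' direction differs in execution. The paper works directly with the eigendecompositions $\bW_k=\bU_{k+}\bLam_k\bU_{k+}^+$: from the IPC it bounds $\gl_1(\bU_{2+}^+\bR\bU_{2+})\le P_I/\gl_{r_2}$, then uses $\operatorname{span}\{\bU_{1+}\}\subseteq\operatorname{span}\{\bU_{2+}\}$ to transfer this to $\gl_1(\bU_{1+}^+\bR\bU_{1+})$, and finally bounds $C\le m\log(1+\gl_1(\bW_1)P_I/\gl_{r_2})$. Your route instead packages the same ingredients (the projector onto $\sR(\bW_2)$ and the smallest positive eigenvalue $\gamma$) into the single PSD inequality $\bW_1\le c\bW_2$, after which the bound follows in one line from determinant monotonicity and $\log|\bI+\bA|\le\tr(\bA)$. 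Your argument is cleaner and more transparent about \emph{why} the null-space inclusion is the right hypothesis; the paper's argument yields the sharper (logarithmic in $P_I$) bound $m\log(1+cP_I)$ rather than your linear $cP_I$, though both are finite and independent of $P_T$, which is all that is needed.
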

\begin{proof}
See Appendix.
\end{proof}

Since the condition of this Proposition is both sufficient and necessary for the unbounded growth of the capacity, it gives the exhaustive characterization of all the cases where such growth is possible. In practical terms, those cases represent the scenarios where any high spectral efficiency is achievable given enough power budget. On the other hand, if $\sN(\bW_2) \in \sN(\bW_1)$, then very high spectral efficiency cannot be achieved even with unlimited power budget, due to the dominance of the IPC. It can be seen that the condition $\sN(\bW_2) \notin \sN(\bW_1)$ holds if $r(\bW_2) < r(\bW_1)$, and hence the capacity grows unbounded with $P_T$ under the latter condition.

In the standard Gaussian MIMO channel without the IPC, $C=0$ if either $P_T=0$ or/and $\bW_1=0$, i.e. in a trivial way. On the other hand, in the same channel under the TPC and IPC, the capacity can be zero in non-trivial ways, as the following proposition shows. In practical terms, this characterizes the cases where interference constraints of primary users rule out any positive rate of a given user and, hence, spectrum sharing is not possible.

\begin{prop}
Consider the Gaussian MIMO channel under the TPC and IPC and let $P_T>0$. Its capacity is zero if and only if $P_I=0$ and $\sN(\bW_2) \in \sN(\bW_1)$.
\end{prop}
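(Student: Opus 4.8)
The plan is to reduce everything to two elementary observations and then exhibit explicit feasible covariances. First, since the eigenvalues of $\bW_1\bR$ coincide with those of the positive semi-definite matrix $\bW_1^{1/2}\bR\bW_1^{1/2}$, we have $C(\bR)=\log|\bI+\bW_1\bR|\ge 0$ for every $\bR\ge \bo$, with $C(\bR)=0$ if and only if $\bW_1^{1/2}\bR\bW_1^{1/2}=\bo$, i.e.\ $\bW_1\bR=\bo$, i.e.\ $\sR(\bR)\subseteq\sN(\bW_1)$. In particular, since the feasible set is nonempty, $C=0$ if and only if $C(\bR)=0$ for \emph{every} feasible $\bR$. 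Second, since $tr(\bW_2\bR)=tr(\bW_2^{1/2}\bR\bW_2^{1/2})\ge 0$ with equality if and only if $\bW_2\bR=\bo$, the IPC with $P_I=0$ is equivalent to $\sR(\bR)\subseteq\sN(\bW_2)$. I would record these two facts at the outset; throughout I assume $\bW_1\neq\bo$ (if $\bW_1=\bo$ then $C=0$ identically and the statement degenerates).

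For the ``if'' part, suppose $P_I=0$ and $\sN(\bW_2)\subseteq\sN(\bW_1)$. Any feasible $\bR$ then has $tr(\bW_2\bR)=0$, hence $\sR(\bR)\subseteq\sN(\bW_2)\subseteq\sN(\bW_1)$, hence $\bW_1\bR=\bo$ and $C(\bR)=0$; since this holds for all feasible $\bR$, $C=0$.

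For the ``only if'' part, suppose $C=0$, so that $\sR(\bR)\subseteq\sN(\bW_1)$ for every feasible $\bR$. I would first show $P_I=0$ by contradiction: if $P_I>0$, choose a unit vector $\bu$ with $\bu^+\bW_1\bu>0$ (possible since $\bW_1\neq\bo$; e.g.\ its principal eigenvector) and put $\bR=t\bu\bu^+$. For $t>0$ small enough, $tr(\bR)=t\le P_T$ (using $P_T>0$) and $tr(\bW_2\bR)=t\,\bu^+\bW_2\bu\le P_I$, so $\bR$ is feasible, yet $C(\bR)=\log(1+t\,\bu^+\bW_1\bu)>0$ --- a contradiction. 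Hence $P_I=0$, and by the first step the feasible set is exactly $\{\bR\ge\bo:\ tr(\bR)\le P_T,\ \sR(\bR)\subseteq\sN(\bW_2)\}$. Applying the conclusion $\sR(\bR)\subseteq\sN(\bW_1)$ to $\bR_0=(P_T/m)\bP_2$, where $\bP_2$ is the orthogonal projector onto $\sN(\bW_2)$ --- which is feasible since $tr(\bR_0)=(P_T/m)\dim\sN(\bW_2)\le P_T$ and $\sR(\bR_0)=\sN(\bW_2)$ --- gives $\sN(\bW_2)=\sR(\bR_0)\subseteq\sN(\bW_1)$, as required.

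The argument is essentially bookkeeping with ranges and null spaces, so I do not anticipate a genuine obstacle. The only points needing a little care are the two equivalences $C(\bR)=0\Leftrightarrow\bW_1\bR=\bo$ and $tr(\bW_2\bR)=0\Leftrightarrow\bW_2\bR=\bo$, which both rest on the fact that a trace-zero positive semi-definite matrix is the zero matrix, and the verification that the witnesses $t\bu\bu^+$ and $(P_T/m)\bP_2$ are genuinely feasible, which is precisely where $P_T>0$ is used; the trivial case $\bW_1=\bo$ is dispatched by the standing nontriviality assumption.
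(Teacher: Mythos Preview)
Your proof is correct and follows essentially the same strategy as the paper's: both directions rest on the equivalences $C(\bR)=0\Leftrightarrow \bW_1\bR=\bo$ and $tr(\bW_2\bR)=0\Leftrightarrow \bW_2\bR=\bo$, and the ``only if'' direction is handled by exhibiting explicit feasible covariances that witness $C>0$. The witnesses differ cosmetically---the paper uses $p\bI$ (with $p=\min\{P_T,P_I/(m\gl_1(\bW_2))\}$) for the $P_I>0$ case and a rank-one $P_T\bu\bu^+$ with $\bu\in\sN(\bW_2)\setminus\sN(\bW_1)$ for the null-space inclusion, whereas you use a small rank-one $t\bu\bu^+$ and the scaled projector $(P_T/m)\bP_2$ onto $\sN(\bW_2)$---but the underlying idea is identical; your explicit flagging of the degenerate case $\bW_1=\bo$ is a welcome clarification the paper leaves implicit.
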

\begin{proof}
To prove the "if" part, observe that $tr(\bW_2\bR)=P_I=0$ implies that $\bW_2\bR=0$ (since $\bW_2\bR$ has positive eigenvalues, $\gl_i(\bW_2\bR) \ge 0$) so that $\sR(\bR) \in \sN(\bW_2) \in \sN(\bW_1)$ and hence $\bW_1\bR=0$ so that $\log|\bI+\bW_1\bR|=0$ for any feasible $\bR$. Hence, $C=0$.

To prove the "only if" part, assume first that $P_I>0$ and set $\bR=p\bI$, where $p= \min\{P_T, P_I/(m\gl_1(\bW_2))\}$. Note that $\bR$ is feasible: $tr(\bR) \le P_T$ and $tr(\bW_2\bR) \le P_I$. Furthermore,
\bal
C \ge \log|\bI+\bW_1\bR| >0
\eal
and hence $P_I=0$ is necessary for $C=0$. To show that $\sN(\bW_2) \in \sN(\bW_1)$ is necessary as well, assume that $\sN(\bW_2) \notin \sN(\bW_1)$, which implies that $\exists \bu: \bW_2\bu=0, \bW_1\bu \neq 0$. Now set $\bR=P_T\bu\bu^+$, for which $tr(\bR) =P_T, tr(\bW_2\bR)=0$, so it is feasible and
\bal
C \ge \log|\bI+\bW_1\bR| = \log(1+P_T\bu^+\bW_1\bu) >0
\eal
so that $\sN(\bW_2) \in \sN(\bW_1)$ is necessary for $C=0$.
\end{proof}

Note that the condition $P_I=0$ is equivalent to zero-forcing transmission, i.e. the capacity is zero only if the ZF transmission is required; otherwise, $C>0$. The condition $\sN(\bW_2) \in \sN(\bW_1)$ cannot be satisfied if $r(\bW_1) > r(\bW_2)$ and hence $C>0$ under the latter condition, which is also sufficient for unbounded growth of the capacity with $P_T$. This is summarized below.

\begin{cor}
\label{cor.C.inf}
If $r(\bW_1) > r(\bW_2)$, then

1. $C \neq 0$ $\forall\ P_I \ge 0$ and $P_T>0$.

2. $C(P_T) \rightarrow\infty$ as  $P_T \rightarrow\infty$ $\forall\ P_I \ge 0$
\end{cor}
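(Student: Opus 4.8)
The plan is to reduce both parts to the two propositions that immediately precede the corollary, the only real work being to translate the rank hypothesis into the null-space inclusion that those propositions use. First I would record the elementary dimension fact: by the rank--nullity theorem, $\dim\sN(\bW_i) = m - r(\bW_i)$ for $i=1,2$, so the hypothesis $r(\bW_1) > r(\bW_2)$ is equivalent to $\dim\sN(\bW_2) > \dim\sN(\bW_1)$. A subspace can never have dimension strictly larger than a space it is contained in, so this forces $\sN(\bW_2) \notin \sN(\bW_1)$ (read in the paper's convention, i.e. $\sN(\bW_2)$ is not a subspace of $\sN(\bW_1)$).

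With that in hand, Part 2 is immediate from Proposition \ref{prop.C.inf}: for any fixed $P_I$ with $0 \le P_I < \infty$, that proposition asserts $C(P_T) \rightarrow \infty$ as $P_T \rightarrow \infty$ if and only if $\sN(\bW_2) \notin \sN(\bW_1)$, and the right-hand condition has just been verified. Since $P_I \ge 0$ was arbitrary, this gives statement~2.

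For Part 1 I would invoke the preceding proposition characterizing when the capacity vanishes: with $P_T > 0$, one has $C = 0$ if and only if $P_I = 0$ \emph{and} $\sN(\bW_2) \in \sN(\bW_1)$. Under our hypothesis the second of these two conditions never holds, so the conjunction never holds, and therefore $C \neq 0$ for every $P_I \ge 0$ (in particular also in the zero-forcing case $P_I = 0$), which is statement~1.

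The argument is essentially a one-line consequence of the two propositions, so there is no substantive obstacle; the only point requiring a moment of care is the direction of the dimension inequality in the first step --- namely that $\sN(\bW_2) \subseteq \sN(\bW_1)$ would force $\dim\sN(\bW_2) \le \dim\sN(\bW_1)$, hence $r(\bW_1) \le r(\bW_2)$, contradicting the hypothesis $r(\bW_1) > r(\bW_2)$.
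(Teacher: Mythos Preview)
Your argument is correct and matches the paper's reasoning exactly: the text immediately preceding the corollary observes that $\sN(\bW_2) \in \sN(\bW_1)$ cannot hold when $r(\bW_1) > r(\bW_2)$ (via the same rank--nullity/dimension comparison you give, spelled out in the proof of Proposition~\ref{prop.TPC.inact}), and then both conclusions follow directly from Propositions~\ref{prop.C.inf} and the $C=0$ characterization. There is nothing to add.
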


Thus, the condition $r(\bW_1) > r(\bW_2)$ represents favorable scenarios where spectrum sharing is possible for any $P_I$ and arbitrary large capacity can be attained given enough power budget.

It should be pointed out that \eqref{eq.thm.R*.2} in Theorem \ref{thm.R*} allows anyone of the dual variables to be inactive (i.e. $\mu_1=0$ or $\mu_2=0$, but not simultaneously), unlike the standard WF solution, where the TPC is always active. While it is not feasible to find dual variables $\mu_1, \mu_2$ in a closed form in general (since \eqref{eq.thm.R*.2} is a system of coupled non-linear equations), they can be found in such form in some special cases, as the next sections show. Section \ref{sec.IBA} will develop an iterative bisection algorithm (IBA) to find the optimal dual variables in the general case with any desired accuracy and prove its convergence.

%\newpage
%=====================================================================================
\section{Full-rank solutions}

While Theorem \ref{thm.R*} establishes a closed-form solution for optimal covariance $\bR^*$ in the general case, it is expressed via dual variables $\mu_1, \mu_2$ for which no closed-form solution is known in general so they have to be found numerically using \eqref{eq.thm.R*.2}. This limits insights significantly. In this section, we explore the cases when the optimal covariance $\bR^*$ is of full rank and obtain respective closed-form solutions. First, we consider a transmit power-limited regime, where the IPC is redundant and hence the TPC is active.

\begin{prop}
\label{prop.FR.TP}
Let $\bW_1>0$ and $P_T$ be bounded as follows:
\bal
\label{eq.prop.FR.TP}
m\gl_1(\bW_1^{-1}) &- tr(\bW_1^{-1}) < P_T \le \frac{m}{tr(\bW_2)} (P_I+tr(\bW_2\bW_1^{-1})) - tr(\bW_1^{-1})
\eal
then $\mu_2=0$, i.e. the IPC is redundant\footnote{i.e. can be ommited without affecting the capacity, which corresponds to $\mu_2=0$.}, $\bR^*$ is of full-rank and is given by:
\bal
\label{eq.prop.FR.TP.2}
\bR^* = \mu_1^{-1}\bI - \bW^{-1}_1
\eal
where $\mu^{-1}_1=m^{-1}(P_T+tr(\bW^{-1}_1))$. The capacity can be expressed as
\bal
\label{eq.prop.FR.TP.3}
C= m\log((P_T+tr(\bW^{-1}_1))/m) +\log|\bW_1|
\eal
\end{prop}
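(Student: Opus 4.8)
The plan is to specialize the general closed-form solution of Theorem~\ref{thm.R*} to the stated power range and show that the guess $\mu_2=0$ is consistent, i.e. that the resulting $\bR^*$ is feasible (in particular satisfies the IPC) and of full rank. First I would set $\mu_2=0$ in \eqref{eq.thm.R*.1}, so that $\bW_\mu=(\mu_1\bI)^{1/2}=\mu_1^{1/2}\bI$ and $\bW_\mu^{\dag}=\mu_1^{-1/2}\bI$. Then \eqref{eq.thm.R*.1} collapses to $\bR^*=\mu_1^{-1}(\bI-\mu_1\bW_1^{-1})_+=(\mu_1^{-1}\bI-\bW_1^{-1})_+$, which is exactly the water-filling form \eqref{eq.RWF}. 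Next I would argue that under the left inequality in \eqref{eq.prop.FR.TP} all eigenmodes of $\bW_1$ are active: the water level $\mu_1^{-1}$ determined by $\tr(\bR^*)=P_T$ exceeds $\gl_1(\bW_1^{-1})$ precisely when $P_T>m\gl_1(\bW_1^{-1})-\tr(\bW_1^{-1})$, because in the full-rank regime $\tr(\bR^*)=m\mu_1^{-1}-\tr(\bW_1^{-1})=P_T$ gives $\mu_1^{-1}=m^{-1}(P_T+\tr(\bW_1^{-1}))$, and requiring $\mu_1^{-1}>\gl_1(\bW_1^{-1})$ yields the stated lower bound on $P_T$. This simultaneously establishes the formula for $\mu_1$, the full-rank claim, and the expression \eqref{eq.prop.FR.TP.2}.

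Having pinned down $\bR^*$ and $\mu_1$, I would then verify the IPC, $\tr(\bW_2\bR^*)\le P_I$, which is the only thing left to check for feasibility (the TPC holds with equality by construction, and $\bR^*\ge 0$ is automatic from the $(\cdot)_+$ operation). Substituting $\bR^*=\mu_1^{-1}\bI-\bW_1^{-1}$ gives $\tr(\bW_2\bR^*)=\mu_1^{-1}\tr(\bW_2)-\tr(\bW_2\bW_1^{-1})$, and inserting $\mu_1^{-1}=m^{-1}(P_T+\tr(\bW_1^{-1}))$, the inequality $\tr(\bW_2\bR^*)\le P_I$ rearranges to exactly the right-hand inequality in \eqref{eq.prop.FR.TP}. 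Thus the two-sided bound on $P_T$ is precisely the condition for this candidate to be the optimal covariance: the lower bound guarantees full rank (so that the simple closed form is valid), and the upper bound guarantees the IPC is met with $\mu_2=0$. Since $(\bR^*,\mu_1,\mu_2=0)$ then satisfies all KKT conditions of the convex problem, it is optimal, and "redundant" follows because deleting the IPC leaves the same solution.

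Finally, the capacity formula \eqref{eq.prop.FR.TP.3} follows by direct substitution into $C(\bR^*)=\log|\bI+\bW_1\bR^*|$: with $\bR^*=\mu_1^{-1}\bI-\bW_1^{-1}$ one gets $\bI+\bW_1\bR^*=\mu_1^{-1}\bW_1$, hence $C=\log|\mu_1^{-1}\bW_1|=m\log\mu_1^{-1}+\log|\bW_1|=m\log((P_T+\tr(\bW_1^{-1}))/m)+\log|\bW_1|$. Alternatively one can read it off \eqref{eq.C} using $\gl_{ai}=\mu_1^{-1}\gl_i(\bW_1)$, all of which exceed $1$ in this regime. The main obstacle, such as it is, is purely bookkeeping: making sure the algebraic rearrangement of the IPC inequality matches the stated upper bound on $P_T$ verbatim, and confirming that "all eigenmodes active" is equivalent to the lower bound on $P_T$ — i.e. tracking the water-filling threshold carefully; there is no genuine conceptual difficulty once Theorem~\ref{thm.R*} is in hand.
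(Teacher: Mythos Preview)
Your proposal is correct and follows essentially the same route as the paper's proof: specialize Theorem~\ref{thm.R*} with $\mu_2=0$, use the lower bound on $P_T$ to ensure the water level $\mu_1^{-1}$ exceeds $\gl_1(\bW_1^{-1})$ (full rank), use the upper bound to verify $\tr(\bW_2\bR^*)\le P_I$, and conclude optimality (you via direct KKT verification, the paper via the equivalent observation that the TPC-only optimum is feasible and hence optimal under the joint constraints). The capacity computation is identical.
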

\begin{proof}
It follows from \eqref{eq.thm.R*.1} that, if $\bR^*$ is full rank, then so is $\bW_{\mu}$ and
\bal
%\label
(\bI-\bW_{\mu}\bW_1^{-1}\bW_{\mu})_+ = \bI-\bW_{\mu}\bW_1^{-1}\bW_{\mu} >0
\eal
(since it is of full rank) and hence
\bal
%\label
\bR^* = \bW_{\mu}^{-2} - \bW_1^{-1}
\eal
The full-rank condition $\bR^*>0$ is equivalent to
\bal\notag
\label{eq.FR.P.1}
\bR^*>0\ &\leftrightarrow \bW_{\mu}^{-2} > \bW_1^{-1}\ \leftrightarrow \bW_1 > \bW_{\mu}^2 = \mu_1\bI + \mu_2\bW_2\\
 &\leftrightarrow \bW_1 - \mu_2\bW_2 > \mu_1\bI \ \leftrightarrow \gl_m(\bW_1 - \mu_2\bW_2) > \mu_1
\eal
where we used the standard tools of matrix analysis \cite{Horn-85}\cite{Zhang-99}. When the IPC is redundant, $\mu_2=0$ from which \eqref{eq.prop.FR.TP.2} and
\bal
\label{eq.RF.P.2}
\mu_1^{-1}=m^{-1}(P_T+tr(\bW^{-1}_1))
\eal
follow. It remains to establish \eqref{eq.prop.FR.TP}. To this end, $\mu_2=0$ implies $\mu_1>0$ (active TPC) and $\gl_m(\bW_1) > \mu_1$ (from \eqref{eq.FR.P.1}) together with \eqref{eq.RF.P.2} implies 1st inequality in \eqref{eq.prop.FR.TP}. 2nd inequality in \eqref{eq.prop.FR.TP} ensures that the IPC is redundant: $\mu_2=0$ and $tr(\bW_2\bR^*) \le P_I$. Indeed, the capacity under the joint constraints (TPC+IPC) cannot exceed that under the TPC alone, for which the optimal covariance $\bR^*$ is as in \eqref{eq.prop.FR.TP.2}. However, under 2nd inequality in \eqref{eq.prop.FR.TP}, $\bR^*$  satisfies $tr(\bW_2\bR^*) \le P_I$ and hence $\bR^*$ is feasible under the joint constraints as well so that (i) both capacities are equal and (ii) $\bR^*$ in \eqref{eq.prop.FR.TP.2} is also optimal under the joint constraints, which corresponds to $\mu_2=0$ (so that the IPC can be removed without affecting the capacity).  \eqref{eq.prop.FR.TP.3} is obtained by using \eqref{eq.prop.FR.TP.2} in $C(\bR)$.
\end{proof}

Next, we consider an interference-limited regime, where the TPC is redundant and hence the IPC is active.

\begin{prop}
Let $\bW_1, \bW_2 >0$ and $P_I$ be bounded as follows:
\bal
\label{eq.prop.FR.IP}
m\gl_1(\bW_2 &\bW^{-1}_1) - tr(\bW_2\bW^{-1}_1) < P_I \le \frac{m}{tr(\bW_2^{-1})} (P_T+tr(\bW^{-1}_1)) - tr(\bW_2\bW^{-1}_1)
\eal
then $\mu_1=0$, i.e. the TPC is redundant, $\bR^*$ is of full-rank and is given by:
\bal
\label{eq.prop.FR.IP.2}
\bR^* = \mu_2^{-1}\bW_2^{-1} - \bW^{-1}_1
\eal
where $\mu_2^{-1}=m^{-1}(P_I+tr(\bW_2\bW^{-1}_1))$. The capacity can be expressed as
\bal
\label{eq.prop.FR.IP.3}
C= m\log((P_I&+tr(\bW_2\bW_1^{-1}))/m) +\log|\bW_1|-\log|\bW_2|
\eal
\end{prop}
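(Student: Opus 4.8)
The plan is to mirror the proof of Proposition \ref{prop.FR.TP}, exploiting the duality between the TPC and the IPC. First I would invoke the closed-form solution \eqref{eq.thm.R*.1} of Theorem \ref{thm.R*}: if $\bR^*$ is of full rank, then $\bW_{\mu}=(\mu_1\bI+\mu_2\bW_2)^{1/2}$ must be of full rank, the positive-part operator acts trivially, and \eqref{eq.thm.R*.1} collapses to $\bR^* = \bW_{\mu}^{-2}-\bW_1^{-1} = (\mu_1\bI+\mu_2\bW_2)^{-1}-\bW_1^{-1}$. Setting $\mu_1=0$ (TPC inactive) produces the candidate \eqref{eq.prop.FR.IP.2}, $\bR^* = \mu_2^{-1}\bW_2^{-1}-\bW_1^{-1}$.

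Next I would pin down $\mu_2$ and convert the constraints into the bounds \eqref{eq.prop.FR.IP}. Since $\mu_1=0$ forces the IPC to be active, $tr(\bW_2\bR^*)=P_I$; substituting \eqref{eq.prop.FR.IP.2} and using $tr(\bW_2\bW_2^{-1})=m$ yields $\mu_2^{-1}=m^{-1}(P_I+tr(\bW_2\bW_1^{-1}))$, as claimed. The full-rank requirement $\bR^*>0$ is equivalent to $\bW_1>\mu_2\bW_2$, hence (by congruence with $\bW_2^{-1/2}$) to $\gl_m(\bW_2^{-1/2}\bW_1\bW_2^{-1/2})>\mu_2$; using that $\bW_2^{-1/2}\bW_1\bW_2^{-1/2}$, $\bW_2^{-1}\bW_1$ and $\bW_1\bW_2^{-1}$ share the same (positive) spectrum, this reads $\mu_2^{-1}>\gl_1(\bW_2\bW_1^{-1})$. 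Combined with the expression for $\mu_2^{-1}$, this is precisely the first (strict) inequality in \eqref{eq.prop.FR.IP}. For the TPC to be genuinely redundant I also need $tr(\bR^*)\le P_T$; computing $tr(\bR^*)=\mu_2^{-1}tr(\bW_2^{-1})-tr(\bW_1^{-1})$ and inserting $\mu_2^{-1}$ turns this into the second inequality in \eqref{eq.prop.FR.IP}.

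To conclude optimality (and justify the choice $\mu_1=0$) I would argue by relaxation as in Proposition \ref{prop.FR.TP}: drop the TPC and consider the IPC-only problem. Its optimum in the full-rank regime is exactly \eqref{eq.prop.FR.IP.2} — this follows by specializing Theorem \ref{thm.R*} with $\mu_1=0$, or more transparently by the substitution $\bR'=\bW_2^{1/2}\bR\bW_2^{1/2}$, under which the IPC-only problem becomes the standard TPC problem with Gram matrix $\bW_2^{-1/2}\bW_1\bW_2^{-1/2}$ and power budget $P_I$, whose full-rank water-filling solution transforms back to \eqref{eq.prop.FR.IP.2} (with $tr((\bW_2^{-1/2}\bW_1\bW_2^{-1/2})^{-1})=tr(\bW_2\bW_1^{-1})$). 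Under the second inequality in \eqref{eq.prop.FR.IP} this $\bR^*$ satisfies $tr(\bR^*)\le P_T$, so it is feasible for the joint problem; since adding the TPC cannot increase the capacity, $\bR^*$ is optimal for the joint problem too, with the TPC inactive, i.e. $\mu_1=0$.

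Finally, the capacity formula \eqref{eq.prop.FR.IP.3} follows by direct substitution: $\bI+\bW_1\bR^* = \bI+\bW_1(\mu_2^{-1}\bW_2^{-1}-\bW_1^{-1}) = \mu_2^{-1}\bW_1\bW_2^{-1}$, so $C=\log|\mu_2^{-1}\bW_1\bW_2^{-1}| = m\log\mu_2^{-1}+\log|\bW_1|-\log|\bW_2|$, and plugging in $\mu_2^{-1}=m^{-1}(P_I+tr(\bW_2\bW_1^{-1}))$ gives \eqref{eq.prop.FR.IP.3}. I expect the only genuinely delicate point to be the spectral identity that converts $\bW_1>\mu_2\bW_2$ into a clean condition on $\gl_1(\bW_2\bW_1^{-1})$, together with the matching trace bookkeeping across the congruence, and making the relaxation/feasibility step airtight; everything else is routine once the template of Proposition \ref{prop.FR.TP} is in place.
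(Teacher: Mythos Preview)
Your proposal is correct and follows essentially the same approach as the paper: specialize Theorem \ref{thm.R*} with $\mu_1=0$ to get $\bR^*=\mu_2^{-1}\bW_2^{-1}-\bW_1^{-1}$, determine $\mu_2$ from the active IPC, translate $\bR^*>0$ into the first inequality of \eqref{eq.prop.FR.IP} via $\gl_m(\bW_2^{-1}\bW_1)>\mu_2$, and use the second inequality of \eqref{eq.prop.FR.IP} to ensure $tr(\bR^*)\le P_T$ so that the IPC-only optimizer is feasible (and hence optimal) for the joint problem. Your additional observation that the substitution $\bR'=\bW_2^{1/2}\bR\bW_2^{1/2}$ reduces the IPC-only problem to a standard TPC water-filling is a nice touch but not needed beyond what the paper's template already provides.
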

\begin{proof}
When the TPC is redundant, $\mu_1=0$ and \eqref{eq.prop.FR.IP.2} with
\bal
\mu_2^{-1}=m^{-1}(P_I+tr(\bW_2\bW^{-1}_1))
\eal
follow from \eqref{eq.thm.R*.1} in the same way as for Proposition \ref{prop.FR.TP}. 1st condition in \eqref{eq.prop.FR.IP} follows from $\bR^*>0$, which, using \eqref{eq.FR.P.1}, is equivalent to
\bal
\gl_m(\bW_2^{-1}\bW_1) > \mu_2= m(P_I+tr(\bW_2\bW^{-1}_1))^{-1}
\eal
2nd condition in \eqref{eq.prop.FR.IP} ensures that the TPC is redundant since the Tx power is sufficiently large: $\mu_1=0$ and $tr(\bR^*) \le P_T$.
\end{proof}

It is clear from \eqref{eq.prop.FR.TP.3} and \eqref{eq.prop.FR.IP.3} that the latter represents an interference-limited scenario while the former - a Tx power-limited one.

It follows from the proof of Proposition \ref{prop.FR.TP} that a general full-rank solution is given by
\bal
\label{eq.R*.general}
\bR^* = \bW_{\mu}^{-2} - \bW^{-1}_1
\eal
and the respective capacity is $C=\log|\bW_1|-2\log|\bW_{\mu}|$, which holds if $\gl_m(\bW_1-\mu_2\bW_2)>\mu_1$. If neither \eqref{eq.prop.FR.TP} nor \eqref{eq.prop.FR.IP} hold, then both power constraints are active: $\mu_1, \mu_2 >0$; closed-form solutions for $\mu_1, \mu_2$ in this case are not known and they have to be found numerically from \eqref{eq.thm.R*.2}. Note that $\bR^*$ in \eqref{eq.R*.general} generalizes the respective WF full-rank solution $\bR_{WF} = \mu_1^{-1}\bI - \bW_1^{-1}$ (no IPC) to the interference-constrained environment, where $\bW_{\mu}^{-2}$ takes the role of $\mu_1^{-1}\bI$.

While the optimal covariance in \eqref{eq.prop.FR.TP.2} is the same as the standard WF over the eigenmodes of $\bW_1$ (subject to the power constraint only), its range of validity is different: while the standard WF solution is of full rank under only the lower bound in \eqref{eq.prop.FR.TP} (this can be obtained by setting $P_I=\infty$ so that only the lower bound remains), i.e. the WF optimal covariance is of full rank for \textit{all} sufficiently high power/SNR, the interference constraint also imposes the upper bound in \eqref{eq.prop.FR.TP} and hence its full-rank solution will not hold if $P_T$ is too high, a remarkable difference to the standard WF.

Next, we explore the case where $\bW_2$ is of rank 1. This models the case when a primary user has a single-antenna receiver or when its channel is a keyhole channel, see e.g. \cite{Chizhik-02}\cite{Levin-08}.

\begin{prop}
\label{prop.W2.r1}
Let $\bW_1$ be of full rank and $\bW_2$ be of rank-1, so that $\bW_2=\gl_2\bu_2\bu_2^+$, where $\gl_2>0$ and $\bu_2$ are its active eigenvalue and eigenvector. If
\bal
\label{eq.prop.W2-r1.1}
&P_I \ge P_{I,th} =  m^{-1}\gl_2(P_T + tr(\bW^{-1}_1)) -\gl_2\bu_2^+\bW^{-1}_1\bu_2\\
\label{eq.prop.W2-r1.1a}
&P_T> m\gl_1(\bW^{-1}_1) -tr(\bW^{-1}_1)
\eal
then the IPC is redundant, the optimal covariance is of full rank and is given by the standard WF solution,
\bal
\label{eq.prop.W2-r1.R*1}
\bR^* = \bR^*_{WF} = \mu_{WF}^{-1}\bI - \bW^{-1}_1
\eal
where $\mu_{WF}^{-1} = m^{-1}(P_T+tr(\bW^{-1}_1))$.

If
\bal
\label{eq.prop.W2-r1.3}
&\gl_2\gl_1(\bW^{-1}_1)-\gl_2\bu_2^+\bW^{-1}_1\bu_2 < P_I < P_{I,th},\\
\label{eq.prop.W2-r1.4}
 & P_T > m\gl_2^{-1}P_I + m\bu_2^+\bW^{-1}_1\bu_2 -tr(\bW^{-1}_1)
\eal
then the IPC and TPC are active, the optimal covariance is of full rank and is given by
\bal
\label{eq.prop.W2-r1.R*2}
\bR^* = \mu_1^{-1}\bI - \bW^{-1}_1 -\alpha\bu_2\bu_2^+
\eal
where $\alpha=\mu_1^{-1}-(\mu_1+\gl_2\mu_2)^{-1}$, and $\mu_1,\ \mu_2>0$ are found from \eqref{eq.thm.R*.2}:
\bal%\notag
\label{eq.prop.W2-r1.5}
\begin{aligned}
\mu_1 &= (P_T-\gl_2^{-1} P_I -\bu_2^+\bW^{-1}_1\bu_2+tr(\bW^{-1}_1))^{-1} (m-1)\\
\mu_2 &= (P_I +\gl_2\bu_2^+\bW^{-1}_1\bu_2)^{-1}-\gl_2^{-1}\mu_1
\end{aligned}
\eal
\end{prop}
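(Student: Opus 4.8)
The plan is to apply Theorem~\ref{thm.R*} together with the full-rank reduction from the proof of Proposition~\ref{prop.FR.TP}, specialized to the rank-one matrix $\bW_2=\gl_2\bu_2\bu_2^+$, and then to verify the KKT conditions directly. The first claim (\eqref{eq.prop.W2-r1.1}--\eqref{eq.prop.W2-r1.R*1}) is in fact Proposition~\ref{prop.FR.TP} specialized to rank-one $\bW_2$: there $\tr(\bW_2)=\gl_2$ and $\tr(\bW_2\bW_1^{-1})=\gl_2\bu_2^+\bW_1^{-1}\bu_2$, so the upper bound on $P_T$ in \eqref{eq.prop.FR.TP} rearranges to $P_I\ge P_{I,th}$, its lower bound is \eqref{eq.prop.W2-r1.1a}, and $\mu_1^{-1}=\mu_{WF}^{-1}=m^{-1}(P_T+\tr(\bW_1^{-1}))$. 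To keep things self-contained I would also give the short direct argument: under \eqref{eq.prop.W2-r1.1a} the water level of $\bR^*_{WF}=\mu_{WF}^{-1}\bI-\bW_1^{-1}$ exceeds $\gl_1(\bW_1^{-1})$, so $\bR^*_{WF}$ is full rank; and $\tr(\bW_2\bR^*_{WF})=\gl_2(\mu_{WF}^{-1}-\bu_2^+\bW_1^{-1}\bu_2)=P_{I,th}\le P_I$, so $\bR^*_{WF}$ is feasible for the joint problem; since the joint-constrained capacity cannot exceed the TPC-only one, $\bR^*_{WF}$ is optimal and the IPC is redundant ($\mu_2=0$).

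For the second claim I would first record the rank-one reduction: $\bW_{\mu}^2=\mu_1\bI+\mu_2\gl_2\bu_2\bu_2^+$ has eigenvalues $\mu_1+\gl_2\mu_2$ (along $\bu_2$) and $\mu_1$ ($(m-1)$-fold), and Sherman--Morrison gives $\bW_{\mu}^{-2}=\mu_1^{-1}\bI-\alpha\bu_2\bu_2^+$ with $\alpha=\gl_2\mu_2/[\mu_1(\mu_1+\gl_2\mu_2)]=\mu_1^{-1}-(\mu_1+\gl_2\mu_2)^{-1}$. By the reduction in the proof of Proposition~\ref{prop.FR.TP}, when $\bW_1>\bW_{\mu}^2$ one has $(\bI-\bW_{\mu}\bW_1^{-1}\bW_{\mu})_+=\bI-\bW_{\mu}\bW_1^{-1}\bW_{\mu}$, so \eqref{eq.thm.R*.1} collapses to $\bR^*=\bW_{\mu}^{-2}-\bW_1^{-1}=\mu_1^{-1}\bI-\alpha\bu_2\bu_2^+-\bW_1^{-1}$, i.e.\ \eqref{eq.prop.W2-r1.R*2}. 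Assuming both constraints active and imposing $\tr(\bR^*)=P_T$ and $\tr(\bW_2\bR^*)=\gl_2\bu_2^+\bR^*\bu_2=P_I$ (using $\mu_1^{-1}-\alpha=(\mu_1+\gl_2\mu_2)^{-1}$), the IPC equation gives $\mu_1+\gl_2\mu_2=(\gl_2^{-1}P_I+\bu_2^+\bW_1^{-1}\bu_2)^{-1}$; feeding this into the TPC equation gives $(m-1)\mu_1^{-1}=P_T+\tr(\bW_1^{-1})-\gl_2^{-1}P_I-\bu_2^+\bW_1^{-1}\bu_2$, the first line of \eqref{eq.prop.W2-r1.5}, and $\gl_2\mu_2=(\mu_1+\gl_2\mu_2)-\mu_1$ gives the second; in particular $m\ge2$ is required here.

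It then remains to check that this $(\mu_1,\mu_2,\bR^*)$ is a bona fide KKT point, which --- the problem being convex, with strictly concave objective since $\bW_1>0$ --- identifies $\bR^*$ as the unique capacity-achieving covariance. Complementary slackness and feasibility hold by construction, so only $\mu_1>0$, $\mu_2>0$ and $\bR^*>0$ need checking. Since $m\gl_2^{-1}P_I+m\bu_2^+\bW_1^{-1}\bu_2\ge\gl_2^{-1}P_I+\bu_2^+\bW_1^{-1}\bu_2$, condition \eqref{eq.prop.W2-r1.4} forces the right-hand side of the first line of \eqref{eq.prop.W2-r1.5} positive, so $\mu_1>0$. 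Next, $\mu_2>0\iff\mu_1<\gl_2(P_I+\gl_2\bu_2^+\bW_1^{-1}\bu_2)^{-1}$, and substituting the explicit $\mu_1$ and clearing denominators turns this into exactly \eqref{eq.prop.W2-r1.4} (the same computation also yields $P_I<P_{I,th}$, consistent with \eqref{eq.prop.W2-r1.3}). Finally $\bR^*>0\iff\bW_1>\bW_{\mu}^2$, and since $\bW_1\ge\gl_m(\bW_1)\bI$ while $\bW_{\mu}^2\le(\mu_1+\gl_2\mu_2)\bI$ (its largest eigenvalue), it suffices that $\mu_1+\gl_2\mu_2<\gl_m(\bW_1)$; using $\mu_1+\gl_2\mu_2=(\gl_2^{-1}P_I+\bu_2^+\bW_1^{-1}\bu_2)^{-1}$ and $\gl_m(\bW_1)=\gl_1(\bW_1^{-1})^{-1}$, this is precisely the lower bound in \eqref{eq.prop.W2-r1.3}. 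I expect this last step --- reducing the three positivity requirements to the stated bounds --- to be the main obstacle; it goes through because $\bW_{\mu}^2$ has only two distinct eigenvalues, so the matrix inequality $\bW_1>\bW_{\mu}^2$ reduces to the scalar comparison $\mu_1+\gl_2\mu_2<\gl_m(\bW_1)$, and because of the clean closed form for $\mu_1+\gl_2\mu_2$.
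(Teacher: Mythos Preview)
Your proof is correct and follows essentially the same route as the paper's own argument: both use the Sherman--Morrison/matrix-inversion lemma to compute $\bW_\mu^{-2}$, invoke the full-rank reduction $\bR^*=\bW_\mu^{-2}-\bW_1^{-1}$ from \eqref{eq.R*.general}, solve the two active constraint equations for $\mu_1,\mu_2$, and then verify positivity via a sufficient scalar condition that collapses to the lower bound in \eqref{eq.prop.W2-r1.3}. The only cosmetic difference is in that last step: the paper bounds $\gl_1(\bW_1^{-1}+\alpha\bu_2\bu_2^+)\le\gl_1(\bW_1^{-1})+\alpha$ via Weyl's inequality, whereas you use the equivalent sandwich $\bW_\mu^2\le(\mu_1+\gl_2\mu_2)\bI<\gl_m(\bW_1)\bI\le\bW_1$; both yield the same scalar inequality $(\mu_1+\gl_2\mu_2)^{-1}>\gl_1(\bW_1^{-1})$. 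Your explicit remark that $m\ge 2$ is needed (so that the formula for $\mu_1$ gives a positive value) and your separate check that \eqref{eq.prop.W2-r1.4} is equivalent to both $\mu_2>0$ and $P_I<P_{I,th}$ are nice clarifications not spelled out in the paper.
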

\begin{proof}
See Appendix.
\end{proof}

Note that the 1st two terms in \eqref{eq.prop.W2-r1.R*2} represent the standard WF solution while the last term is a correction due to the IPC, which is reminiscent of a partial null forming in an adaptive antenna array, see e.g. \cite{VanTrees-02}.

%\newpage
%=====================================================================================
\section{Inactive Constrains}

It is straightforward to see both constraints cannot be inactive at the same time (since the capacity is a strictly increasing function of the Tx power without the IPC). In this section, we explore the scenarios when one of the two constraints is redundant\footnote{"inactive" implies "redundant" but the converse is not true: for example, inactive TPC means $tr\bR^* < P_T$ and this implies $\mu_1=0$ (from complementary slackness) so that it is also redundant (can be omitted without affecting the capacity), but $\mu_1=0$ does not imply  $tr\bR^* < P_T$ since $tr\bR^* = P_T$ is also possible in some cases.}.

Note that, unlike the standard WF where the TPC is always active, it can be inactive under the IPC, which is ultimately due to the interplay of interference and power constraints. The following proposition explores this in some details.

\begin{prop}
\label{prop.TPC.inact}
The TPC is redundant only if $\mathcal{N}(\bW_2) \in \mathcal{N}(\bW_1)$ and is active otherwise. In particular, it is active (for any $P_T$ and $P_I$) if $r(\bW_1)> r(\bW_2)$, e.g. if $\bW_1$ is full-rank and $\bW_2$ is rank-deficient.
\end{prop}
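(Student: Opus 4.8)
The plan is to prove the contrapositive of the first assertion: if $\mathcal{N}(\bW_2) \notin \mathcal{N}(\bW_1)$, then the TPC is active. Recall that "TPC redundant" means $\mu_1 = 0$ is possible, i.e. there is an optimal $\bR^*$ attained with $\mu_1 = 0$; from complementary slackness this does not force $tr(\bR^*) < P_T$, but "active" is intended here as the negation of "redundant," so I must rule out the existence of any optimal solution with $\mu_1 = 0$. The key structural fact comes from Theorem~\ref{thm.R*}: with $\mu_1 = 0$ we have $\bW_{\mu} = (\mu_2 \bW_2)^{1/2}$, which has the same null space as $\bW_2$, and hence $\mathcal{R}(\bW_{\mu}^{\dag}) = \mathcal{R}(\bW_{\mu}) = \mathcal{R}(\bW_2)$. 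Consequently the optimal covariance $\bR^* = \bW_{\mu}^{\dag}(\bI - \bW_{\mu}\bW_1^{-1}\bW_{\mu})_+\bW_{\mu}^{\dag}$ satisfies $\mathcal{R}(\bR^*) \subseteq \mathcal{R}(\bW_2)$, equivalently $\mathcal{N}(\bW_2) \subseteq \mathcal{N}(\bR^*)$.

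First I would make this range restriction rigorous: any vector $\bu \in \mathcal{N}(\bW_2)$ lies in $\mathcal{N}(\bW_{\mu}^{\dag})$ (since the pseudo-inverse of a Hermitian matrix acts as zero on its null space, which coincides with that of $\bW_{\mu}$), hence $\bR^*\bu = 0$. So every $\mu_1 = 0$ solution is supported on $\mathcal{R}(\bW_2)$. Next I would exhibit a strictly better feasible point, contradicting optimality. By hypothesis there is a unit vector $\bu$ with $\bW_2 \bu = 0$ but $\bW_1 \bu \neq 0$, so $\bu^+ \bW_1 \bu > 0$. Consider the perturbation $\bR_\epsilon = \bR^* + \epsilon\, \bu\bu^+$ for small $\epsilon > 0$. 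Since $\bW_2\bu = 0$, we have $tr(\bW_2 \bR_\epsilon) = tr(\bW_2 \bR^*) \le P_I$, so the IPC still holds. For the TPC, $tr(\bR_\epsilon) = tr(\bR^*) + \epsilon$; here I use the standing fact (from the remark preceding Proposition~\ref{prop.TPC.inact} and from the strict monotonicity of $C$ in transmit power when the IPC is absent) that a $\mu_1 = 0$ optimal point cannot already saturate the TPC in a way that blocks improvement — more directly, if $tr(\bR^*) < P_T$ then small $\epsilon$ keeps $\bR_\epsilon$ feasible, and if $tr(\bR^*) = P_T$ then, because $\bR^*$ is supported on $\mathcal{R}(\bW_2)$ which is a proper subspace, one can instead move power along $\bu$ while shrinking $\bR^*$ slightly on its own support, keeping the trace fixed and the IPC slack (it only decreases), again yielding a feasible perturbation.

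The remaining step is to check that the objective strictly increases. Using concavity of $\log|\bI + \bW_1 \bR|$ and the gradient formula, the directional derivative of $C$ at $\bR^*$ in the direction $\bu\bu^+$ is $\bu^+(\bI + \bW_1\bR^*)^{-1}\bW_1\bu$ (up to reordering inside the trace), and since $\bW_1\bu \neq 0$ and $(\bI+\bW_1\bR^*)^{-1} \succ 0$, this is strictly positive; hence $C(\bR_\epsilon) > C(\bR^*)$ for small $\epsilon > 0$, contradicting optimality of $\bR^*$. Therefore no $\mu_1 = 0$ optimum exists when $\mathcal{N}(\bW_2) \notin \mathcal{N}(\bW_1)$, i.e. the TPC is active. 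The special case $r(\bW_1) > r(\bW_2)$ follows immediately: $\mathcal{N}(\bW_2) \subseteq \mathcal{N}(\bW_1)$ would force $\dim\mathcal{N}(\bW_2) \le \dim\mathcal{N}(\bW_1)$, i.e. $m - r(\bW_2) \le m - r(\bW_1)$, contradicting $r(\bW_1) > r(\bW_2)$; hence $\mathcal{N}(\bW_2) \notin \mathcal{N}(\bW_1)$ and the TPC is active for all $P_T, P_I$.

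I expect the main obstacle to be the trace-saturated subcase $tr(\bR^*) = P_T$ with $\mu_1 = 0$: one must argue carefully that power can be redistributed onto the null direction $\bu$ of $\bW_2$ without violating feasibility, using that $\bR^*$ lives on the proper subspace $\mathcal{R}(\bW_2)$ so there is "room" to both free up trace and gain rate, and that the IPC is insensitive to perturbations along $\bu$. Handling this cleanly — rather than the easy slack case — is where the interplay between the two constraints really enters, and it is worth isolating as a short lemma on feasible perturbations supported on $\mathcal{N}(\bW_2)$.
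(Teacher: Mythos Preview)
Your approach is correct in outline but takes a different and more laborious route than the paper. The paper's proof is essentially two lines: it invokes the KKT stationarity condition \eqref{eq.T.KKT} with $\mu_1=0$, which reads $\bQ_1(\bI+\bQ_1\bR\bQ_1)^{-1}\bQ_1+\bM=\mu_2\bW_2$ where $\bQ_1=\bW_1^{1/2}$; applying this to any $\bx\in\mathcal{N}(\bW_2)$ gives $\bx^+\bQ_1(\bI+\bQ_1\bR\bQ_1)^{-1}\bQ_1\bx+\bx^+\bM\bx=0$, and since both summands are nonnegative, $\bQ_1\bx=0$, i.e.\ $\bW_1\bx=0$. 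This yields $\mathcal{N}(\bW_2)\subseteq\mathcal{N}(\bW_1)$ directly (this is \eqref{eq.thm.Inac.TPC.3}, established inside the proof of Theorem~\ref{thm.R*}). No perturbation, no case split on whether the trace is saturated. Your variational argument instead uses the closed form from Theorem~\ref{thm.R*} to localize $\bR^*$ on $\mathcal{R}(\bW_2)$ and then improves it along $\mathcal{N}(\bW_2)$; this is a legitimate alternative and has the virtue of being self-contained at the level of feasibility and rate, but it forces you into the $tr(\bR^*)=P_T$ subcase that you correctly flag as the delicate part.

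Two small technical points worth tightening if you pursue your route. First, your positivity claim for the directional derivative is not quite justified as written: $(\bI+\bW_1\bR^*)^{-1}\bW_1$ is not Hermitian, so ``$(\bI+\bW_1\bR^*)^{-1}\succ 0$ and $\bW_1\bu\neq 0$'' does not immediately give $\bu^+(\bI+\bW_1\bR^*)^{-1}\bW_1\bu>0$. The fix is the identity $(\bI+\bW_1\bR^*)^{-1}\bW_1=\bQ_1(\bI+\bQ_1\bR^*\bQ_1)^{-1}\bQ_1$, which makes the quadratic form $(\bQ_1\bu)^+(\bI+\bQ_1\bR^*\bQ_1)^{-1}(\bQ_1\bu)>0$. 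Second, for the trace-saturated case, the cleanest redistribution is $\bR_\epsilon=(1-\delta)\bR^*+\epsilon\,\bu\bu^+$ with $\epsilon=\delta\,tr(\bR^*)$: this keeps $tr$ fixed, strictly decreases the IPC term, and has strictly positive directional rate gain by the same quadratic-form identity combined with concavity. The rank argument for $r(\bW_1)>r(\bW_2)$ matches the paper exactly.
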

\begin{proof}
Use \eqref{eq.thm.Inac.TPC.3} and note that this condition is necessary for the TPC to be redundant (since the KKT conditions are necessary for optimality and $\mu_1=0$ is also necessary for the TPC to be redundant). Now, if $r(\bW_1)> r(\bW_2)$, then
\bal
\dim(\mathcal{N}(\bW_2)) =m-r(\bW_2) > m-r(\bW_1) =\dim(\mathcal{N}(\bW_1))
\eal
where $\dim(\mathcal{N})$ is the dimensionality of $\mathcal{N}$, and hence $\mathcal{N}(\bW_2) \in \mathcal{N}(\bW_1)$ is impossible so that the TPC is active.
\end{proof}

If $\bW_2$ is full-rank, a more specific result can be established.

\begin{prop}
\label{prop.IPC}
If $\bW_2>0$, then the TPC is redundant if and only if $P_T \ge tr(\bR^*)$, where the optimal covariance $\bR^*$ is as follows
\bal
\label{eq.R*.IPC}
\bR^* = \bW_2^{-\frac{1}{2}}(\mu_2^{-1}\bI- \bW_2^{\frac{1}{2}}\bW_1^{-1}\bW_2^{\frac{1}{2}})_+\bW_2^{-\frac{1}{2}},
\eal
and $\mu_2>0$ is
\bal
\label{eq.mu2.IPC}
\mu_2^{-1}= \frac{1}{r_+}P_I+ \frac{1}{r_+}\sum_{i=1}^{r_+} \gl_{bi}^{-1},
\eal
where $\gl_{bi}=\gl_i(\bW_2^{-1}\bW_1)$ and $r_+=r(\bR^*)$ is the rank of the optimal covariance matrix, which can be found as the largest solution of the following inequality:
\bal
\label{eq.PI.IPC}
P_I> \sum_{i=1}^{r_+} (\gl_{br_+}^{-1}- \gl_{bi}^{-1})_+
\eal

In particular, this holds if
\bal
\label{eq.PT.IPC}
P_T \ge \gl_m^{-1}(\bW_2)P_I
\eal
\end{prop}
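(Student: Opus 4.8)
The plan is to observe that, since $\bW_2>0$, the interference constraint can be absorbed by a congruence transformation, reducing the IPC-only problem to a standard water-filling; from that one reads off $\bR^*$, $\mu_2$, $r_+$, and the redundancy criterion. Concretely, I would first specialize Theorem~\ref{thm.R*} to $\mu_1=0$: then $\bW_{\mu}=(\mu_2\bW_2)^{1/2}$ is full rank, so $\bW_{\mu}^{\dag}=\bW_{\mu}^{-1}=\mu_2^{-1/2}\bW_2^{-1/2}$, and plugging into \eqref{eq.thm.R*.1} while pulling the positive scalar $\mu_2$ through $(\cdot)_+$ gives exactly \eqref{eq.R*.IPC}. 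Equivalently --- and this is the route I would actually carry out --- the change of variables $\bR=\bW_2^{-1/2}\bQ\bW_2^{-1/2}$ is a bijection on the positive semidefinite cone that turns $\max_{\bR\ge0,\ \tr(\bW_2\bR)\le P_I}\log|\bI+\bW_1\bR|$ into the classical total-power problem $\max_{\bQ\ge0,\ \tr\bQ\le P_I}\log|\bI+\bA\bQ|$ with $\bA=\bW_2^{-1/2}\bW_1\bW_2^{-1/2}$, whose maximizer is the water-filling $\bQ^*=(\mu_2^{-1}\bI-\bW_2^{1/2}\bW_1^{-1}\bW_2^{1/2})_+$ (read with the convention of \eqref{eq.RWF} if $\bW_1$ is singular); mapping back yields \eqref{eq.R*.IPC}.

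Next I would pin down $\mu_2$ and $r_+$ from the IPC, which is active at the optimum because $C(\bR)$ strictly increases along rays (assuming $\bW_1\neq0$). By cyclicity of the trace, $\tr(\bW_2\bR^*)=\tr(\bQ^*)=\sum_i(\mu_2^{-1}-\nu_i)_+$, where the $\nu_i$ are the eigenvalues of $\bW_2^{1/2}\bW_1^{-1}\bW_2^{1/2}$; since this matrix is similar to $\bW_1^{-1}\bW_2=(\bW_2^{-1}\bW_1)^{-1}$, the $\nu_i$ are the reciprocals $\gl_{bi}^{-1}$ of the eigenvalues of $\bW_2^{-1}\bW_1$. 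Taking the sum over the $r_+$ largest $\gl_{bi}$ (equivalently the $r_+$ smallest $\nu_i$) and equating it to $P_I$ gives \eqref{eq.mu2.IPC}, and the condition that the $r_+$-th mode be above water while the $(r_++1)$-th is not is the standard monotone water-filling test, which rearranges into \eqref{eq.PI.IPC} with $r_+$ its largest solution. I expect this eigenvalue-ordering bookkeeping --- the reciprocal flip $\nu_i\leftrightarrow\gl_{bi}^{-1}$ together with the rank-selection step --- to be the part that most needs care; it is routine but easy to slip on.

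For the equivalence itself I would mimic the proof of Proposition~\ref{prop.FR.TP}: the IPC-only maximizer is unique (after the substitution, any component of $\bQ$ outside $\mathcal{R}(\bA)$ merely wastes trace, and on $\mathcal{R}(\bA)$ the objective is strictly concave), so $\bR^*$ in \eqref{eq.R*.IPC} is \emph{the} optimal covariance under the IPC alone, and the joint TPC+IPC capacity never exceeds the IPC-only capacity. If $P_T\ge\tr(\bR^*)$ then $\bR^*$ is joint-feasible, so the two capacities coincide, $\bR^*$ is joint-optimal with $\mu_1=0$, and the TPC is redundant; conversely, if $P_T<\tr(\bR^*)$ then --- the joint feasible set being compact and $C$ continuous --- the joint capacity is attained, yet any joint-optimal $\bR$ would also be IPC-only-optimal, hence equal to $\bR^*$, hence violate $\tr(\bR)\le P_T$, a contradiction, so the joint capacity is strictly smaller and the TPC is not redundant. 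Finally, the sufficient condition \eqref{eq.PT.IPC} follows from the bound $\tr(\bR^*)=\tr(\bW_2^{-1}\bQ^*)\le\gl_1(\bW_2^{-1})\,\tr(\bQ^*)=\gl_m^{-1}(\bW_2)\,\tr(\bW_2\bR^*)=\gl_m^{-1}(\bW_2)\,P_I$, so that $P_T\ge\gl_m^{-1}(\bW_2)P_I$ forces $P_T\ge\tr(\bR^*)$ and hence TPC redundancy.
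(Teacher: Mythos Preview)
Your proposal is correct and follows essentially the same route as the paper: specialize Theorem~\ref{thm.R*} to $\mu_1=0$ (your congruence change of variables $\bR=\bW_2^{-1/2}\bQ\bW_2^{-1/2}$ is exactly the whitening step in that proof), read off \eqref{eq.R*.IPC}, recover $\mu_2$ and $r_+$ from the active IPC $\tr(\bW_2\bR^*)=P_I$, and bound $\tr(\bR^*)\le\gl_m^{-1}(\bW_2)P_I$ via $\bW_2\ge\gl_m(\bW_2)\bI$ for \eqref{eq.PT.IPC}. You are in fact more careful than the paper on the ``only if'' direction, supplying the uniqueness-of-$\bR^*$ argument that the paper leaves implicit.
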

\begin{proof}
Let $\bR^*$ be as in \eqref{eq.R*.IPC}. Then, $C\le C(\bR^*)$, since $C(\bR^*)$ is the capacity under the IPC only (this follows since \eqref{eq.R*.IPC} is a special case of \eqref{eq.thm.R*.1} with $\mu_1=0$). On the other hand, if $P_T \ge tr(\bR^*)$, then $\bR^*$ is feasible under the joint (IPC+TPC) constraints and hence $C\ge C(\bR^*)$, which proves the equality $C=C(\bR^*)$ and hence $\bR^*$ is optimal. \eqref{eq.mu2.IPC} follows from $tr(\bW_2\bR^*)=P_I$:
\bal
%\label{eq.R*.IPC}
P_I =tr(\bW_2\bR^*) = tr(\mu_2^{-1}\bI- \bW_2^{\frac{1}{2}}\bW_1^{-1}\bW_2^{\frac{1}{2}})_+ = \sum_{i=1}^{r_+} (\mu_2^{-1}- \gl_{bi}^{-1})
\eal
\eqref{eq.PI.IPC} ensures, from \eqref{eq.R*.IPC}, that the rank of $\bR^*$ is $r_+$. Further note that \eqref{eq.PT.IPC} implies $P_T \ge tr(\bR^*)$:
\bal
%\label{eq.PT.IPC}
P_T \ge \gl_m^{-1}(\bW_2)P_I
    =\gl_m^{-1}(\bW_2)tr(\bW_2\bR^*) \ge tr(\bR^*)
\eal
since $\bW_2\ge \gl_m(\bW_2)\bI$, but the converse is not true.
\end{proof}

Note that Proposition \ref{prop.IPC} gives a closed-form solution (including dual variables) for an optimal signaling strategy in the interference-limited regime (when the TPC is redundant and hence the IPC is active). The following proposition gives a sufficient and necessary condition for the IPC to be redundant (and hence the TPC is automatically active).

\begin{prop}
\label{prop.IPC.inact}
The IPC is redundant and hence the standard WF solution is optimal, $\bR^*=\bR_{WF}$, if and only if
\bal
\label{eq.prop.IPC.inact.1}
P_I \ge tr(\bW_2\bR_{WF})
\eal
in which case the TPC is active: $tr(\bR^*)=P_T$. In particular, this holds if
\bal
\label{eq.prop.IPC.inact.2}
P_I \ge \gl_1(\bW_2)P_T
\eal
\end{prop}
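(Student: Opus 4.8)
The plan is to prove both implications by a direct feasibility/optimality comparison between the joint (TPC+IPC) problem and the TPC-only problem, whose solution $\bR_{WF}$ in \eqref{eq.RWF} is already known, and then to read off the dual variables from the closed form of Theorem \ref{thm.R*}. The ``if'' direction is the easy half: assuming \eqref{eq.prop.IPC.inact.1}, note that the standard WF covariance $\bR_{WF}$ satisfies $\tr(\bR_{WF})=P_T$ (its water level $\mu$ is chosen precisely to meet the TPC) and, by \eqref{eq.prop.IPC.inact.1}, $\tr(\bW_2\bR_{WF})\le P_I$, so $\bR_{WF}$ is feasible for the joint problem and hence $C\ge C(\bR_{WF})$. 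Conversely, dropping the IPC only enlarges the feasible set, so $C\le C_{TPC}=C(\bR_{WF})$. Therefore $C=C(\bR_{WF})$, i.e. the IPC can be omitted without affecting the capacity, $\bR^*=\bR_{WF}$ is optimal, and since $\tr(\bR_{WF})=P_T$ the TPC is active. Equivalently, one verifies that $\bR_{WF}$ together with $(\mu_1,\mu_2)=(\mu,0)$ satisfies every KKT condition, including the complementary-slackness relations \eqref{eq.thm.R*.2}, so that $\mu_2=0$, which is by definition a redundant IPC.

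For the ``only if'' direction I would work directly with \eqref{eq.thm.R*.1}. If the IPC is redundant then $\mu_2=0$, and since $\mu_1,\mu_2$ cannot both vanish we have $\mu_1>0$; then $\bW_{\mu}=\mu_1^{1/2}\bI$, $\bW_{\mu}^{\dag}=\mu_1^{-1/2}\bI$, and \eqref{eq.thm.R*.1} collapses to $\bR^*=\mu_1^{-1}(\bI-\mu_1\bW_1^{-1})_+$, which, by the definition \eqref{eq.R+} of $(\cdot)_+$, is exactly $\bR_{WF}$ with water level $\mu_1$. Complementary slackness with $\mu_1>0$ forces $\tr(\bR^*)=P_T$, so $\mu_1$ is indeed the WF level associated with $P_T$ and $\bR^*=\bR_{WF}$ in the usual sense; feasibility then requires $\tr(\bW_2\bR^*)\le P_I$, which is precisely \eqref{eq.prop.IPC.inact.1}. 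Carrying this out through the KKT formula, rather than via a concavity/uniqueness argument, is what makes the step clean when $\bW_1$ is singular.

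Finally, for the sufficient condition \eqref{eq.prop.IPC.inact.2}, I would bound $\tr(\bW_2\bR_{WF})=\tr(\bR_{WF}^{1/2}\bW_2\bR_{WF}^{1/2})\le \gl_1(\bW_2)\tr(\bR_{WF})=\gl_1(\bW_2)P_T$, using $\bW_2\le\gl_1(\bW_2)\bI$ together with congruence by $\bR_{WF}^{1/2}\ge 0$ and monotonicity of the trace; hence \eqref{eq.prop.IPC.inact.2} implies \eqref{eq.prop.IPC.inact.1} and the main claim applies. I do not expect a genuine obstacle here: each direction is essentially a one-line feasibility check, and the only delicate point is the one already noted above, namely extracting $\bR^*=\bR_{WF}$ in the ``only if'' direction without assuming the TPC-only maximizer is unique.
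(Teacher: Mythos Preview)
Your proposal is correct and, for the ``if'' direction and for the sufficient condition \eqref{eq.prop.IPC.inact.2}, it is essentially identical to the paper's argument: the paper also observes $C\le C(\bR_{WF})$ by relaxing the IPC, checks that $\bR_{WF}$ is feasible under \eqref{eq.prop.IPC.inact.1}, concludes $C=C(\bR_{WF})$, and then uses $\bW_2\le\gl_1(\bW_2)\bI$ to get \eqref{eq.prop.IPC.inact.2}$\Rightarrow$\eqref{eq.prop.IPC.inact.1}.

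The one place you diverge is the ``only if'' direction. The paper's proof does not spell this out at all; implicitly it is immediate from the definition of ``redundant'' (if the IPC can be dropped without changing the capacity, then the TPC-only optimizer $\bR_{WF}$ is optimal for the joint problem and in particular feasible, hence $\tr(\bW_2\bR_{WF})\le P_I$). You instead route through Theorem~\ref{thm.R*}: set $\mu_2=0$, force $\mu_1>0$, collapse $\bW_\mu$ to a scalar, and read off $\bR^*=\bR_{WF}$ and $\tr(\bR^*)=P_T$ from complementary slackness. This buys you an explicit identification of the dual variables $(\mu_1,\mu_2)=(\mu,0)$ and avoids any appeal to uniqueness of the TPC-only maximizer, at the small cost of invoking the heavier machinery of Theorem~\ref{thm.R*}. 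Either route is sound; yours is the more careful of the two.
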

\begin{proof}
Note that, under the joint (IPC+TPC) constraint, $C\le C(\bR_{WF})$ (since $C(\bR_{WF})$ is attained by relaxing the IPC and retaining the TPC only, which cannot decrease the optimum). On the other hand, under the stated conditions, $\bR_{WF}$ is feasible under the joint constraints and hence the upper bound is achieved, $C=C(\bR_{WF})$ and $\bR_{WF}$ is optimal. It is straightforward to see that \eqref{eq.prop.IPC.inact.2} implies \eqref{eq.prop.IPC.inact.1}, since $\bW_2\le \gl_1(\bW_2)\bI$ (note that \eqref{eq.prop.IPC.inact.2} is sufficient but not necessary).
\end{proof}

While Proposition \ref{prop.IPC.inact} gives sufficient and necessary conditions for the IPC to be redundant, they depend on $P_T$ and $P_I$. The next proposition gives a sufficient condition which is independent of $P_T$ and $P_I$.

\begin{prop}
Let $\mathcal{R}(\bW_2) \in \mathcal{N}(\bW_1)$. Then, the IPC is redundant for any $P_I$ and $P_T$ and the standard WF solution is optimal: $\bR^*=\bR_{WF}$.
\end{prop}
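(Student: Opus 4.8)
The plan is to show that the standard water-filling covariance $\bR_{WF}$ produces exactly zero interference at the primary user, $tr(\bW_2\bR_{WF})=0$, so that it is feasible under the IPC for \emph{every} $P_I\ge 0$ (and every $P_T$); optimality of $\bR_{WF}$ then follows at once from Proposition \ref{prop.IPC.inact}, or directly from a relaxation argument.

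First I would translate the hypothesis $\mathcal{R}(\bW_2)\subseteq\mathcal{N}(\bW_1)$ into an orthogonality statement. Both $\bW_1=\bH_1^+\bH_1$ and $\bW_2=\bH_2^+\bH_2$ are Hermitian positive semi-definite, so $\mathcal{R}(\bW_1)=\mathcal{N}(\bW_1)^{\perp}$ and $\mathcal{N}(\bW_2)=\mathcal{R}(\bW_2)^{\perp}$. Taking orthogonal complements in $\mathcal{R}(\bW_2)\subseteq\mathcal{N}(\bW_1)$ gives $\mathcal{R}(\bW_1)=\mathcal{N}(\bW_1)^{\perp}\subseteq\mathcal{R}(\bW_2)^{\perp}=\mathcal{N}(\bW_2)$; in particular every eigenvector $\bu_{1i}$ of $\bW_1$ associated with a strictly positive eigenvalue $\gl_{1i}$ lies in $\mathcal{R}(\bW_1)\subseteq\mathcal{N}(\bW_2)$, hence $\bW_2\bu_{1i}=0$.

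Next I would use the explicit form of the water-filling solution. Recalling from \eqref{eq.R+} that $\bR_{WF}=\sum_{i:\,\gl_{1i}>\mu}(\mu^{-1}-\gl_{1i}^{-1})\bu_{1i}\bu_{1i}^{+}$, where the sum ranges only over the strictly positive eigenmodes of $\bW_1$, we obtain $\bW_2\bR_{WF}=\sum_{i:\,\gl_{1i}>\mu}(\mu^{-1}-\gl_{1i}^{-1})(\bW_2\bu_{1i})\bu_{1i}^{+}=0$ by the previous step, and therefore $tr(\bW_2\bR_{WF})=0$. (Using only the positive eigenmodes is what makes this valid even when $\bW_1$ is singular.)

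Finally, since $tr(\bW_2\bR_{WF})=0\le P_I$ for all $P_I\ge0$, condition \eqref{eq.prop.IPC.inact.1} of Proposition \ref{prop.IPC.inact} holds for every $P_T,P_I$, so the IPC is redundant and $\bR^*=\bR_{WF}$. Equivalently, under the joint (TPC+IPC) constraints $C\le C(\bR_{WF})$ because dropping the IPC only enlarges the feasible set, while $\bR_{WF}$ is itself feasible under the joint constraints, so the bound is attained. There is no genuine obstacle here; the only point requiring care is the Hermitian orthogonal-complement bookkeeping that converts the range–nullspace inclusion into $\bW_2\bR_{WF}=0$, together with the observation that the water-filling sum involves only the positive eigenmodes of $\bW_1$.
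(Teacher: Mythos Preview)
Your argument is correct and essentially identical to the paper's: both show $\bW_2\bR_{WF}=0$ by observing that the active eigenvectors of $\bR_{WF}$ lie in $\mathcal{R}(\bW_1)=\mathcal{N}(\bW_1)^{\perp}$ and hence (via the hypothesis $\mathcal{R}(\bW_2)\subseteq\mathcal{N}(\bW_1)$) are annihilated by $\bW_2$, then conclude via the relaxation argument of Proposition~\ref{prop.IPC.inact}. Your orthogonal-complement bookkeeping is a bit more explicit than the paper's one-line ``straightforward to verify,'' but the route is the same.
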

\begin{proof}
As in Proposition \ref{prop.IPC.inact}, $C\le C(\bR_{WF})$. It is straightforward to verify that any active eigenvector of $\bR_{WF}$ is orthogonal to $\mathcal{N}(\bW_1)$ and hence, due to $\mathcal{R}(\bW_2) \in \mathcal{N}(\bW_1)$, $\bW_2\bR_{WF}=0$, i.e. the IPC is redundant and  $\bR_{WF}$ is feasible (for any $P_T$ and $P_I$) and hence $C(\bR_{WF})\le C$, from which the desired result follows.
\end{proof}

It follows from this Proposition that ZF transmission is optimal in this case, for any $P_I$ and $P_T$, so that there is no loss of optimality in using this popular signaling technique.

%\newpage
%=====================================================================================
\section{Rank-1 Solutions}

In this section, we explore the case when $\bW_1$ is rank-one. As we show below, beamforming is optimal in this case. A practical appeal of this is due to its low-complexity implementation. Furthermore, rank-one $\bW_1$ is also motivated by single-antenna mobile units while the base station is equipped with multiple antennas, or when the MIMO propagation channel is of degenerate nature resulting in a keyhole effect, see e.g. \cite{Chizhik-02}\cite{Levin-08}.

We begin with the following result which bounds the rank of optimal covariance in any case.

\begin{prop}
\label{prop.rR}
If the TPC is active or/and $\bW_2$ is full-rank, then the rank of the optimal covariance $\bR^*$ of the problem (P1) in \eqref{eq.C.def} under the constraints in \eqref{eq.SR.2} is bounded as follows:
\bal
\label{eq.prop.rR}
r(\bR^*) \le r(\bW_1)
\eal
If the TPC is redundant and $\bW_2$ is rank-deficient, then there exists an optimal covariance $\bR^*$ of (P1) under the constraints in \eqref{eq.SR.2} that also satisfies this inequality\footnote{Under these conditions, an optimal covariance has an unusual property of being not necessarily unique, see an example in Section \ref{sec.Numerical Experiments}, so that there may exist extra solutions that do not salsify this inequality, but all of them deliver the same capacity.}.
\end{prop}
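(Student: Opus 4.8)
The plan is to read off the rank bound from the closed‑form expression \eqref{eq.thm.R*.1} in Theorem \ref{thm.R*}, together with the complementary‑slackness conditions \eqref{eq.thm.R*.2}. Recall that $\bR^* = \bW_{\mu}^{\dag}(\bI - \bW_{\mu}\bW_1^{-1}\bW_{\mu})_+\bW_{\mu}^{\dag}$, where $\bW_{\mu} = (\mu_1\bI + \mu_2\bW_2)^{1/2}$; here I will keep in mind that $\bW_1^{-1}$ should be read as a pseudo‑inverse on $\mathcal{R}(\bW_1)$ when $\bW_1$ is singular, consistently with the discussion following \eqref{eq.R+}. The key observation is that, in all cases covered by the first statement, $\bW_{\mu}$ is \emph{invertible}: if the TPC is active then $\mu_1>0$ by complementary slackness, so $\mu_1\bI + \mu_2\bW_2 \ge \mu_1\bI > 0$; and if $\bW_2>0$ then $\mu_1\bI+\mu_2\bW_2>0$ holds as long as $\mu_2>0$, while if $\mu_2=0$ one is in the TPC‑active case already. (One should note that $\mu_1$ and $\mu_2$ cannot both vanish, as remarked after Theorem \ref{thm.R*}.) Hence $\bW_{\mu}^{\dag} = \bW_{\mu}^{-1}$ is a bijection of $\mathbb{C}^m$, and therefore $r(\bR^*) = r\big((\bI - \bW_{\mu}\bW_1^{-1}\bW_{\mu})_+\big)$, i.e. the number of eigenvalues of $\bI - \bW_{\mu}\bW_1^{-1}\bW_{\mu}$ that exceed $1$.

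The next step is to bound that count by $r(\bW_1)$. Write $\bM = \bW_{\mu}\bW_1^{-1}\bW_{\mu} \ge 0$. Since $\bW_{\mu}$ is invertible, $r(\bM) = r(\bW_1^{-1}) = r(\bW_1)$, so $\bM$ has exactly $m - r(\bW_1)$ zero eigenvalues; equivalently, $\dim\mathcal{N}(\bM) = m - r(\bW_1)$. On that null space, $\bI - \bM$ acts as the identity, contributing eigenvalue $1$, which is \emph{not} counted in $(\cdot)_+$ (the summation in \eqref{eq.R+} keeps only strictly positive modes, and "$>1$" is strict in the analogous statement). Hence the number of eigenvalues of $\bI - \bM$ strictly larger than $1$ is at most the number of nonzero eigenvalues of $\bM$, which is $r(\bW_1)$. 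This gives $r(\bR^*) \le r(\bW_1)$ and proves \eqref{eq.prop.rR} in the first case.

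For the second statement — TPC redundant and $\bW_2$ rank‑deficient — one has $\mu_1 = 0$, so $\bW_{\mu} = (\mu_2\bW_2)^{1/2}$ is itself rank‑deficient and the pseudo‑inverse is a genuine one, annihilating $\mathcal{N}(\bW_2)$. The plan here is: pick the particular optimal $\bR^*$ given by formula \eqref{eq.thm.R*.1} (this is legitimate because Theorem \ref{thm.R*} asserts it is optimal), and observe directly that $\mathcal{R}(\bR^*) \subseteq \mathcal{R}(\bW_{\mu}^{\dag}) = \mathcal{R}(\bW_2)$, so $r(\bR^*) \le r(\bW_2)$. Now invoke Proposition \ref{prop.C.inf}'s line of reasoning in the degenerate form: since the TPC is redundant, the IPC alone governs the problem, and the effective channel seen by the optimization lives on $\mathcal{R}(\bW_2)$; projecting $\bW_1$ onto $\mathcal{R}(\bW_2)$ shows that only $r(\bW_{\mu}^{\dag}\bW_1\bW_{\mu}^{\dag}) \le \min\{r(\bW_1), r(\bW_2)\} \le r(\bW_1)$ eigenmodes can be active, and applying the $(\cdot)_+$ truncation as above again drops the eigenvalue‑$1$ block on the kernel. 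This yields an optimal $\bR^*$ with $r(\bR^*) \le r(\bW_1)$, as claimed; the footnote's caveat about non‑uniqueness is exactly why the statement is only "there exists".

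The main obstacle I anticipate is the careful bookkeeping of pseudo‑inverses when \emph{both} $\bW_1$ and $\bW_{\mu}$ are singular simultaneously (the second case): one must check that the "$\bW_1^{-1}$" appearing inside \eqref{eq.thm.R*.1} is the inverse of $\bW_1$ restricted to $\mathcal{R}(\bW_1)$, that this restriction is compatible with conjugation by $\bW_{\mu}$, and that no spurious rank is introduced where $\mathcal{R}(\bW_2)$ and $\mathcal{R}(\bW_1)$ fail to be nested. All of this is routine linear algebra once the ranges are tracked, but it is the step most prone to error; the rest is just counting eigenvalues above the threshold.
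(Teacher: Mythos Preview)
There is a genuine error in your Case~1 argument. You write that $r\big((\bI-\bM)_+\big)$ equals ``the number of eigenvalues of $\bI-\bM$ that exceed~$1$'' and then argue that the eigenvalue~$1$ contributed by $\mathcal{N}(\bM)$ is ``not counted in $(\cdot)_+$''. But the operator $(\cdot)_+$ in \eqref{eq.R+} retains all \emph{strictly positive} eigenvalues --- the threshold is $0$, not $1$ --- so the eigenvalue~$1$ \emph{is} kept. Under your pseudo-inverse reading of $\bW_1^{-1}$, the matrix $\bM=\bW_{\mu}\bW_1^{\dag}\bW_{\mu}$ indeed has an $(m-r(\bW_1))$-dimensional null space, and on it $\bI-\bM=\bI$; those modes survive $(\cdot)_+$, so your count gives $r\big((\bI-\bM)_+\big)\ge m-r(\bW_1)$, which is the wrong direction entirely. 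The discussion after \eqref{eq.R+} in fact says the opposite of what you invoke: singular modes of $\bW_1$ are handled via $\gl_{1i}^{-1}=+\infty$ (so that $1-\infty<0$ is dropped), not via the pseudo-inverse convention $\gl_{1i}^{\dag}=0$.

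The paper sidesteps this interpretation issue by working from the KKT stationarity condition \eqref{eq.T.KKT} rather than the closed form \eqref{eq.thm.R*.1}. Right-multiplying \eqref{eq.T.KKT} by $\bR^*$ and using complementary slackness yields $(\bI+\bW_1\bR^*)^{-1}\bW_1\bR^*=\bW_{\mu}^2\bR^*$; since both $\bI+\bW_1\bR^*$ and $\bW_{\mu}^2$ are nonsingular in Case~1, one reads off $r(\bR^*)=r(\bW_{\mu}^2\bR^*)=r(\bW_1\bR^*)\le r(\bW_1)$ in one line. This holds for \emph{every} KKT point and hence every optimal $\bR^*$, which is what the proposition actually asserts in this case.

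For Case~2 your plan is closer in spirit to the paper's, which formalises the projection onto $\mathcal{R}(\bW_2)$ as an equivalent reduced problem with full-rank constraint matrix (Proposition~\ref{prop.NW2}) and then reapplies the same KKT rank identity to that problem. The nesting-of-ranges issue you flag is handled there by first invoking $\mathcal{N}(\bW_2)\subseteq\mathcal{N}(\bW_1)$ (a consequence of the TPC being redundant, Proposition~\ref{prop.TPC.inact}), which you did not use; without it the projection need not preserve the problem.
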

\begin{proof}
See Appendix.
\end{proof}

The following are immediate consequences of Proposition \ref{prop.rR}.

\begin{cor}
If $\bW_2$ is of full-rank or/and if the TPC is active, then the optimal covariance $\bR^*$ is of full-rank only if $\bW_1$ is of full-rank (i.e. rank-deficient $\bW_1$ ensures that $\bR^*$ is also rank-deficient).
\end{cor}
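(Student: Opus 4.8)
The plan is to obtain this directly from Proposition \ref{prop.rR} by a one-line contrapositive argument, so the proof will be short. First I would note that the hypothesis of the corollary ($\bW_2$ full-rank or/and the TPC active) is precisely the hypothesis of the first part of Proposition \ref{prop.rR}, which therefore applies and gives the bound $r(\bR^*) \le r(\bW_1)$ for \emph{every} optimal covariance $\bR^*$ in this regime.

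Next I would argue the contrapositive of the claim. Suppose $\bW_1$ is rank-deficient, i.e. $r(\bW_1) < m$. Then the inequality $r(\bR^*) \le r(\bW_1) < m$ forces $r(\bR^*) < m$, so $\bR^*$ is rank-deficient. Equivalently, reading this the other way: if $\bR^*$ happens to be of full rank, $r(\bR^*) = m$, then $m \le r(\bW_1) \le m$, so $r(\bW_1) = m$ and $\bW_1$ is of full rank — which is exactly the stated ``only if'' direction.

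There is essentially no obstacle here: all the work is already done in Proposition \ref{prop.rR}; the corollary is a direct specialization. The only point worth a half-sentence of care is that in the regime of the corollary (TPC active or $\bW_2$ full-rank) Proposition \ref{prop.rR} asserts the bound for \emph{all} optimal covariances, not merely for some optimal covariance (the weaker conclusion reserved there for the complementary case of redundant TPC together with rank-deficient $\bW_2$), so the conclusion ``$\bR^*$ is rank-deficient'' indeed holds for any optimal $\bR^*$ without the non-uniqueness caveat. I would simply state: ``This is immediate from Proposition \ref{prop.rR}: under the stated conditions $r(\bR^*)\le r(\bW_1)$, so $r(\bW_1)<m$ implies $r(\bR^*)<m$, and conversely $r(\bR^*)=m$ forces $r(\bW_1)=m$.'' and end the proof.
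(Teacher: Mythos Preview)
Your proposal is correct and matches the paper's approach: the paper presents this corollary as an ``immediate consequence'' of Proposition~\ref{prop.rR} with no separate proof, and your one-line contrapositive from the bound $r(\bR^*)\le r(\bW_1)$ is exactly the intended reasoning. Your extra remark that in this regime the bound applies to \emph{every} optimal covariance (not just some) is a fair clarification and consistent with the paper.
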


\begin{cor}
If $r(\bW_1)=1$, then $r(\bR^*)=1$, i.e. beamforming is optimal.
\end{cor}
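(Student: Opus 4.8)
The plan is to obtain this directly from Proposition~\ref{prop.rR}. Substituting $r(\bW_1)=1$ into the bound \eqref{eq.prop.rR} gives $r(\bR^*)\le 1$ at once, so the only work is (i) to make sure we are entitled to invoke that bound for some optimal $\bR^*$, and (ii) to exclude the degenerate possibility $r(\bR^*)=0$.

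For (i), I would split into the two cases of Proposition~\ref{prop.rR}. If the TPC is active or $\bW_2$ is full-rank, the inequality $r(\bR^*)\le r(\bW_1)$ holds for every optimal covariance, so in particular $r(\bR^*)\le 1$. If instead the TPC is redundant and $\bW_2$ is rank-deficient, Proposition~\ref{prop.rR} still guarantees the existence of an optimal covariance $\bR^*$ with $r(\bR^*)\le r(\bW_1)=1$; we simply work with that representative, which is no loss since all optimal covariances achieve the same capacity.

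For (ii), note that $r(\bR^*)=0$ would mean $\bR^*=\bo$, whence $C=C(\bR^*)=\log|\bI|=0$. By the earlier characterization of zero capacity (the proposition stating $C=0$ iff $P_I=0$ and $\sN(\bW_2)\in\sN(\bW_1)$), this can occur only in that trivial configuration, in which no positive rate is feasible in the first place. Outside this corner case we therefore have $r(\bR^*)=1$ exactly, i.e.\ $\bR^*=p\,\bu\bu^+$ for some unit vector $\bu$ and power $p>0$, which is precisely a beamforming covariance, so beamforming is optimal. One may further pin down $\bu$ as the eigenvector of $\bW_1$ associated with its unique nonzero eigenvalue, since any component of $\bR^*$ lying in $\sN(\bW_1)$ would only consume the power and interference budgets without increasing $C$, but this refinement is not needed for the rank claim.

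There is essentially no obstacle here: the statement is a one-line consequence of Proposition~\ref{prop.rR}, and the only care required is the routine bookkeeping of which branch of that proposition supplies the rank bound, together with the harmless dismissal of the $C=0$ degenerate case.
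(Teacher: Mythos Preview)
Your main argument is correct and matches the paper's approach: the corollary is an immediate consequence of Proposition~\ref{prop.rR}, and your case split and handling of the $r(\bR^*)=0$ degeneracy are appropriate bookkeeping.

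However, your final parenthetical remark is wrong. You cannot ``pin down $\bu$ as the eigenvector of $\bW_1$'': as Proposition~\ref{prop.r1.IPC} shows explicitly, when the IPC is active the optimal beamforming direction is along $\bW_2^\dag\bu_1$ or $\bW_{2\mu}^{-1}\bu_1$, \emph{not} $\bu_1$ itself. Your reasoning that a component in $\sN(\bW_1)$ ``only consumes the power and interference budgets'' is the flaw: such a component can in fact \emph{reduce} the interference power $tr(\bW_2\bR)$, freeing budget to increase the useful component and hence the rate. You correctly note this refinement is not needed for the rank claim, so the corollary stands, but the remark should be deleted.
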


Note that this rank (beamforming) property mimics the respective property for the standard WF. However, while signalling on the (only) active eigenvector of $\bW_1$ is optimal under the standard WF (no IPC), it is not so when the IPC is active, as the following result shows. To this end, let $\bW_1=\gl_1\bu_1\bu_1^+$, i.e. it is rank-1 with $\gl_1>0, \bu_1$ be the (only) active eigenvalue and eigenvector; $\gamma_I=P_I/P_T$ be the "interference-to-signal" ratio, and
\bal
%\label{eq.r1.gamma12}
\gamma_1 = \frac{\bu_1^+\bW_2^\dag\bu_1}{\bu_1^+(\bW_2^\dag)^2\bu_1},\ \gamma_2 = \bu_1^+\bW_2\bu_1
\eal
where $\bW_2^\dag$ is Moore-Penrose pseudo-inverse of $\bW_2$; $\bW_2^\dag=\bW_2^{-1}$ if $\bW_2$ is full-rank \cite{Horn-85}.

\begin{prop}
\label{prop.r1.IPC}
Let $\bW_1$ be rank-1.

1. If $\gamma_I < \gamma_1$, then the TPC is redundant and the optimal covariance can be expressed as follows
\bal
\label{eq.prop.r1.IPC.1}
\bR^* = P_I\frac{\bW_2^\dag\bu_1\bu_1^+\bW_2^\dag} {\bu_1^+\bW_2^\dag\bu_1}
\eal
The capacity is
\bal
\label{eq.prop.r1.IPC.2}
C = \log(1+\gl_1 \alpha P_T)
\eal
where $\alpha = \gamma_I \bu_1^+\bW_2^\dag\bu_1 < 1$.

2. If $\gamma_I\ge \gamma_2$, then the IPC is redundant and the standard WF solution applies: $\bR^* = P_T\bu_1\bu_1^+$. This condition is also necessary for the optimality of $P_T\bu_1\bu_1^+$ under the TPC and IPC when $\bW_1$ is rank-1. The capacity is as in \eqref{eq.prop.r1.IPC.2} with $\alpha=1$.

3. If $\gamma_1 \le \gamma_I < \gamma_2$, then both constraints are active. The optimal covariance is
\bal
\label{eq.prop.r1.IPC.3}
\bR^* = P_T\frac{\bW_{2\mu}^{-1}\bu_1\bu_1^+\bW_{2\mu}^{-1}} {\bu_1^+\bW_{2\mu}^{-2}\bu_1}
\eal
where $\bW_{2\mu}=\bI+\mu_2\bW_2$, and $\mu_2>0$ is found from the IPC: $tr(\bW_2\bR^*)=P_I$. The capacity is as in \eqref{eq.prop.r1.IPC.2} with
\bal
\alpha= (\bu_1^+\bW_{2\mu}^{-1}\bu_1)^2 |\bW_{2\mu}^{-1}\bu_1|^{-2} \le 1
\eal
with equality if and only if $\bu_1$ is an eigenvector of $\bW_2$.
\end{prop}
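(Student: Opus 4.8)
The plan is to specialize the general closed-form solution of Theorem~\ref{thm.R*} to the rank-1 case $\bW_1=\gl_1\bu_1\bu_1^+$ and then identify which constraint is active in each of the three regimes. Since $r(\bW_1)=1$, Corollary to Proposition~\ref{prop.rR} already tells us $r(\bR^*)=1$, so we may write $\bR^*=p\,\bv\bv^+$ for a unit vector $\bv$ and power $p\ge 0$. The objective is $C(\bR^*)=\log(1+\gl_1 p\,|\bu_1^+\bv|^2)$, so for fixed $p$ the problem reduces to maximizing $|\bu_1^+\bv|^2$ subject to whichever power/interference constraints bind. This is the key reduction: everything becomes a finite-dimensional quadratic optimization over the beamforming direction $\bv$.

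First I would handle case~3 (both constraints active), since it is the generic one and the other two follow by limiting arguments. With both $\mu_1,\mu_2>0$, write $\bW_\mu^2=\mu_1\bI+\mu_2\bW_2=\mu_1(\bI+\mu_2'\bW_2)$ with $\mu_2'=\mu_2/\mu_1$; absorbing scalars, set $\bW_{2\mu}=\bI+\mu_2'\bW_2$ (renaming $\mu_2'\to\mu_2$). Plugging $\bW_1=\gl_1\bu_1\bu_1^+$ into \eqref{eq.thm.R*.1}: the middle factor $\bI-\bW_\mu\bW_1^{-1}\bW_\mu$ is problematic because $\bW_1$ is singular, so I would instead argue directly from the structure that the optimal $\bv$ must be proportional to $\bW_{2\mu}^{-1}\bu_1$ — this is the direction that trades off alignment with $\bu_1$ against the interference penalty $\bv^+\bW_2\bv$, exactly as in a diagonally-loaded (MVDR-type) beamformer. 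Concretely, among all unit $\bv$ with a prescribed value of $\bv^+\bW_{2\mu}\bv$, the one maximizing $|\bu_1^+\bv|$ is $\bv\propto \bW_{2\mu}^{-1}\bu_1$ by Cauchy--Schwarz in the $\bW_{2\mu}$-inner product. Normalizing gives $\bR^*=P_T\bW_{2\mu}^{-1}\bu_1\bu_1^+\bW_{2\mu}^{-1}/(\bu_1^+\bW_{2\mu}^{-2}\bu_1)$, which is \eqref{eq.prop.r1.IPC.3}; the TPC is active by construction ($tr\bR^*=P_T$), and $\mu_2$ is then pinned down by imposing $tr(\bW_2\bR^*)=P_I$. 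Substituting back, $C=\log(1+\gl_1 P_T\,\alpha)$ with $\alpha=|\bu_1^+\bW_{2\mu}^{-1}\bu_1|^2/(\bu_1^+\bW_{2\mu}^{-2}\bu_1)$; the bound $\alpha\le 1$ is Cauchy--Schwarz applied to $\langle \bu_1,\bW_{2\mu}^{-1}\bu_1\rangle$ versus $|\bu_1|^2|\bW_{2\mu}^{-1}\bu_1|^2$, with equality iff $\bW_{2\mu}^{-1}\bu_1\parallel\bu_1$, i.e.\ iff $\bu_1$ is an eigenvector of $\bW_2$.

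For case~1 ($\gamma_I<\gamma_1$), I expect the IPC to bind and the TPC to be slack, so $\mu_1=0$ and $\bR^*=P_I\,\bW_2^\dag\bu_1\bu_1^+\bW_2^\dag/(\bu_1^+\bW_2^\dag\bu_1)$: this is the direction maximizing $|\bu_1^+\bv|^2$ for fixed interference $\bv^+\bW_2\bv$, obtained as the $\mu_2\to\infty$ limit of case~3 (equivalently, direct Cauchy--Schwarz with the $\bW_2$-pseudo-inner product on $\mathcal R(\bW_2)$). One checks $tr\bR^*\le P_T$ reduces exactly to $\gamma_I<\gamma_1$, and $C=\log(1+\gl_1 P_T\alpha)$ with $\alpha=\gamma_I\bu_1^+\bW_2^\dag\bu_1$; the inequality $\alpha<1$ is again Cauchy--Schwarz. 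For case~2 ($\gamma_I\ge\gamma_2$) the WF solution $\bR^*=P_T\bu_1\bu_1^+$ has $tr(\bW_2\bR^*)=P_T\gamma_2\le P_I$, so it is feasible under the joint constraints and hence optimal by Proposition~\ref{prop.IPC.inact}; necessity of $\gamma_I\ge\gamma_2$ for optimality of $P_T\bu_1\bu_1^+$ follows since any $P_I<P_T\gamma_2$ makes $P_T\bu_1\bu_1^+$ infeasible. The thresholds $\gamma_1$ and $\gamma_2$ are then just the two endpoints at which the case-3 formulas degenerate into the case-1 and case-2 formulas, which also verifies continuity across regimes.

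The main obstacle will be handling the pseudo-inverse carefully when $\bW_2$ is rank-deficient: the manipulation $\bv\propto\bW_{2\mu}^{-1}\bu_1$ is fine because $\bW_{2\mu}=\bI+\mu_2\bW_2$ is always invertible, but taking the $\mu_2\to\infty$ limit to reach \eqref{eq.prop.r1.IPC.1} requires checking that $\bu_1$ has a nonzero component in $\mathcal R(\bW_2)$ (otherwise $\gamma_1$ is undefined and the IPC is vacuous — a case presumably excluded or folded into the hypothesis $\gamma_I<\gamma_1$). I would also need to confirm that in each regime the candidate $(\mu_1,\mu_2)$ produced is nonnegative and satisfies the complementary-slackness conditions \eqref{eq.thm.R*.2}, so that the candidate genuinely solves the KKT system rather than merely being a plausible guess; monotonicity of $tr(\bW_2\bR^*)$ in $\mu_2$ (noted in the introduction) guarantees the equation $tr(\bW_2\bR^*)=P_I$ has a unique solution in the stated range $\gamma_1\le\gamma_I<\gamma_2$.
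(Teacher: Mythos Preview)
Your proof is essentially correct but takes a different route from the paper. The paper proceeds by specializing the general formula of Theorem~\ref{thm.R*}: it first records the technical observation that for rank-one $\bW=\gl\bu\bu^+$ one has $(\bI-\bW^{-1})_+=(1-\gl^{-1})_+\bu\bu^+$, then substitutes $\bW=\bW_\mu^{-1}\bW_1\bW_\mu^{-1}$ (which is rank-1 since $\bW_1$ is) into \eqref{eq.thm.R*.1} and simplifies. The singular-$\bW_2$ subcase of part~1 is handled by invoking Proposition~\ref{prop.NW2} (project onto $\mathcal{R}(\bW_2)$, solve the projected problem, then reinterpret the inverse as a pseudo-inverse). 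Case~2 is argued exactly as you do, via feasibility of $P_T\bu_1\bu_1^+$.

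Your approach instead exploits the rank bound $r(\bR^*)\le r(\bW_1)=1$ up front, writes $\bR^*=p\bv\bv^+$, and reduces everything to maximizing $|\bu_1^+\bv|^2$ under the appropriate quadratic constraints, which you solve by Cauchy--Schwarz in the $\bW_{2\mu}$-weighted inner product. This is more elementary and self-contained; it bypasses the machinery of Theorem~\ref{thm.R*} entirely and gives the MVDR-type interpretation directly. The paper's route, by contrast, emphasizes that the rank-1 case is a clean specialization of the general closed-form \eqref{eq.thm.R*.1}. One small point: your phrase ``among all unit $\bv$ with a prescribed value of $\bv^+\bW_{2\mu}\bv$'' is a bit loose, since for unit $\bv$ fixing $\bv^+\bW_{2\mu}\bv$ is the same as fixing $\bv^+\bW_2\bv$; what you are really doing is solving the Lagrangian-relaxed problem (i.e.\ maximizing the generalized Rayleigh quotient $|\bu_1^+\bv|^2/\bv^+\bW_{2\mu}\bv$) and then choosing $\mu_2$ to enforce primal feasibility. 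That is fine once stated this way.
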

\begin{proof}
See Appendix.
\end{proof}

Note that the optimal signalling in case 1 is along the direction of $\bW_{2}^{\dag}\bu_1$ and not that of $\bu_1$ (unless $\bu_1$ is also an eigenvector of $\bW_{2}$), as would be the case for the standard WF with redundant IPC. In fact, $\bW_{2}^{\dag}$ plays a role of a "whitening" filter here. Similar observation applies to case 3, with $\bW_{2}$ replaced by $\bW_{2\mu}$. $\alpha$ in Proposition \ref{prop.r1.IPC} quantifies power loss due to enforcing the IPC; $\alpha=1$ means no power loss.

%\newpage
%=====================================================================================
\section{Identical Eigenvectors}
\label{sec.Iden.EV}

In this section, we consider a scenario where $\bW_1$ and $\bW_2$ have the same eigenvectors. This may be the case when the scattering environment around the base station (the Tx) is the same as seen from the Rx and primary user U. In this case, the general solution in Theorem 1 significantly simplifies to the following:
\bal
\label{eq.thm.R*D}
\bR^* = \bU\bLam^*\bU^+
\eal
where unitary matrix $\bU$ collects eigenvectors of $\bW_{1(2)}$ and diagonal matrix $\bLam^*=\diag\{\gl_i^*\}$ collects the eigenvalues of $\bR^*$:
\bal
\label{eq.thm.R*D.2}
\gl_i^* = [(\mu_1+\gl_{2i}\mu_2)^\dag - \gl_{1i}]_+
\eal
where $\gl_{1i(2i)}$ are the eigenvalues of $\bW_{1(2)}$. Dual variables $\mu_{1(2)} \ge 0$ are determined from the following:
\bal
\label{eq.thm.R*D.3}
\mu_1\left(\sum_i \gl_i^* - P_T\right)=0,\ \mu_2\left(\sum_i \gl_{2i}\gl_i^* -P_I\right)=0
\eal
subject to $\sum_i \gl_i^* \le P_T$, $\sum_i \gl_{2i}\gl_i^* \le P_I$. The capacity can be expressed as in \eqref{eq.C} with $\gl_{ai} = \gl_{1i}(\mu_1+ \mu_2\gl_{2i})^\dag$.

Note that, in this case, signaling on the eigenmodes of the main channel $\bW_1$ is optimal, but power allocation is not given by the standard WF, unless the IPC is redundant ($\mu_2=0$). It follows from \eqref{eq.thm.R*D.2} that $\gl_i^*=0$ (power allocated to $i$-th eigenmode is zero) if $\mu_1=0$ (redundant TPC) and $\gl_{2i}=0$ ($i$-th eigenmode of 2nd channel $\bW_2$ is inactive), in addition to the standard WF property that $\gl_i^*=0$ if $\gl_{1i}=0$ under the active TPC ($\mu_1>0$).

In the context of massive MIMO systems under favorable propagation, see e.g. \cite{Marzetta-16}, $\bW_1$ and $\bW_2$ become diagonal matrices (and thus have the same eigenvectors), and so is $\bR^*$, i.e. independent signaling is optimal, and the solution in \eqref{eq.thm.R*D.2} gives the optimal power allocation in such setting. This significantly simplifies its implementation since numerical complexity of generic convex solvers can be prohibitively large for massive MIMO settings.

%\newpage
%=====================================================================================
\section{Iterative Bisection Algorithm}
\label{sec.IBA}

While Theorem \ref{thm.R*} gives a closed-form solution for an optimal covariance $\bR^*$ up to dual variables, no closed-form solution is known for \eqref{eq.thm.R*.2} in the general case; the sections above provided complete closed-form solutions in some special cases. In this section, we develop an iterative numerical algorithm to solve \eqref{eq.thm.R*.2} in the general case in an efficient way and prove its convergence.

First, we consider the standard bisection algorithm \cite{Boyd-04}. Let $f(x)$ be a function with the following property: $f(x)\ge 0$ for any $x< x_0$ and $f(x)\le0$ for any $x> x_0$, where $x_0$ is a solution of $f(x)=0$. Then, the following bisection algorithm (BA) can be used to solve $f(x)=0$, where $x_l, x_u$ are the upper and lower bounds to $x_0$: $x_l\le x_0 \le x_u$, and $\epsilon >0$ is any desired accuracy. In fact, it is straightforward to show that this algorithm will converge in a finite number $N$ of steps such that
\bal
N \le \left\lceil\log_2\left(\frac{x_u-x_l}{\epsilon}\right)\right\rceil
\eal
where $\lceil\cdot\rceil$ denotes ceiling, so that the convergence is exponentially fast and hence the algorithm is very efficient \cite{Boyd-04}.

\begin{algorithm}
\caption{Bisection algorithm (BA)}
\label{alg.BA}
\begin{algorithmic}
\Require $f(x),\ x_l,\ x_u,\ \epsilon$
\Repeat
\State 1. Set $x=\frac{1}{2}(x_l+x_u)$.
\State 2. If $f(x)<0$, set $x_u=x$. Otherwise, set $x_l=x$. Terminate if $f(x)=0$.
\Until $|x_u-x_l| \le \epsilon$.
\end{algorithmic}
\end{algorithm}

An alternative stopping criteria for this algorithm is $|f(x)| \le \epsilon$ and the two criteria are equivalent when $f(x)$ is continuous.

The BA can be used to solve for $\mu_1$, $\mu_2$ in \eqref{eq.thm.R*.2} in an iterative way, as we show below. To this end, we need to establish lower and upper bounds to the solutions $\mu_1^*,\ \mu_2^*$ required by the BA.

\begin{prop}
\label{prop.mu.bounds}
Let $\mu_1^*,\ \mu_2^*$ be solutions of \eqref{eq.thm.R*.2}, i.e. the optimal dual variables. They can be bounded as follows:
\bal
\label{eq.mu.l.u}
&0\le \mu_1^* \le \mu_{1u} =m(P_T+\gl_1^{-1}(\bW_1))^{-1}\\
\label{eq.mu2.l.u}
&0\le \mu_2^* \le \mu_{2u} =(P_I/r_2+\gl_m(\bW_2)/\gl_1(\bW_1))^{-1}
\eal
where $r_2$ is the rank of $\bW_2$ and $m$ is the number of Tx antennas.
\end{prop}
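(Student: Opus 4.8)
The plan is to derive each upper bound from the complementary-slackness/KKT structure of Theorem \ref{thm.R*}, using the fact that the optimal covariance $\bR^*$ in \eqref{eq.thm.R*.1} depends on $\mu_1,\mu_2$ and must satisfy $\tr(\bR^*)\le P_T$ and $\tr(\bW_2\bR^*)\le P_I$. The underlying idea is that large dual variables force $\bW_\mu$ to be large, which makes $\bR^*$ small; so if a dual variable were too large, the corresponding constraint would be strictly slack, contradicting complementary slackness (when the variable is positive) — and when the variable is zero the bound holds trivially. First I would dispose of the trivial cases $\mu_1^*=0$ or $\mu_2^*=0$, for which the stated inequalities hold since the upper bounds $\mu_{1u},\mu_{2u}$ are manifestly positive. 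So assume $\mu_1^*>0$; then complementary slackness gives $\tr(\bR^*)=P_T$, and I want to show this forces $\mu_1^*\le m(P_T+\gl_1^{-1}(\bW_1))^{-1}$.

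For the $\mu_1^*$ bound: from \eqref{eq.thm.R*.1}, writing the nonzero eigenvalues of $\bW_\mu^\dag\bW_1\bW_\mu^\dag$ as $\gl_{ai}$, the optimal covariance has eigenvalues of the form $(\mu_1+\gl_{2i}\mu_2)^\dag - (\text{something})$ on the support of $\bW_\mu$ — more concretely, on that support $\bR^* = \bW_\mu^{-2} - \bW_\mu^{\dag}\bW_\mu\bW_1^{-1}\bW_\mu\bW_\mu^{\dag}$ restricted to the positive part, so each active eigenvalue of $\bR^*$ is at most a corresponding eigenvalue of $\bW_\mu^{-2}=(\mu_1\bI+\mu_2\bW_2)^{-1}\le \mu_1^{-1}\bI$ on that subspace. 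Summing, $P_T=\tr(\bR^*)\le \sum_{i\text{ active}} \mu_1^{-1} \le m/\mu_1^*$ would give only $\mu_1^*\le m/P_T$, which is weaker than claimed; to get the sharper bound I need to retain the subtracted term: each active eigenvalue is $\le \gl_i(\bW_\mu^{-2}) - \gl_i(\bW_1^{-1})$ type bound using Weyl-type inequalities, or more simply note that where $\bR^*$ is active we have $\bI - \bW_\mu\bW_1^{-1}\bW_\mu >0$, hence $\bW_\mu^{-2}>\bW_1^{-1}$ on that subspace, and then $\bR^* = \bW_\mu^{-2}-\bW_1^{-1}$ restricted there, so $\tr(\bR^*)\le \tr(\bW_\mu^{-2}\bP) - \tr(\bW_1^{-1}\bP)$ where $\bP$ projects onto the active subspace; bounding $\tr(\bW_\mu^{-2}\bP)\le m\mu_1^{-1}$ and $\tr(\bW_1^{-1}\bP)\ge \gl_1^{-1}(\bW_1)$ (at least one active mode, and every eigenvalue of $\bW_1^{-1}$ is $\ge\gl_1^{-1}(\bW_1)$) yields $P_T\le m\mu_1^{-1} - \gl_1^{-1}(\bW_1)$, i.e. $\mu_1^*\le m(P_T+\gl_1^{-1}(\bW_1))^{-1}$, as desired. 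The $\mu_2^*$ bound follows the same template with $\bW_2$-weighted trace: assuming $\mu_2^*>0$, complementary slackness gives $\tr(\bW_2\bR^*)=P_I$; on the active subspace $\bR^*=\bW_\mu^{-2}-\bW_1^{-1}$, so $\tr(\bW_2\bR^*)\le \tr(\bW_2\bW_\mu^{-2}\bP) - \tr(\bW_2\bW_1^{-1}\bP)$; bound $\tr(\bW_2\bW_\mu^{-2}\bP)\le r_2\mu_2^{-1}$ using $\bW_2\bW_\mu^{-2}=\bW_2(\mu_1\bI+\mu_2\bW_2)^{-1}\le \mu_2^{-1}\bI$ together with $\mathrm{rank}(\bW_2)=r_2$, and bound $\tr(\bW_2\bW_1^{-1}\bP)\ge \gl_m(\bW_2)\gl_1^{-1}(\bW_1)$ (at least one active mode, $\bW_2\ge\gl_m(\bW_2)\bI$, $\bW_1^{-1}\ge\gl_1^{-1}(\bW_1)\bI$). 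This gives $P_I\le r_2\mu_2^{-1} - \gl_m(\bW_2)/\gl_1(\bW_1)$, hence $\mu_2^*\le (P_I/r_2 + \gl_m(\bW_2)/\gl_1(\bW_1))^{-1}$.

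The main obstacle will be handling the pseudo-inverse and the support of $\bR^*$ carefully: the representation "$\bR^*=\bW_\mu^{-2}-\bW_1^{-1}$ on the active subspace" needs to be justified from \eqref{eq.thm.R*.1} when $\bW_\mu$ is singular, and one must be careful that the projection $\bP$ onto the active eigenspace of $\bR^*$ is contained in the range of $\bW_\mu$ so that the operator inequalities and trace bounds are valid; the Weyl/eigenvalue-interlacing step that converts $\tr(\bW_\mu^{-2}\bP)\le m\mu_1^{-1}$ (rather than the naive $\tr\bW_\mu^{-2}$ over the whole space, which could be infinite if $\bW_\mu$ is singular) is where the rank bookkeeping must be done correctly. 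Once the "active-subspace representation" is in place, the remaining estimates are elementary applications of $\bW_2\le\gl_1(\bW_2)\bI$, $\bW_2\ge\gl_m(\bW_2)\bI$, $\bW_1^{-1}\ge\gl_1^{-1}(\bW_1)\bI$, and the fact that there is at least one active mode whenever a constraint is tight with a positive multiplier.
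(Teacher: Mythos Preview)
Your plan has a genuine gap at exactly the point you flag as ``the main obstacle'': the identity ``$\bR^*=\bW_\mu^{-2}-\bW_1^{-1}$ restricted to the active subspace'' is \emph{not} true in general, and the trace inequality you extract from it does not follow. With $\bP$ the projection onto the positive eigenspace of $\bA=\bI-\bW_\mu\bW_1^{-1}\bW_\mu$, one has $(\bA)_+=\bP\bA\bP$ and hence
\[
\tr(\bR^*)=\tr(\bW_\mu^{-2}\bP)-\tr\bigl(\bW_\mu\bP\bW_\mu^{-2}\bP\bW_\mu\,\bW_1^{-1}\bigr),
\]
but the second term equals $\tr(\bP\bW_1^{-1}\bP)$ only when $\bP$ commutes with $\bW_\mu$, which fails once $\bW_1$ and $\bW_2$ do not share eigenvectors and $\bR^*$ is rank-deficient. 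So the step ``$\tr(\bR^*)\le \tr(\bW_\mu^{-2}\bP)-\tr(\bW_1^{-1}\bP)$'' is unjustified, and with it both the $\mu_1$ and the $\mu_2$ arguments collapse. (Your own weak bound $\mu_1^*\le m/P_T$ from $(\bA)_+\le\bI$ is correct, but is not the statement.)

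The paper avoids this difficulty entirely for the $\mu_1$ bound by \emph{not} using the explicit form \eqref{eq.thm.R*.1} at all. Instead it multiplies the stationarity KKT condition \eqref{eq.T.KKT} by $\bR^*$, takes the trace, and uses $\bM\bR^*=0$ together with complementary slackness to obtain the exact identity
\[
\mu_1 P_T+\mu_2 P_I=\tr\bigl((\bI+\bW_1\bR^*)^{-1}\bW_1\bR^*\bigr).
\]
The right-hand side is then bounded by $m\,\gl_1(\bW_1\bR^*)/(1+\gl_1(\bW_1\bR^*))$ and, via $\gl_1(\bW_1\bR^*)\le\gl_1(\bW_1)P_T$, by $mP_T/(P_T+\gl_1^{-1}(\bW_1))$; dropping $\mu_2 P_I\ge0$ gives the claimed bound on $\mu_1^*$. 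For $\mu_2^*$ the paper does use \eqref{eq.thm.R*.1}, but instead of your subspace decomposition it applies the scalar operator bound $(\bI-\bW_\mu\bW_1^{-1}\bW_\mu)_+\le\bigl(1-\gl_1^{-1}(\bW_1)(\mu_1+\mu_2\gl_m(\bW_2))\bigr)\bI$ together with $\tr(\bW_2(\bW_\mu^\dag)^2)\le r_2/\mu_2$; this sidesteps the commutation issue completely.
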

\begin{proof}
From the KKT conditions in \eqref{eq.T.KKT},
\bal
\label{eq.prop.muu.4}
(\bI+\bW_1\bR)^{-1}\bW_1\bR = \mu_1\bR+ \mu_2\bW_2\bR,\ \
\mu_1 P_T+ \mu_2 P_I = tr((\bI+\bW_1\bR)^{-1}\bW_1\bR)
\eal
Let $\bA$  be a matrix with positive eigenvalues, $\gl_i(\bA)\ge 0$. Since
\bal
\gl_i((\bI+\bA)^{-1}\bA) = \gl_i(\bA)(1+\gl_i(\bA))^{-1} \le \gl_1(\bA)(1+\gl_1(\bA))^{-1}
\eal
it follows that
\bal
tr((\bI+\bA)^{-1}\bA) \le m\gl_1(\bA)(1+\gl_1(\bA))^{-1}
\eal
Now use $\bA=\bW_1\bR$ to obtain
\bal
%\label{eq.T.KKT}
tr((\bI+\bW_1\bR)^{-1}\bW_1\bR) \le m\gl_1(\bW_1\bR)(1+\gl_1(\bW_1\bR))^{-1} \le m P_T(\gl_{11}^{-1}+P_T)^{-1}
\eal
where $\gl_{11}=\gl_1(\bW_1)$ and $\gl_1(\bW_1\bR) \le \gl_{11}P_T$, so that 2nd inequality in \eqref{eq.mu.l.u} follows from \eqref{eq.prop.muu.4}. Let $\gl_{2m}=\gl_m(\bW_2)$. Using \eqref{eq.thm.R*.1} under active IPC,
\bal
%\label{eq.T.KKT}
P_I= tr(\bW_2\bR^*) \le tr(\bW_2(\bW_{\mu}^{\dag})^2) (1- \gl_{11}^{-1}(\mu_1+\mu_2\gl_{2m}))
\le r_2 (\mu_2^{-1}-\gl_{2m}\gl_{11}^{-1})
\eal
from which 2nd inequality in \eqref{eq.mu2.l.u} follows, where we have used \bal
tr(\bW_2(\bW_{\mu}^{\dag})^2) \le \mu_2^{-1} tr(\bW_2\bW_2^{\dag})= \mu_2^{-1}r_2
\eal
and $\bW \ge \gl_m(\bW)\bI$ for any Hermitian $\bW$. If the IPC is inactive, $\mu_2=0$ and the inequality holds in obvious way.
\end{proof}

To proceed further, let
\bal
x_{\epsilon} = \mathrm{BA}[f(x),x_l,x_u,\epsilon]
\eal
formally denote an $\epsilon$-accurate solution of $f(x)=0$ given by the BA and let
\bal
f_1(\mu_1,\mu_2)= \mu_1(tr(\bR^*(\mu_1,\mu_2)) - P_T),\
f_2(\mu_1,\mu_2)= \mu_2(tr(\bW_2\bR^*(\mu_1,\mu_2))-P_I)
\eal
where $\bR^*(\mu_1,\mu_2)$ denotes $\bR^*$ in \eqref{eq.thm.R*.1} for \textit{given} $\mu_1,\ \mu_2$. Then, the optimal dual variables $\mu_1^*,\mu_2^*$ satisfy $f_1(\mu_1^*,\mu_2^*)=0$ and $f_2(\mu_1^*,\mu_2^*)=0$. For a given $\mu_2^*$, one could use the BA to formally express $\mu_1^*$ as
\bal
\mu_1^* = \mathrm{BA}[f(x)=f_1(x,\mu_2^*),\mu_l,\mu_{1u},0]
\eal
where, from \eqref{eq.mu.l.u}, $\mu_l=0$, and likewise for $\mu_2^*$ (since the convergence of the BA is exponentially fast, the inaccuracy $\epsilon$ can be set to be arbitrary small in practice so that we disregard here this small inaccuracy by setting $\epsilon=0$ to simplify the analysis; numerical experiments support this approach). The following proposition shows that $f_1(x,\mu_2),\ f_2(\mu_1,x)$ have the property needed for the convergence of the BA as stated above. To this end, let $P_1(\mu_1,\mu_2)=tr(\bR^*(\mu_1,\mu_2))$, $P_2(\mu_1,\mu_2)=tr(\bW_2\bR^*(\mu_1,\mu_2))$, i.e. the transmit and interference powers for given $\mu_1,\mu_2$.

\begin{prop}
\label{prop.f12.prop}
Let $\mu_{10}$ be a solution of $f_1(x,\mu_2)= 0$ for a given $\mu_2$ and subject to $P_1(x,\mu_2)\le P_T$. Then, $f_1(\mu,\mu_2)\ge 0$ for any $\mu< \mu_{10}$ and $f_1(\mu_1,\mu_2)\le 0$ for any $\mu_1> \mu_{10}$. Likewise, if $\mu_{20}$ is a solution of $f_2(\mu_1,x)= 0$ for a given $\mu_1$ and subject to $P_2(\mu_1,x)\le P_I$, then $f_2(\mu_1,\mu_2)\ge 0$ for any $\mu_2< \mu_{20}$ and $f_2(\mu_1,\mu_2)\le 0$ for any $\mu_2> \mu_{20}$.
\end{prop}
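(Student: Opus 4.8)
The plan is to derive the sign statements for $f_1$ and $f_2$ from one monotonicity fact: for fixed $\mu_2$ the transmit power $P_1(\mu_1,\mu_2)=\tr(\bR^*(\mu_1,\mu_2))$ is non-increasing in $\mu_1$, and for fixed $\mu_1$ the interference power $P_2(\mu_1,\mu_2)=\tr(\bW_2\bR^*(\mu_1,\mu_2))$ is non-increasing in $\mu_2$. Given this, the proposition is a short case analysis; the real content is the monotonicity, which I obtain from the fact that, for each pair $(\mu_1,\mu_2)$, the matrix $\bR^*(\mu_1,\mu_2)$ in \eqref{eq.thm.R*.1} is a maximizer over $\bR\ge 0$ of the partial Lagrangian $L(\bR)=C(\bR)-\mu_1\tr(\bR)-\mu_2\tr(\bW_2\bR)$ (this is exactly the water-filling-type problem whose solution yields \eqref{eq.thm.R*.1} in the proof of Theorem \ref{thm.R*}).

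\emph{Reduction to monotonicity.} Fix $\mu_2$ and assume $P_1(\cdot,\mu_2)$ is non-increasing; recall $f_1(\mu_1,\mu_2)=\mu_1\bigl(P_1(\mu_1,\mu_2)-P_T\bigr)$. If $\mu_{10}>0$, then $f_1(\mu_{10},\mu_2)=0$ forces $P_1(\mu_{10},\mu_2)=P_T$; for $0<\mu_1<\mu_{10}$ monotonicity gives $P_1(\mu_1,\mu_2)\ge P_T$ and hence $f_1(\mu_1,\mu_2)\ge 0$, while $f_1(0,\mu_2)=0$; for $\mu_1>\mu_{10}$ monotonicity gives $P_1(\mu_1,\mu_2)\le P_T$ and hence $f_1(\mu_1,\mu_2)\le 0$. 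If $\mu_{10}=0$, the first assertion is vacuous, and the side condition $P_1(\mu_{10},\mu_2)\le P_T$ together with monotonicity gives $P_1(\mu_1,\mu_2)\le P_T$ for every $\mu_1>0$, hence $f_1(\mu_1,\mu_2)\le 0$. The statement for $f_2$ is identical after replacing $(\mu_1,P_1,P_T)$ by $(\mu_2,P_2,P_I)$.

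\emph{Monotonicity.} Fix $\mu_2$, let $\mu_1'>\mu_1$, and abbreviate $\bR=\bR^*(\mu_1,\mu_2)$, $\bR'=\bR^*(\mu_1',\mu_2)$. Since $\bR$ maximizes $L(\cdot)$ at parameters $(\mu_1,\mu_2)$ and $\bR'$ maximizes it at $(\mu_1',\mu_2)$,
\bal
C(\bR)-\mu_1\tr(\bR)-\mu_2\tr(\bW_2\bR)&\ge C(\bR')-\mu_1\tr(\bR')-\mu_2\tr(\bW_2\bR'),\\
C(\bR')-\mu_1'\tr(\bR')-\mu_2\tr(\bW_2\bR')&\ge C(\bR)-\mu_1'\tr(\bR)-\mu_2\tr(\bW_2\bR).
\eal
Adding these, all terms not involving $\mu_1$ cancel, leaving $(\mu_1'-\mu_1)\bigl(\tr(\bR)-\tr(\bR')\bigr)\ge 0$, i.e. $\tr(\bR')\le\tr(\bR)$; thus $P_1(\cdot,\mu_2)$ is non-increasing. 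Exchanging the roles of the two penalty terms — now holding $\mu_1$ fixed and increasing $\mu_2$ — all terms not involving $\mu_2$ cancel and one gets $\tr(\bW_2\bR')\le\tr(\bW_2\bR)$, i.e. $P_2(\mu_1,\cdot)$ is non-increasing in $\mu_2$. Note that this argument needs neither continuity nor differentiability of $P_1,P_2$.

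\emph{Main obstacle.} The one point requiring care is the claim that $\bR^*(\mu_1,\mu_2)$ in \eqref{eq.thm.R*.1} is a genuine \emph{global} maximizer of $L$ over the PSD cone, not merely a KKT stationary point. For $\mu_1>0$ this is clear because $\bW_{\mu}$ is then full-rank, $L$ is strictly concave, and the water-filling construction is its unconstrained optimizer; but when $\mu_1=0$ and $\bW_2$ is rank-deficient, $L$ is bounded above only if $\sN(\bW_2)\subseteq\sN(\bW_1)$, in which case $\bW_{\mu}^{\dag}$ must be read as the inverse on $\sR(\bW_2)$, and one should check that the $\mu_{10}=0$ case of the reduction is invoked only in this regime. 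Once this identification is secured, the two-inequality (variational) argument above is entirely routine.
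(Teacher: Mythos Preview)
Your reduction to monotonicity of $P_1(\cdot,\mu_2)$ and $P_2(\mu_1,\cdot)$, followed by the two-case analysis on $\mu_{10}$, is exactly what the paper does. Where you diverge is in the proof of monotonicity itself. The paper argues via sensitivity of Lagrange multipliers: it views the capacity as a concave function $C(P_1)$ of the total power, uses $\partial C/\partial P_1=\mu_1$, and concludes $\partial\mu_1/\partial P_1\le 0$ from concavity (and symmetrically for $P_2,\mu_2$). Your argument is instead the standard ``exchange'' or revealed-preference inequality: adding the two optimality inequalities for the partial Lagrangian at $(\mu_1,\mu_2)$ and $(\mu_1',\mu_2)$ yields $(\mu_1'-\mu_1)(\tr\bR-\tr\bR')\ge 0$ directly. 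This is a genuinely different route; it sidesteps any differentiability of $P_1,P_2$ (which the paper tacitly assumes) and does not require uniqueness of the maximizer, so in that sense it is cleaner and slightly more general. One small inaccuracy: $L$ is not strictly concave even when $\mu_1>0$ (the log-det term is only concave), but your argument never actually uses strict concavity---it only needs that $\bR^*(\mu_1,\mu_2)$ is \emph{a} maximizer of $L$, which it is whenever $\bW_\mu$ is full-rank. Your identification of the edge case $\mu_1=0$ with rank-deficient $\bW_2$ is correct; the paper's derivative argument has the same blind spot, and in both approaches the resolution is that this situation only occurs when $\sN(\bW_2)\subseteq\sN(\bW_1)$ (Proposition~\ref{prop.TPC.inact}), in which case the pseudo-inverse formulation restores the maximizer property on $\sR(\bW_2)$.
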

\begin{proof}
See Appendix.
\end{proof}

Thus, this proposition shows that the BA can be used to solve $f_1(x,\mu_2)=0$ for a given $\mu_2$ and likewise for $f_2(\mu_1,x)=0$.
Unfortunately, neither of the optimal dual variables is known in advance. Hence, we propose the following iterative bisection algorithm (IBA) which finds optimal dual variables without such advance knowledge.

\begin{algorithm}
\caption{Iterative Bisection Algorithm (IBA)}
\label{alg.IBA}
\begin{algorithmic}
\Require $f_1(\mu_1,\mu_2),\ f_2(\mu_1,\mu_2),\ \mu_{1u},\ \mu_{2u},\ \delta$
\State 1. Set $\mu_{20}=0$, $k=1$.
\Repeat
\State 2. Set $\mu_{1k}= \mathrm{BA}[f_1(x,\mu_{2(k-1)}),0,\mu_{1u},\delta]$.
\State 3. Set $\mu_{2k}= \mathrm{BA}[f_2(\mu_{1k},x),0,\mu_{2u},\delta]$.
\State 4. $k:=k+1$.
\Until stopping criterion is met.
\end{algorithmic}
\end{algorithm}

Note that the BA used in steps 2 and 3 will converge, as follows from Proposition \ref{prop.f12.prop}. A possible stopping criteria for this algorithm is $|f_{1(2)}(\mu_{1k},\mu_{2k})| \le \epsilon$ or when a number of steps exceeds maximum $k_{max}$. The following proposition shows that the IBA generates converging sequences of dual variables $\{\mu_{1k}\}$, $\{\mu_{2k}\}$ under a mild technical condition.

\begin{prop}
\label{prop.IBA.conv}
The sequences $\{\mu_{1k}\}_{k=1}^{\infty}$, $\{\mu_{2k}\}_{k=1}^{\infty}$ generated by the IBA above converge if $\delta=0$ and $P_{1(2)}(\mu_{1},\mu_{2})$ are decreasing functions of $\mu_{1}, \mu_{2}$. In particular, this holds in any of the following cases:

1. The IPC is redundant, in which case the IBA converges in 1 iteration.

2. $\bW_1$ and $\bW_2$ have the same eigenvectors.

3. $\bR^*(\mu_1,\mu_2)$ is full-rank.
\end{prop}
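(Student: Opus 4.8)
The plan is to establish convergence of the two coupled sequences by a monotonicity-plus-boundedness argument, exploiting the structure of the IBA as an alternating one-dimensional search. First I would recall the key fact (to be imported from Proposition~\ref{prop.f12.prop} and its proof) that, for fixed $\mu_2$, the BA in step~2 returns the unique $\mu_1$ with $P_1(\mu_1,\mu_2)=P_T$ whenever the TPC is active at that $\mu_2$ (and returns $\mu_1=0$ otherwise), and symmetrically for step~3. The crucial structural input is the hypothesis that $P_1(\mu_1,\mu_2)$ and $P_2(\mu_1,\mu_2)$ are each decreasing in both arguments. From this I would argue that the map $\mu_2\mapsto\mu_{1}(\mu_2)$ implicitly defined by step~2 is \emph{monotone}: if $\mu_2$ increases, then $P_1(\cdot,\mu_2)$ drops pointwise, so to restore $P_1=P_T$ one must decrease $\mu_1$; hence $\mu_2\mapsto\mu_1(\mu_2)$ is nonincreasing. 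Likewise $\mu_1\mapsto\mu_2(\mu_1)$ from step~3 is nonincreasing. The composition of two nonincreasing maps is nondecreasing, so the sub-iteration $\mu_{2(k-1)}\mapsto\mu_{2k}$ is governed by a nondecreasing self-map on the compact interval $[0,\mu_{2u}]$.

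Next I would pin down the starting point and use it to get monotone convergence. The IBA initializes $\mu_{20}=0$, which is the smallest possible value; after one application of the nondecreasing composed map we get $\mu_{22}\ge\mu_{20}=0$ (and similarly $\mu_{21}$ is obtained from $\mu_{20}=0$ in step~3). Then by induction, since the map is nondecreasing, the sequence $\{\mu_{2k}\}$ is monotonically nondecreasing; being bounded above by $\mu_{2u}$ from Proposition~\ref{prop.mu.bounds}, it converges. The corresponding sequence $\{\mu_{1k}\}$ is then $\mu_{1k}=\mu_1(\mu_{2(k-1)})$ with $\mu_1(\cdot)$ nonincreasing, so $\{\mu_{1k}\}$ is nonincreasing and bounded below by $0$, hence also converges. (One should also check that the BA calls in steps~2 and~3 are well posed, i.e. that $f_1(x,\mu_2)$ and $f_2(\mu_1,x)$ indeed have the sign-change property on $[0,\mu_{1u}]$, $[0,\mu_{2u}]$ — this is exactly Proposition~\ref{prop.f12.prop} together with the bounds of Proposition~\ref{prop.mu.bounds}.)

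For the three listed special cases I would verify the monotonicity hypothesis directly. In case~1 (IPC redundant), step~3 with the stopping behaviour returns $\mu_{21}=0$, so $\mu_{2k}=0$ for all $k$ and the single BA call in step~2 already produces the optimal $\mu_1$; convergence in one iteration is immediate and no monotonicity of $P_2$ is needed. In case~2 ($\bW_1,\bW_2$ share eigenvectors), I would use the explicit diagonal form \eqref{eq.thm.R*D.2}: $\gl_i^*=[(\mu_1+\gl_{2i}\mu_2)^{-1}-\gl_{1i}]_+$, so $P_1=\sum_i\gl_i^*$ and $P_2=\sum_i\gl_{2i}\gl_i^*$ are finite sums of terms each of which is clearly nonincreasing in $\mu_1$ and in $\mu_2$ (the map $t\mapsto(a+bt)^{-1}$ with $a,b\ge0$ is nonincreasing, and the positive-part truncation preserves this), giving the required decrease. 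In case~3 ($\bR^*$ full-rank), I would invoke \eqref{eq.R*.general}, $\bR^*=\bW_\mu^{-2}-\bW_1^{-1}$ with $\bW_\mu^2=\mu_1\bI+\mu_2\bW_2$: increasing either $\mu_1$ or $\mu_2$ increases $\bW_\mu^2$ in the positive-semidefinite order, hence decreases $\bW_\mu^{-2}$, hence decreases $\tr\bR^*=P_1$ and, since $\bW_2\ge0$, also decreases $\tr(\bW_2\bR^*)=P_2$ (using $\tr(\bW_2(\bA-\bB))\ge0$ when $\bA\ge\bB$).

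The main obstacle I anticipate is the edge behaviour of the alternating search when a constraint is active for some $\mu$ in the range and inactive for others — i.e. when the BA returns the boundary value $\mu_1=0$ or $\mu_2=0$ rather than an interior root. There one must check that the "monotone nondecreasing self-map on $[0,\mu_{2u}]$" picture survives the kink where $\mu_1(\mu_2)$ or $\mu_2(\mu_1)$ hits zero, and that the limit of the monotone sequences $(\mu_{1\infty},\mu_{2\infty})$ is genuinely a fixed point of the alternating iteration (so that, by continuity of $P_1,P_2$ in $(\mu_1,\mu_2)$, it satisfies the complementary-slackness system \eqref{eq.thm.R*.2}); continuity of the BA output in its fixed parameter, which follows from continuity of $f_1,f_2$ and uniqueness of the root under strict monotonicity, is what closes this gap.
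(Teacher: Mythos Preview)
Your proposal is correct and follows essentially the same route as the paper: both arguments establish that $\{\mu_{2k}\}$ is nondecreasing and $\{\mu_{1k}\}$ is nonincreasing (hence convergent by boundedness) using the assumed joint monotonicity of $P_1,P_2$ in $(\mu_1,\mu_2)$, and then verify that monotonicity directly in each of the three listed cases. Your framing via the composed nondecreasing self-map on $[0,\mu_{2u}]$ is slightly more abstract than the paper's step-by-step iteration tracking, and your PSD-ordering argument for case~3 is a cleaner variant of the paper's explicit eigenvalue-sum formulas for $P_1$ and $P_2$, but the substance is the same.
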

\begin{proof}
See Appendix.
\end{proof}

The following proposition shows that any stationary (and hence convergence) point of the IBA solves the dual optimality conditions in \eqref{eq.thm.R*.2}.

\begin{prop}
Any stationary point of the IBA is a solution of \eqref{eq.thm.R*.2} if $\delta=0$. Hence, the IBA converges to a solution of \eqref{eq.thm.R*.2} under the conditions of Proposition \ref{prop.IBA.conv}.
\end{prop}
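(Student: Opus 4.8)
The plan is to reduce everything to the defining property of the bisection output at zero tolerance. A \emph{stationary point} of the IBA is a pair $(\mu_1^*,\mu_2^*)$ reproduced by one more sweep of Steps 2--3, i.e.
\[
\mu_1^* = \mathrm{BA}[f_1(x,\mu_2^*),0,\mu_{1u},0],\qquad \mu_2^* = \mathrm{BA}[f_2(\mu_1^*,x),0,\mu_{2u},0].
\]
First I would invoke Proposition~\ref{prop.f12.prop}: for the frozen second argument $\mu_2^*$, the map $x\mapsto f_1(x,\mu_2^*)$ has exactly the sign pattern the BA requires, so with $\delta=0$ the BA on $[0,\mu_{1u}]$ returns the root $\mu_{10}$ of $f_1(\cdot,\mu_2^*)=0$ that also satisfies $P_1(\mu_{10},\mu_2^*)\le P_T$; by the first stationarity relation $\mu_1^*=\mu_{10}$, whence $f_1(\mu_1^*,\mu_2^*)=0$ and $\tr(\bR^*(\mu_1^*,\mu_2^*))\le P_T$. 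Running the same argument for $f_2(\mu_1^*,x)$ on $[0,\mu_{2u}]$ gives $f_2(\mu_1^*,\mu_2^*)=0$ and $\tr(\bW_2\bR^*(\mu_1^*,\mu_2^*))\le P_I$. These four relations are precisely the complementary-slackness equations \eqref{eq.thm.R*.2} together with the feasibility constraints under which they are stated, so $(\mu_1^*,\mu_2^*)$, with $\bR^*$ given by \eqref{eq.thm.R*.1}, solves \eqref{eq.thm.R*.2}.

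For the ``hence'' part, Proposition~\ref{prop.IBA.conv} supplies, under its hypotheses, limits $\mu_1^\infty=\lim_k\mu_{1k}$, $\mu_2^\infty=\lim_k\mu_{2k}$; it remains to show the limit is a stationary point. I would pass to the limit in $\mu_{1k}=\mathrm{BA}[f_1(x,\mu_{2(k-1)}),0,\mu_{1u},0]$ and $\mu_{2k}=\mathrm{BA}[f_2(\mu_{1k},x),0,\mu_{2u},0]$. The needed input is that the BA output is continuous in its frozen parameter: $\bR^*(\mu_1,\mu_2)$ in \eqref{eq.thm.R*.1} is continuous in $(\mu_1,\mu_2)$ (the pseudo-inverse $\bW_{\mu}^{\dag}$ and the truncation $(\cdot)_+$ being continuous on the relevant region), hence $P_1,P_2$ and therefore $f_1,f_2$ are continuous, and with the strict sign-change of Proposition~\ref{prop.f12.prop} the root the BA isolates depends continuously on the frozen argument. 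Letting $k\to\infty$ then yields $\mu_1^\infty=\mathrm{BA}[f_1(x,\mu_2^\infty),0,\mu_{1u},0]$ and $\mu_2^\infty=\mathrm{BA}[f_2(\mu_1^\infty,x),0,\mu_{2u},0]$, so $(\mu_1^\infty,\mu_2^\infty)$ is stationary and the first part concludes.

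I expect the delicate step to be this continuity, at dual values where $\bW_{\mu}=(\mu_1\bI+\mu_2\bW_2)^{1/2}$ degenerates (e.g.\ $\mu_1\to0$ with $\bW_2$ rank-deficient) or where an eigenvalue of $\bI-\bW_{\mu}\bW_1^{-1}\bW_{\mu}$ crosses zero, since there the rank of $\bR^*$ and the form of \eqref{eq.thm.R*.1} change; one must verify that $\tr(\bR^*)$ and $\tr(\bW_2\bR^*)$ stay continuous through such transitions, which holds because $(\cdot)_+$ is globally continuous. Should that bookkeeping become awkward, a cleaner route is monotonicity: under the hypothesis of Proposition~\ref{prop.IBA.conv} that $P_1,P_2$ decrease in $\mu_1,\mu_2$, the BA roots are monotone in the frozen argument and the IBA sequences are monotone and bounded, so the limit of a monotone sequence of BA roots is again the BA root at the limiting argument, which gives stationarity without invoking continuity of $\bR^*(\cdot)$ at the transition points.
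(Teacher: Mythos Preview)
Your proof of the first part is essentially identical to the paper's: define stationarity by the two BA fixed-point relations, and read off $f_1(\mu_1^*,\mu_2^*)=0$, $f_2(\mu_1^*,\mu_2^*)=0$ directly.

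For the ``hence'' part you are actually more careful than the paper, which simply asserts that a convergence point is stationary and stops there. Your continuity discussion (and the monotonicity fallback) supplies the justification the paper omits: passing to the limit in the BA update requires exactly the kind of argument you sketch. So your approach matches the paper's in spirit but fills a gap the paper leaves open; nothing is missing on your end.
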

\begin{proof}
Let $\mu_{1s},\ \mu_{2s}$ be a stationary point of the IBA, so that
\bal
\mu_{1s}= \mathrm{BA}[f_1(x,\mu_{2s}),0,\mu_{1u},0],\
\mu_{2s}= \mathrm{BA}[f_2(\mu_{1s},x),0,\mu_{2u},0]
\eal
It follows from 1st equality that $f_1(\mu_{1s},\mu_{2s})=0$ and $f_2(\mu_{1s},\mu_{2s})=0$ from 2nd one. Thus, $\mu_{1s},\ \mu_{2s}$ solves \eqref{eq.thm.R*.2}. Since a convergence point is stationary, it follows that the IBA converges to a solution of \eqref{eq.thm.R*.2}.
\end{proof}

While the analytical convergence results above are limited to $\delta=0$, $\delta>0$ is used in practice. Since the BA converges exponentially fast, very small $\delta$ can be selected in the IBA without significant increase in computational complexity of each step and hence the analysis serves as a reasonable approximation (due to the continuity of the problem and functions involved). Furthermore, numerous numerical experiments indicate that the IBA always converges, even when the conditions 1-3 of Proposition \ref{prop.IBA.conv} are not met (we were not able to observe a single case where it did not). In the majority of the studied cases, a small to moderate number of IBA iterations (1...50) is needed to achieve a high accuracy of $10^{-5}$, while up to 250 iterations are required in some exceptional cases with $\epsilon=10^{-10}$ (which is hardly required in practice).

%\newpage
%=====================================================================================
\section{Extension to Multi-User Environments}
\label{sec.multi}

In a typical multi-user environment, there are multiple users to which interference has to be limited, so that the problem in \eqref{eq.C.def} is solved under the following constraint set:
\bal
\label{eq.SR.3}
S_R=\{\bR: \bR\ge 0,\ tr(\bR) \le P_T,\ tr(\bW_{2k}\bR) \le P_{Ik},\ k=1..K\},
\eal
where $\bW_{2k}=\bH_{2k}^+\bH_{2k}$ and $P_{Ik}$ represent channel to user $k$ and respective interference constraint power, $K$ is the number of users, see Fig. \ref{fig.muli-user}. Using the same approach as in Theorem \ref{thm.R*}, it is straightforward to see that Theorem \ref{thm.R*} applies with
\bal
\bW_{\mu}=(\mu_1\bI+ \sum_k \mu_{2k}\bW_{2k})^{\frac{1}{2}}
\eal
where dual variables are found from the following system of non-linear equations:
\bal
\label{eq.thm.R*.2m}
\mu_1(tr(\bR^*) - P_T)=0,\ \mu_{2k}(tr(\bW_{2k}\bR^*)-P_{Ik})=0,\ k=1..K
\eal
subject to $\mu_1,\mu_{2k} \ge 0,\ tr(\bR^*)\le P_T$, $tr(\bW_{2k}\bR^*) \le P_{Ik}$. In particular, the iterative bisection algorithm of Section \ref{sec.IBA} can be used with a proper extension to accommodate multiple users.

\begin{figure}[t]%[htbp]
	\centerline{\includegraphics[width=3.5in]{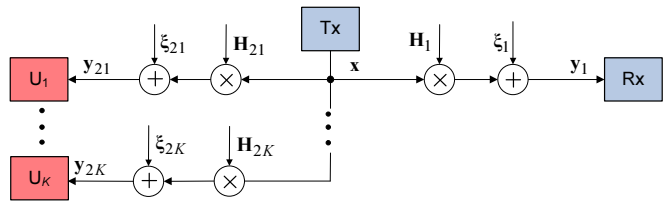}}
	\caption{A block diagram of multi-user Gaussian MIMO channel under interference constraints. $\bH_1$ and $\bH_{2k}$ are the channel matrices to the Rx and $k$-th user respectively. Interference constraints are to be satisfied for each user.}
		\label{fig.muli-user}
\end{figure}

One may also consider the total (rather than individual) interference power constraint so that
\bal
\label{eq.SR.4}
S_R=\{\bR: \bR\ge 0,\ tr(\bR) \le P_T,\ \sum_k tr(\bW_{2k}\bR) \le P_{I} \}
\eal
In this case, all the above results apply with the substitution
\bal
%\label{eq.SR.4}
\bW_2 = \sum_k \bW_{2k}
\eal

%\newpage
%=====================================================================================
\section{Examples}
\label{sec.Numerical Experiments}

In this section, we present some numerical results that illustrate the analytical results above as well as the performance of the IBA.

\textbf{Example 1:} In this example, $P_I=1$ and
\bal %\notag
\label{eq.W.example.1}
\bW_1 =
\left[
\begin{array}{cc}
    1 & 0\\
    0 & 0.5\\
\end{array}
\right],\
\bW_2 =
\left[
\begin{array}{cc}
    1 & -0.5\\
   -0.5 & 1\\
\end{array}
\right]
\eal
Fig. 3 shows the number of iterations of the IBA required to solve \eqref{eq.thm.R*.2} with the accuracy $\epsilon=10^{-5}$ vs. $P_T$; the optimal dual variables $\mu_1^*,\ \mu_2^*$ as well as the actual Tx and interference powers ($P_1=tr(\bR^*(\mu_1^*,\mu_2^*))$ and  $P_2=tr(\bW_2\bR^*(\mu_1^*,\mu_2^*))$ respectively) are also shown. Note the transition from the Tx power-limited regime (inactive IPC) to the interference-limited regime (inactive TPC) as $P_T$ increases, which is visible when the respective dual variable sharply decreases to 0. In particular, the IPC is inactive when $P_T< 1.1$ and the TPC is inactive when $P_T>1.8$, while both constraints are active otherwise. As $P_T$ increases, the IPC becomes active at about $P_T\approx 1.1$, at which point the required number of iteration sharply increases from 1 to 36 and then to 82, gradually decreasing to a small number of 2. When the IPC is inactive, the number of iterations is 1, in agreement with Proposition \ref{prop.IBA.conv}. As this example demonstrates, anyone of the constraints can be inactive depending on the $P_T, P_I$ and channel matrices.

\begin{figure}[t]%[htbp]
\centerline{\includegraphics[width=3.3in]{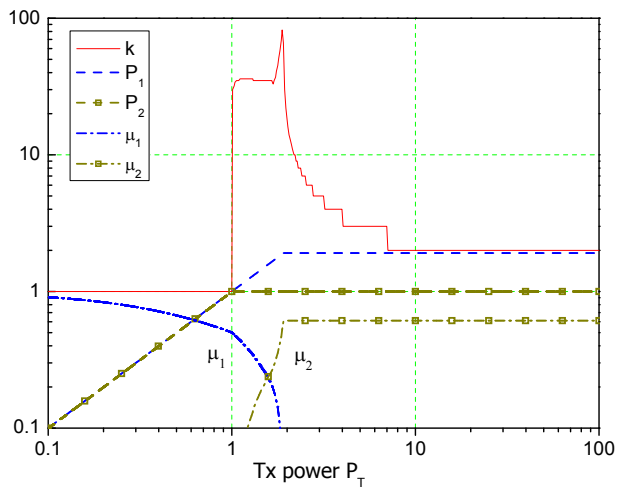}}
\caption{Convergence of the IBA, i.e. the number $k$ of iterations required to achieve $\epsilon=10^{-5}$ vs. $P_T$; $\bW_1$ and $\bW_2$ are as in \eqref{eq.W.example.1}, $P_I=1$. $P_1, P_2, \mu_1^*, \mu_2^*$ are also shown.}
\label{fig.1}
\end{figure}

Fig. \ref{fig.C} shows the capacity under the joint (TPC+IPC) constraints for the channel of Fig. \ref{fig.1}, along with the capacities under the TPC ($C_{WF}$), which is given by the WF procedure, and the IPC ($C_{IPC}$) alone. Note that $C$ is upper bounded in general by $C_{WF}$ and $C_{IPC}$,
\bal
\label{eq.C.bound}
C \le \min\{C_{WF},C_{IPC}\}
\eal
and that this bound is tight: if the IPC is inactive (power-limited regime), then $C=C_{WF}$, and if the TPC is inactive (interference-limited regime), then $C=C_{IPC}$, so that the inequality is strict only in a (small) transition region (when both constraints are active simultaneously) and hence the following approximation can be used over the entire range of $P_T$:
\bal
\label{eq.C.approx}
C \approx \min\{C_{WF},C_{IPC}\}
\eal
Note also that the capacity does not grow unbounded, in agreement with Proposition \ref{prop.C.inf} (since $\bW_2$ is full-rank, so that $\sN(\bW_2)$ is empty and hence $\sN(\bW_2) \in \sN(\bW_1)$). This changes significantly if $\bW_2$ is rank-deficient: the TPC is always active and the capacity grows unbounded, as the next example shows.

\begin{figure}[t]%[htbp]
\centerline{\includegraphics[width=3.3in]{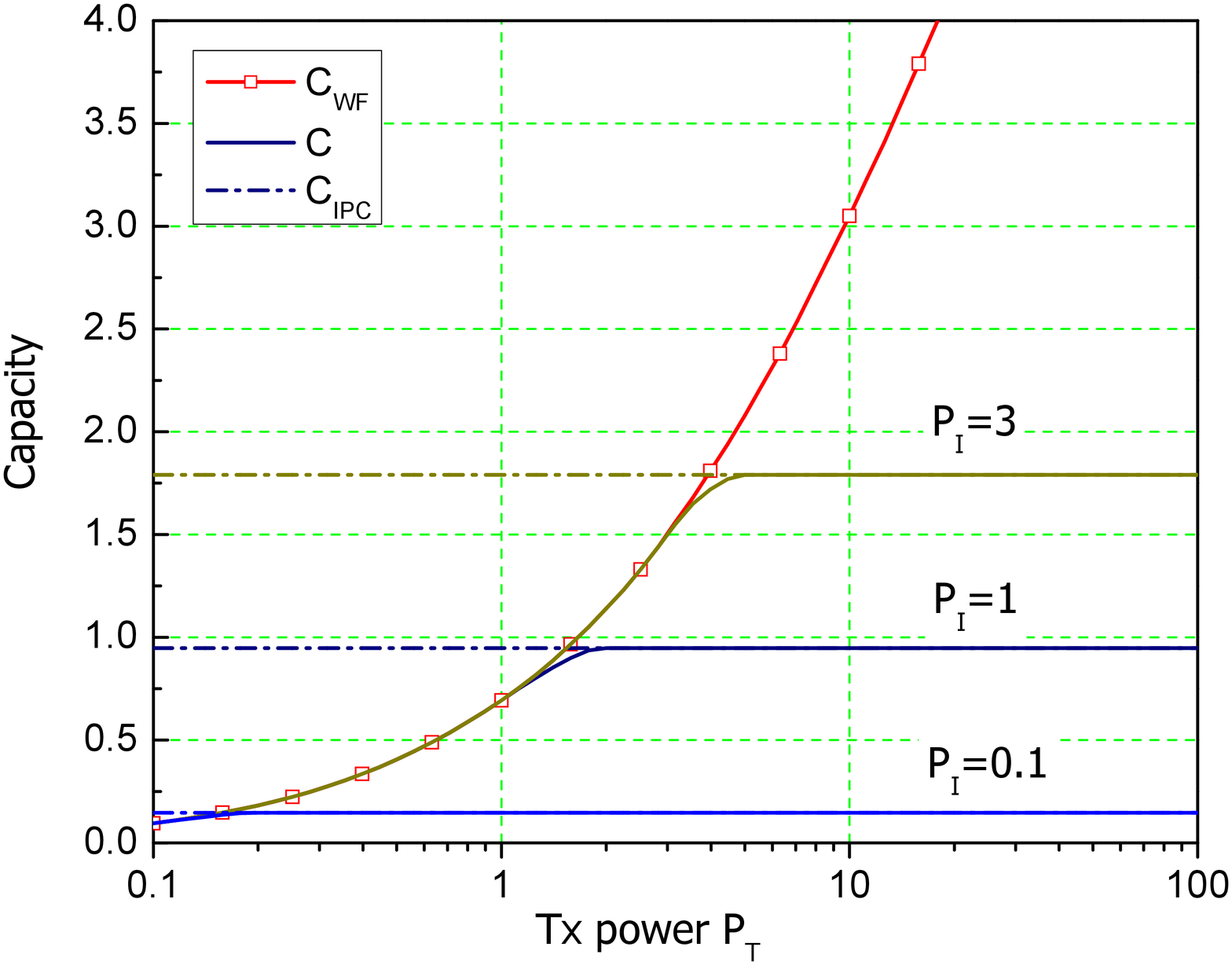}}
\caption{The capacity under the TPC ($C_{WF}$), IPC ($C_{IPC}$) and joint TPC+IPC ($C$) constraints for the same setting as in Fig. \ref{fig.1}; $P_I=0.1, 1$ and 3.}.
\label{fig.C}
\end{figure}

\begin{figure}[t]%[htbp]
\centerline{\includegraphics[width=3.3in]{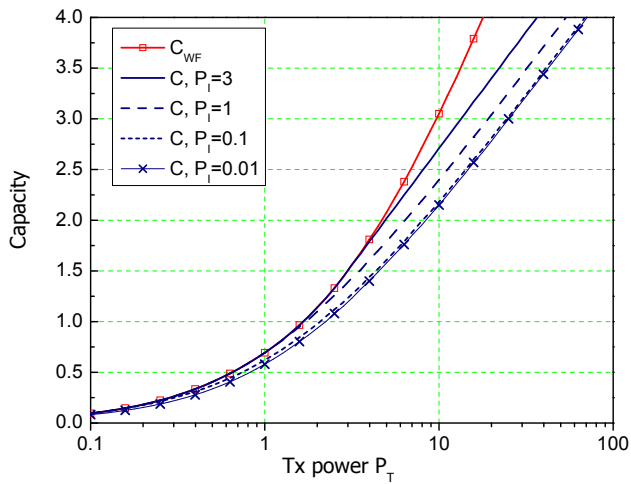}}
\caption{The capacity under the TPC ($C_{WF}$) and and the joint TPC+IPC ($C$) constraints for $\bW_1$ as in \eqref{eq.W.example.1} and $\bW_2$ as in \eqref{eq.W.example.2}; $P_I=0.01, 0.1, 1$ and 3.}.
\label{fig.C2}
\end{figure}

\textbf{Example 2:} In this example
\bal %\notag
\label{eq.W.example.2}
\bW_2 =
\left[
\begin{array}{cc}
    1 & -1\\
   -1 & 1\\
\end{array}
\right]
\eal
so that $\bW_2$ is rank-deficient, and $\bW_1$ is as in Example 1. As Fig. \ref{fig.C2} shows, the capacity grows unbounded for any $P_I$, even small one, in agreement Corollary \ref{cor.C.inf} (since $r(\bW_1) > r(\bW_2)$) or Proposition \ref{prop.C.inf} (since $\sN(\bW_2) \notin \sN(\bW_1)$). In this case, the bound in \eqref{eq.C.bound} is tight and the approximation in \eqref{eq.C.approx} is valid at low SNR only, since $C_{IPC}=\infty$ (as $\bW_1$ is full-rank while $\bW_2$ is rank-deficient).

Comparing Fig. \ref{fig.C2} to Fig. \ref{fig.C}, one observes that while decreasing $P_I$ decreases the capacity in both cases, the behaviour is qualitatively different: the capacity saturates and variations in $P_I$ have major impact on the saturation level in Fig. \ref{fig.C} while the capacity grows unbounded in Fig. \ref{fig.C2} and variations in $P_I$ have moderate or small impact on its value (negligible if $P_I \le 0.1$). In both cases, the channel matrix of primary user has a major impact on the capacity at high power/SNR regime, while being negligible at low SNR. Its null space (or rank) determines the qualitative behaviour of the capacity at high power/SNR regime.

It should also be noted that the optimal covariance $\bR^*$ is not diagonal, even thoung $\bW_1$ is, when the IPC is active - a sharp distinction to the TPC constraint only, where $\bR^*$ and $\bW_1$ have the same eigenvectors so that diagonal $\bW_1$ implies diagonal $\bR^*$. Hence, introducing the IPC makes independent signaling sub-optimal for independent channels in general (unless $\bW_2$ is also diagonal or if the IPC is redundant).

\textbf{Example 3:} To demonstrate that $\bR^*$ is not necessarily unique under the IPC (a stark difference to the standard WF solution, where the optimal covariance is always unique, unless the capacity is zero), consider the following channel:
\bal %\notag
\label{eq.W.example.3}
\bW_1 = \diag\{w_1,0\},\ \bW_2 = \diag\{w_2,0\},\ w_2 P_T > P_I
\eal
It is straightforward to see that an optimal covariance is
\bal %\notag
%\label{eq.W.example.3}
\bR^* = \diag\{w_2^{-1}P_I, a\},\ 0 \le a \le P_T - w_2^{-1}P_I
\eal
so that it is not unique, but the capacity is:
\bal
C=\ln(1+ w_1 w_2^{-1}P_I)
\eal
for any $a$. While $r(\bR^*)=1$ for $a=0$ so that \eqref{eq.prop.rR} does hold, this is not the case for $a>0$, which is in stark contrast to the standard WF solution, where \eqref{eq.prop.rR} always holds. Note that this unusual property disappears if $w_2 P_T \le P_I$, in which case the IPC is redundant and the standard WF solution applies:
\bal
\bR^* = \diag\{P_T, 0\},\ C=\ln(1+w_1 P_T).
\eal

While the examples above are limited to $m=2$, the results are representative, i.e. similar tendencies also hold for larger $m$ and different channels.

%\newpage
%===================================================================================
\section{Conclusion}

Optimal signaling over the Gaussian MIMO channel is considered under interference constraints. The closed-form solution for an optimal Tx covariance is presented in the general case (up to dual variables). The iterative bisection algorithm is developed to evaluate numerically the dual variables in the general case and its convergence is proved for some special cases. A number of explicit closed-form solutions (without dual variables) are obtained, including full-rank, rank-1 (beamforming) and when the channels have the same eigenvectors. Sufficient and necessary conditions for the TPC and IPC being active/inactive (or redundant) are given and their interplay is investigated. It is pointed out that the TPC and IPC can be active simultaneously so that neither condition can be absorbed into another in general, as was sometimes suggested in the literature. Null space of the channel matrix of external (primary) user has a major impact on the capacity at high SNR while being negligible at low SNR. These analytical results can serve as building blocks for the analysis, design and optimization of multi-user MIMO networks in interference-limited environments, as in e.g. 5G scenarios with aggressive frequency re-use, HetNets and licensed/unlicensed usage to improve spectral efficiency.

%===================================================================================
\section{Appendix}

\vspace*{-.5\baselineskip}
\subsection{Proof of Theorem \ref{thm.R*}}

Since the problem is convex and Slater's condition holds, the KKT conditions are both sufficient and necessary for optimality \cite{Boyd-04}. They take the following form:
\bal
\label{eq.T.KKT}
&-(\bI+\bW_1\bR)^{-1}\bW_1-\bM+\mu_1\bI+ \mu_2\bW_2 = 0\\
\label{eq.T.KKT.2}
&\bM\bR=0,\ \mu_1(tr(\bR)-P_T)=0,\ \mu_2(tr(\bW_2\bR)-P_I)=0,\\
&\bM\ge 0,\ \mu_1\ge 0,\ \mu_2\ge 0\\
&tr(\bR) \le P_T,\ tr(\bW_2\bR) \le P_I,\ \bR\ge 0
\eal
where $\bM$ is Lagrange multiplier responsible for the positive semi-definite constraint $\bR\ge 0$. We consider first the case of full-rank $\bW_{\mu}$ (i.e. either $\mu_1>0$ or/and $\bW_2>0$), so that $\bW_{\mu}^{\dag}=\bW_{\mu}^{-1}$. Let us introduce new variables: $\tilde{\bR}= \bW_{\mu}\bR\bW_{\mu}$, $\tilde{\bW_1}= \bW_{\mu}^{-1}\bW_1\bW_{\mu}^{-1}$, $\tilde{\bM}= \bW_{\mu}^{-1}\bM\bW_{\mu}^{-1}$. It follows that $\tilde{\bM}\tilde{\bR}=0$ and \eqref{eq.T.KKT} can be transformed to
\bal
%\label{eq.T.KKT}
&(\bI+\tilde{\bW_1}\tilde{\bR})^{-1}\tilde{\bW_1} +\tilde{\bM} = \bI
\eal
for which the solution is
\bal
\label{eq.T.KKT.Rtil}
\tilde{\bR} = (\bI-\tilde{\bM})^{-1} - \tilde{\bW}_1^{-1} = (\bI- \tilde{\bW}_1^{-1})_+
\eal
(this can be established in the same way as for the standard WF in \eqref{eq.RWF}). Transforming back to the original variables results in \eqref{eq.thm.R*.1}. \eqref{eq.thm.R*.2} are complementary slackness conditions in \eqref{eq.T.KKT.2}; \eqref{eq.C} follows, after some manipulations, by using $\bR^*$ of \eqref{eq.thm.R*.1} in $C(\bR)$.

The case of singular $\bW_{\mu}$ is more involved. It implies $\mu_1=0$ so that $\bW_{\mu}=(\mu_2\bW_2)^{\frac{1}{2}}$. It follows from the KKT condition in \eqref{eq.T.KKT} that, for the redundant TPC ($\mu_1=0$),
\bal
\label{eq.thm.Inac.TPC.1}
\bQ_1(\bI+\bQ_1\bR\bQ_1)^{-1}\bQ_1+\bM=\mu_2\bW_2
\eal
where $\bQ_1=\bW_1^{1/2}$. Let $\bx\in \mathcal{N}(\bW_2)$, i.e. $\bW_2\bx=0$, then
\bal
%\label{eq.prop.Inac.TPC.1}
\bx^+\bQ_1(\bI+\bQ_1\bR\bQ_1)^{-1}\bQ_1\bx +\bx^+\bM\bx=0
\eal
so that $\bx^+\bM\bx=0$ and $\bQ_1\bx=0$, since $\bM\ge0$ and $\bI+\bQ_1\bR\bQ_1 > 0$. Thus, $\mathcal{N}(\bW_2) \in \mathcal{N}(\bQ_1)=\mathcal{N}(\bW_1)$ and $\mathcal{N}(\bW_2) \in \mathcal{N}(\bM)$, i.e.
\bal
\label{eq.thm.Inac.TPC.3}
\mathcal{N}(\bW_2) \in \mathcal{N}(\bW_1) \cap \mathcal{N}(\bM)
\eal
and this condition is also necessary for the TPC to be redundant. Using \eqref{eq.thm.Inac.TPC.1}, \eqref{eq.thm.Inac.TPC.3} and introducing new variables
\bal
\bLam_2=\bU_2^+\bW_2\bU_2,\ \tilde{\bR}=\bU_2^+\bR\bU_2, \tilde{\bQ}_1=\bU_2^+\bQ_1\bU_2,\ \tilde{\bM}=\bU_2^+\bM\bU_2,
\eal
where $\bU_2$ is a unitary matrix of eigenvectors of $\bW_2$, one obtains
\bal
\bLam_2=\left(
          \begin{array}{cc}
            \bLam_{2+} & 0 \\
            0 & 0 \\
          \end{array}
        \right),
\tilde{\bQ}_1=\left(
          \begin{array}{cc}
            \bQ_{1+} & 0 \\
            0 & 0 \\
          \end{array}
        \right),
\tilde{\bM}=\left(
          \begin{array}{cc}
            \bM_{+} & 0 \\
            0 & 0 \\
          \end{array}
        \right),
\tilde{\bR}=\left(
          \begin{array}{cc}
            \bR_{+} & \bR_{12}\\
            \bR_{21} & \bR_{22} \\
          \end{array}
        \right)
\eal
where $\bLam_{2+}>0$ is a diagonal matrix of strictly positive eigenvalues of $\bW_2$, so that \eqref{eq.thm.Inac.TPC.1} can be transformed to
\bal
\label{eq.Inac.TPC.2}
\bQ_{1+}(\bI+\bQ_{1+}\bR_+\bQ_{1+})^{-1} \bQ_{1+} + \bM_+=\mu_2\bLam_{2+}>0
\eal
Using
\bal
%\label{eq.Inac.TPC.2}
\bQ_{1+}(\bI+\bQ_{+1}\bR_+\bQ_{1+})^{-1} \bQ_{1+} = (\bI+\bW_{1+}\bR_+)^{-1} \bW_{1+}
\eal
where $\bW_{1+}= \bQ_{1+}^2$ and adopting \eqref{eq.T.KKT.Rtil}, \eqref{eq.thm.R*.1}, one obtains
\bal
\label{eq.T.KKT.Rtil+}
\bR_+ = \bLam_{2+}^{-\frac{1}{2}}(\mu_2^{-1}\bI- \bLam_{2+}^{\frac{1}{2}}\bW_{1+}^{-1}\bLam_{2+}^{\frac{1}{2}})_+ \bLam_{2+}^{-\frac{1}{2}}
\eal
Since
\bal
P_T \ge tr(\bR) =tr(\tilde{\bR}) \ge tr(\bR_+),\ P_I \ge tr(\bW_2\bR) =tr(\bLam_2\tilde{\bR}) = tr(\bLam_2\bR_+)
\eal
one can set, without loss of optimality, $\bR_{22}=0$, $\bR_{12}=0$, $\bR_{21}=0$, and transform \eqref{eq.T.KKT.Rtil+} to
\bal
\label{eq.T.KKT.Rtil2}
\tilde{\bR} = (\bLam_{2}^{\dag})^{\frac{1}{2}}(\mu_2^{-1}\bI- \bLam_{2}^{\frac{1}{2}}\tilde{\bW}_{1}^{-1}\bLam_{2}^{\frac{1}{2}})_+ (\bLam_{2}^{\dag})^{\frac{1}{2}}
\eal
and hence, as desired,
\bal
\label{eq.T.KKT.R2}
\bR= \bU_2\tilde{\bR}\bU_2^+ = \bW_{\mu}^{\dag}(\bI- \bW_{\mu}\bW_{1}^{-1}\bW_{\mu})_+ \bW_{\mu}^{\dag}
\eal

\subsection{Proof of Proposition \ref{prop.C.inf}}

To prove the "if" part, observe that $\sN(\bW_2) \notin \sN(\bW_1)$ implies $\exists \bu: \bW_2\bu=0, \bW_1\bu \neq 0$. Now set $\bR=P_T\bu\bu^+$, for which $tr(\bR) =P_T, tr(\bW_2\bR)=0$, so it is feasible for any $P_T, P_I$. Furthermore,
\bal
C \ge C(\bR) = \log(1+P_T\bu^+\bW_1\bu) \rightarrow \infty
\eal
as $P_T \rightarrow \infty$, since $\bu^+\bW_1\bu >0$.

To prove the "only if" part, assume that $\sN(\bW_2) \in \sN(\bW_1)$. This implies that $\sR(\bW_1) \in \sR(\bW_2)$ (since $\sR(\bW)$ is the complement of $\sN(\bW)$ for Hermitian $\bW$). Let
\bal
\bW_k = \bU_{k+} \bLam_{k}\bU_{k+}^+,\ k=1,2
\eal
where $\bU_{k+}$ is a semi-unitary matrix of active eigenvectors of $\bW_k$ and diagonal matrix $\bLam_{k}$ collects its strictly-positive eigenvalues. Notice that, from the IPC,
\bal
P_I \ge tr(\bW_2\bR) = tr(\bLam_2\bU_{2+}^+\bR\bU_{2+}) \ge \gl_{r_2} tr(\bU_{2+}^+\bR\bU_{2+})
\eal
where $\gl_{r_2}>0$ is the smallest positive eigenvalue of $\bW_2$, so that
\bal
 \gl_1(\bU_{2+}^+\bR\bU_{2+}) \le P_I/\gl_{r_2} < \infty
\eal
for any $P_T$. On the other hand, $\sR(\bW_{1}) \in \sR(\bW_2)$ implies $span\{\bU_{1+}\} \in span\{\bU_{2+}\}$ and hence
\bal
 \gl_1(\bU_{1+}^+\bR\bU_{1+}) \le \gl_1(\bU_{2+}^+\bR\bU_{2+}) \le P_I/\gl_{r_2} < \infty
\eal
so that
\bal
C(P_T) &= \log|\bI+ \bLam_1\bU_{1+}^+\bR^*\bU_{1+}| = \sum_i \log(1+\gl_i(\bLam_1\bU_{1+}^+\bR^*\bU_{1+}))\\ \notag
&\le m \log(1+\gl_1(\bW_1)\gl_1(\bU_{1+}^+\bR^*\bU_{1+})) \le m \log(1+\gl_1(\bW_1)P_I/ \gl_{r_2}) < \infty
\eal
for any $P_T$, as required.

\vspace*{-1\baselineskip}
\subsection{Proof of Proposition \ref{prop.W2.r1}}

Start with the matrix inversion Lemma to obtain
\bal
(\mu_1\bI +\mu_2\gl_2\bu_2\bu_2^+)^{-1} = ((\mu_1 &+\mu_2\gl_2)^{-1}- \mu_1^{-1})\bu_2\bu_2^+ +\mu_1^{-1}\bI
\eal
so that \eqref{eq.prop.W2-r1.R*2} follows from \eqref{eq.R*.general}. Since $\bW_1$ is full-rank and $\bW_2$ is rank-1, it follows that the TPC is always active, $\mu_1>0$ and $tr(\bR)=P_T$, from which one obtains
\bal
\label{eq.prop.W2-r1.R*.4}
m\mu_1^{-1} -\alpha - tr(\bW^{-1}_1) = P_T
\eal
When the IPC is active, $tr(\bW_2\bR)=P_I$, it follows that
\bal
\label{eq.prop.W2-r1.R*.5}
\gl_2(\mu_1+\mu_2\gl_2)^{-1}= P_I +\gl_2\bu_2^+\bW^{-1}_1\bu_2
\eal
Solving \eqref{eq.prop.W2-r1.R*.4} and \eqref{eq.prop.W2-r1.R*.5} for $\mu_1$, one obtains 1st equality in \eqref{eq.prop.W2-r1.5}; using it in \eqref{eq.prop.W2-r1.R*.5} results in 2nd equality in \eqref{eq.prop.W2-r1.5}. \eqref{eq.prop.W2-r1.4} and 1st inequality in \eqref{eq.prop.W2-r1.3} ensure that $\bR^* >0$, since
\bal
\label{eq.prop.W2-r1.R*.6}
\mu_1^{-1} > \gl_1(\bW^{-1}_1) +\alpha \ge \gl_1(\bW^{-1}_1+\alpha\bu_2\bu_2^+)
\eal
where 1st inequality is due to 1st inequality in \eqref{eq.prop.W2-r1.3} and \eqref{eq.prop.W2-r1.R*.5} while 2nd inequality is from $\gl_1(\bA+\bB) \le \gl_1(\bA)+\gl_1(\bB)$ where $\bA,\ \bB$ are Hermitian matrices (see e.g. \cite{Horn-85}). It follows from \eqref{eq.prop.W2-r1.R*.6} that $\mu_1^{-1}\bI > \bW^{-1}_1 +\alpha\bu_2\bu_2^+$ and hence $\bR^*>0$, and that $\mu_1>0$, as required. 2nd inequality in \eqref{eq.prop.W2-r1.3} ensures that the IPC is active, $\mu_2>0$.

To obtain \eqref{eq.prop.W2-r1.R*1}, observe that $\bR_{WF}$ is feasible under \eqref{eq.prop.W2-r1.1} and \eqref{eq.prop.W2-r1.1a}:
\bal
tr(\bR_{WF}) =P_T,\ tr(\bW_2\bR_{WF}) \le P_I,\ \bR_{WF} >0.
\eal
Since it is a solution without the IPC (as the standard full-rank WF solution), it is also optimal under the IPC.

\vspace*{-.51\baselineskip}
\subsection{Proof of Proposition \ref{prop.rR}}

We consider first the case when $\bW_{\mu}$ is full-rank, i.e. when either the TPC is active, $\mu_1>0$, or/and $\bW_2>0$. It follows from \eqref{eq.T.KKT} that
\bal
%\label{eq.T.KKT}
(\bI+\bW_1\bR^*)^{-1}\bW_1\bR^* = \bW_{\mu}^2\bR^*
\eal
so that, since $(\bI+\bW_1\bR)$ and $\bW_{\mu}^2$ are full-rank,
\bal
%\label{eq.T.KKT}
r(\bR^*) &= r(\bW_{\mu}^2\bR^*) = r(\bW_1\bR^*) \le \min\{r(\bW_1),r(\bR^*)\} \le r(\bW_1)
\eal

The case of rank-deficient $\bW_{\mu}^2$ (i.e. when $\mu_1=0$ and $\bW_2$ is rank-deficient) is more involved. In this case, it follows from Proposition \ref{prop.TPC.inact} that $\mathcal{N}(\bW_2) \in \mathcal{N}(\bW_1)$ and hence $\mathcal{R}(\bW_1)\in \mathcal{R}(\bW_2)$ (if $\bW$ is Hermitian, $\mathcal{R}(\bW)$ is the complement of $\mathcal{N}(\bW)$), from which the following equivalency can be established, which is instrumental in the proof.

\begin{prop}
\label{prop.NW2}
If $\bW_2$ is rank-deficient and the TPC is redundant for the problem (P1) in \eqref{eq.C.def} under the constraint in \eqref{eq.SR.2}, then (P1) has the same value as the following problem (P2):
\bal
\label{eq.C.tilda}
(P2):\ \max_{\tilde{\bR} \ge 0} \tilde{C}(\tilde{\bR})\ \mbox{s.t.}\ tr(\tilde{\bLam}_2\tilde{\bR}) \le P_I,\ tr(\tilde{\bR}) \le P_T
\eal
where $\tilde{C}(\tilde{\bR})= |\bI+ \tilde{\bW}_1\tilde{\bR}|$, $\tilde{\bW}_1 = \bU_{2+}^+\bW_1\bU_{2+}$, $\tilde{\bLam}_2 = \bU_{2+}^+\bW_2\bU_{2+}>0$ is a diagonal matrix of strictly-positive eigenvalues of $\bW_2$ and $\bU_{2+}$ is a semi-unitary matrix whose columns are the corresponding active eigenvectors of $\bW_2$. Furthermore, an optimal covariance $\bR^*$ of (P1) can be expressed as follows:
\bal
\label{eq.R*2}
\bR^* = \bU_{2+}\tilde{\bR^*}\bU_{2+}^+
\eal
where $\tilde{\bR^*}$ is a solution of \eqref{eq.C.tilda}:
\bal
\label{eq.R.tild}
\tilde{\bR}^* = \tilde{\bLam}_2^{-\frac{1}{2}} (\mu_2^{-1}\bI-\tilde{\bLam}_2^{\frac{1}{2}}\tilde{\bW}_1^{-1} \tilde{\bLam}_2^{\frac{1}{2}})_+ \tilde{\bLam}_2^{-\frac{1}{2}}
\eal
and $\mu_2> 0$ is found from the IPC:
\bal
tr (\mu_2^{-1}\bI- \tilde{\bLam}_2^{\frac{1}{2}}\tilde{\bW}_1^{-1} \tilde{\bLam}_2^{\frac{1}{2}})_+ = P_I
\eal
\end{prop}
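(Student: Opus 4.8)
\emph{Proof plan.} The statement is, in essence, the singular-$\bW_{\mu}$ branch of the proof of Theorem \ref{thm.R*} recast as an equivalence of two optimization problems, so the plan is to reuse that analysis and add what is specific to the claim. First I would record the block structure: since the TPC is redundant there is an optimal primal--dual point of (P1) with $\mu_1=0$, and as $\bW_2$ is rank-deficient $\bW_{\mu}=(\mu_2\bW_2)^{1/2}$ is singular; by Proposition \ref{prop.TPC.inact} (whose condition is necessary for the TPC to be redundant) $\sN(\bW_2)\in\sN(\bW_1)$, hence $\sR(\bW_1)\in\sR(\bW_2)$. Conjugating by a unitary $\bU_2$ of eigenvectors of $\bW_2$ whose leading columns $\bU_{2+}$ span $\sR(\bW_2)$, the matrices $\bU_2^+\bW_2\bU_2$ and $\bU_2^+\bW_1\bU_2$ become block-diagonal with vanishing lower-right block, with leading blocks $\tilde{\bLam}_2=\bU_{2+}^+\bW_2\bU_{2+}>0$ and $\tilde{\bW}_1=\bU_{2+}^+\bW_1\bU_{2+}$ respectively.

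Next I would carry out the reduction (P1)$\to$(P2). For feasible $\bR$ write $\tilde{\bR}=\bU_2^+\bR\bU_2$ with leading block $\bR_+=\bU_{2+}^+\bR\bU_{2+}$, cross block $\bR_{12}$ and trailing block $\bR_{22}$. Because of the block form of $\bU_2^+\bW_1\bU_2$,
\[
\bU_2^+(\bI+\bW_1\bR)\bU_2=\begin{pmatrix}\bI+\tilde{\bW}_1\bR_+ & \tilde{\bW}_1\bR_{12}\\ \bo & \bI\end{pmatrix},
\]
which is block upper-triangular, so $|\bI+\bW_1\bR|=|\bI+\tilde{\bW}_1\bR_+|$; similarly $\tr(\bW_2\bR)=\tr(\tilde{\bLam}_2\bR_+)$. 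Thus both the objective and the IPC see $\bR$ only through $\bR_+$, whereas $\tr(\bR)=\tr(\bR_+)+\tr(\bR_{22})\ge\tr(\bR_+)$ since $\bR_{22}\ge0$. Replacing $\tilde{\bR}$ by the matrix with leading block $\bR_+$ and all other blocks zero keeps it positive semidefinite (a principal submatrix of a PSD matrix is PSD, and zero-padding preserves PSD), leaves $|\bI+\bW_1\bR|$ and $\tr(\bW_2\bR)$ unchanged and does not increase $\tr(\bR)$, so it is still feasible and still optimal. Conversely every $\tilde{\bR}$ feasible for (P2) lifts to $\bU_{2+}\tilde{\bR}\bU_{2+}^+$, feasible for (P1) with the same value. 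This yields both that (P1) and (P2) have the same value and that some optimal $\bR^*$ of (P1) has the form $\bU_{2+}\tilde{\bR}^*\bU_{2+}^+$ of \eqref{eq.R*2}, with $\tilde{\bR}^*$ optimal for (P2).

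Then I would solve (P2): it is precisely the capacity problem under an IPC with the full-rank constraint matrix $\tilde{\bLam}_2$ (together with a TPC which, by the previous step, is redundant), so by the IPC-only specialization of Theorem \ref{thm.R*} / Proposition \ref{prop.IPC} its optimum is \eqref{eq.R.tild}, where $\tilde{\bW}_1^{-1}$ is read through the positive-eigenmode convention introduced after \eqref{eq.RWF} when $\tilde{\bW}_1$ is singular (which happens exactly when $r(\bW_1)<r(\bW_2)$, since $\bW_1=\bU_{2+}\tilde{\bW}_1\bU_{2+}^+$ gives $r(\tilde{\bW}_1)=r(\bW_1)$). It remains to show $\mu_2>0$: if instead $\mu_1=\mu_2=0$ then \eqref{eq.T.KKT} gives $\bM=-(\bI+\bW_1\bR^*)^{-1}\bW_1=-\bW_1^{1/2}(\bI+\bW_1^{1/2}\bR^*\bW_1^{1/2})^{-1}\bW_1^{1/2}\le0$, which with $\bM\ge0$ forces $\bW_1=\bo$, which is excluded. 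Hence the IPC is active, $\tr(\bW_2\bR^*)=\tr(\tilde{\bLam}_2\tilde{\bR}^*)=P_I$, and substituting \eqref{eq.R.tild} gives the equation $\tr(\mu_2^{-1}\bI-\tilde{\bLam}_2^{1/2}\tilde{\bW}_1^{-1}\tilde{\bLam}_2^{1/2})_+=P_I$ for $\mu_2$, whose left side is continuous and strictly increasing in $\mu_2^{-1}$, so the solution $\mu_2>0$ is unique.

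The easy parts are the first and third steps, which essentially duplicate pieces already present in the proof of Theorem \ref{thm.R*}. The main obstacle is the second step: one must be sure that collapsing $\bR$ onto $\sR(\bW_2)$ costs nothing, and this rests on the block upper-triangular determinant identity (so that the objective genuinely ignores the null-space and cross blocks) together with the fact that zeroing those blocks preserves positive semidefiniteness — it is exactly this pair of facts that promotes "there is a KKT-optimal $\bR^*$ supported on $\sR(\bW_2)$'' to the two-sided equality of the optimal values of (P1) and (P2).
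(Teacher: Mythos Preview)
Your proposal is correct and follows essentially the same approach as the paper: both use the inclusion $\sN(\bW_2)\subseteq\sN(\bW_1)$ (from Proposition \ref{prop.TPC.inact}) to block-diagonalize $\bW_1,\bW_2$ in the eigenbasis of $\bW_2$, then establish equality of values by compressing an optimal $\bR^*$ of (P1) to a feasible point of (P2) and lifting an optimal $\tilde{\bR}^*$ of (P2) back to (P1). The only cosmetic difference is that the paper encodes the determinant identity via the projection relation $\bP_2\bW_k\bP_2=\bW_k$ (with $\bP_2=\bU_{2+}\bU_{2+}^+$) rather than your explicit block upper-triangular computation, and it leaves the $\mu_2>0$ justification implicit whereas you spell it out.
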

\begin{proof}
Let $\bR^*$ and $\tilde{\bR}^*$  be the solutions of (P1) and (P2) under the stated conditions and let $\bP_2=\bU_{2+}\bU_{2+}^+$ be a projection matrix on the space spanned by the active eigenvectors of $\bW_2$, i.e. on $\mathcal{R}(\bW_2)$. Note that $\bP_2\bW_k\bP_2=\bW_k$, $k=1,2$, since $\mathcal{R}(\bW_1)\in \mathcal{R}(\bW_2)$ under the stated conditions. Define $\tilde{\bR}'= \bU_{2+}^+\bR^*\bU_{2+}$ and observe that
\bal
P_T \ge tr(\bR^*) \ge tr(\tilde{\bR}'),\
P_I \ge tr(\bW_2\bR^*) = tr(\bP_2\bW_2\bP_2\bR^*) = tr(\tilde{\bLam}_2\tilde{\bR}')
\eal
so that $\tilde{\bR}'$ is feasible for (P2) and hence
\bal
\tilde{C}(\tilde{\bR}^*)\ge \tilde{C}(\tilde{\bR}')= \log|\bI+\tilde{\bW}_1\tilde{\bR}'| = \log|\bI+\bP_2\bW_1\bP_2\bR^*|
 =C(\bR^*)
\eal
On the other hand, let $\bR'= \bU_{2+}\tilde{\bR}^*\bU_{2+}^+$ and observe that
\bal
P_T \ge tr(\tilde{\bR}^*) = tr(\bR'),\
P_I \ge tr(\tilde{\bLam}_2\tilde{\bR}^*) = tr(\bP_2\bW_2\bP_2\bR') = tr(\bW_2\bR')
\eal
so that $\bR'$ is feasible for (P1) and hence
\bal
C(\bR^*)\ge C(\bR') =\log|\bI+\tilde{\bW}_1\tilde{\bR}^*|=\tilde{C}(\tilde{\bR}^*)
\eal
and finally $C(\bR^*)= \tilde{C}(\tilde{\bR}^*)$, $\mu_1=0$ (since the TPC is redundant for the original problem and hence for both problems) and the desired result follows.
\end{proof}

Note that Proposition \ref{prop.NW2} establishes the optimality of projecting all matrices on the sub-space $\mathcal{R}(\bW_2)$ and solving the projected problem instead, if the TPC is not active and $\bW_2$ is rank-deficient, i.e. if $\bW_{\mu}$ is rank-deficient.

Adopting the KKT condition in \eqref{eq.T.KKT} to the problem in \eqref{eq.C.tilda}, one obtains:
\bal
%\label{eq.T.KKT}
(\bI+\tilde{\bW}_1\tilde{\bR}^*)^{-1}\tilde{\bW}_1\tilde{\bR}^* = \mu_2\tilde{\bLam}_2\tilde{\bR}^*
\eal
so that
\bal
%\label{eq.T.KKT}
r(\tilde{\bR}^*) &= r(\tilde{\bLam}_2\tilde{\bR}^*) = r(\tilde{\bW}_1\tilde{\bR}^*) \le \min(r(\tilde{\bW}_1),r(\tilde{\bR}^*))
    \le r(\tilde{\bW}_1) \le r(\bW_1)
\eal
and, from \eqref{eq.R*2}, $r(\bR^*)=r(\tilde{\bR}^*)$, so that $r(\bR^*) \le r(\bW_1)$, as desired.

\subsection{Proof of Proposition \ref{prop.r1.IPC}}

To establish these results, we need the following technical Lemma, which can be established via the standard continuity argument.
\begin{lemma}
\label{lemma.I-W}
Let $\bW=\gl\bu\bu^+$ be rank-one positive semi-definite matrix, $\gl>0$. Then,
\bal
(\bI-\bW^{-1})_+=(1-\gl^{-1})_+\bu\bu^+
\eal
\end{lemma}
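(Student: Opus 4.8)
The plan is to push the statement down to the eigenvalue level, using the convention for $(\cdot)_+$ fixed around \eqref{eq.R+}--\eqref{eq.RWF}, under which an expression such as $(\bI-\bW^{-1})_+$ with singular $\bW$ is read mode-by-mode over the eigenbasis of $\bW$, the zero modes of $\bW$ being dropped from the positive-part sum (equivalently, $0^{-1}$ is read as $+\infty$). First I would diagonalize $\bW=\gl\bu\bu^+$, with $\bu$ a unit vector as throughout the paper: it acts as multiplication by $\gl>0$ on $\mathrm{span}\{\bu\}$ and by $0$ on the orthogonal complement, so the only admissible mode of $\bI-\bW^{-1}$ is the value $1-\gl^{-1}$ along $\bu$, the $m-1$ directions orthogonal to $\bu$ being excluded since they correspond to $\gl_i(\bW)=0$. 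Applying \eqref{eq.R+} then retains this single mode precisely when $1-\gl^{-1}>0$, giving $(\bI-\bW^{-1})_+=(1-\gl^{-1})_+\bu\bu^+$ directly.

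Alternatively --- and this is the ``standard continuity argument'' the statement alludes to --- I would regularize: for $\epsilon>0$ set $\bW_\epsilon=\gl\bu\bu^++\epsilon(\bI-\bu\bu^+)$, which is positive definite with inverse $\bW_\epsilon^{-1}=\gl^{-1}\bu\bu^++\epsilon^{-1}(\bI-\bu\bu^+)$, the two summands being complementary orthogonal projectors. Then $\bI-\bW_\epsilon^{-1}=(1-\gl^{-1})\bu\bu^++(1-\epsilon^{-1})(\bI-\bu\bu^+)$ is already in spectral form, so $(\bI-\bW_\epsilon^{-1})_+=(1-\gl^{-1})_+\bu\bu^++(1-\epsilon^{-1})_+(\bI-\bu\bu^+)$; for every $0<\epsilon<1$ the second term vanishes, and since $\bW_\epsilon\to\bW$ as $\epsilon\to0$ the claim follows. (In fact the right-hand side is constant on $\epsilon\in(0,1)$, so no genuine limiting argument is even needed.)

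The only point that requires any care is the meaning of $\bW^{-1}$, hence of $(\bI-\bW^{-1})_+$, when $\bW$ is rank-deficient; once this is pinned down by the convention of \eqref{eq.R+}--\eqref{eq.RWF}, both routes are routine bookkeeping and there is no substantive obstacle. This lemma is precisely the rank-one instance of the positive-part operation in Theorem \ref{thm.R*}, which is why it is the technical ingredient invoked in the proof of Proposition \ref{prop.r1.IPC}.
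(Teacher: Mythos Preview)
Your proposal is correct; the second route (regularizing $\bW$ by $\bW_\epsilon=\gl\bu\bu^++\epsilon(\bI-\bu\bu^+)$ and letting $\epsilon\to 0$) is precisely the ``standard continuity argument'' that the paper invokes as its sole justification of the lemma. Your first, mode-by-mode route via the convention of \eqref{eq.R+}--\eqref{eq.RWF} is a legitimate and slightly more direct alternative that the paper does not spell out.
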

Note that the $(\cdot)_+$ operator eliminates all singular modes of $\bW$ and hence its singularity is not a problem, which is somewhat similar to using pseudo-inverse for a singular matrix.

To prove the 1st case, we assume that $\bW_2$ is not singular and discuss the singular case later. Setting $\bW=\bW_{2\mu}^{-\frac{1}{2}}\bW_1\bW_{2\mu}^{-\frac{1}{2}}$ and applying this Lemma to $(\bI-(\bW_{2\mu}^{-\frac{1}{2}}\bW_1\bW_{2\mu}^{-\frac{1}{2}})^{-1})_+$
in \eqref{eq.thm.R*.1}, one obtains $\bR^*$ as in \eqref{eq.prop.r1.IPC.1}, after some manipulations, with $\bW_2^\dag=\bW_2^{-1}$. The condition $\gamma_I< \gamma_1$ ensures that the TPC is redundant, so that $\mu_1=0$ and hence $\bW_{2\mu}=\mu_2\bW_2>0$, $tr(\bW_2\bR^*)=P_I$ (since the IPC is active).

If $\bW_2$ is singular and the TPC is redundant, then one can project all matrices on $\mathcal{R}(\bW_2)$ and solve the projected problem instead without loss of optimality, as was shown in Proposition \ref{prop.NW2}. After some manipulations, this can be shown to result in using the pseudo-inverse instead of the inverse of $\bW_2$.

To prove the $\gamma_I\ge \gamma_2$ case, note that, under this condition, $\bR^* = P_T\bu_1\bu_1^+$ is feasible under the joint constraint (TPC+IPC). Since it is also optimal without the IPC, it has to be optimal under the joint constraints as well. This proves the "if" part. To prove the "only if" (necessary) part, observe that if $P_T \bu_1^+\bW_2\bu_1 > P_I$, then $\bR^* = P_T\bu_1\bu_1^+$ is not feasible and hence cannot be optimal under the IPC.

To prove the last case, $\gamma_1 \le \gamma_I < \gamma_2$, use \eqref{eq.thm.R*.1} and note that both constraints are now active (since neither \eqref{eq.prop.r1.IPC.1} nor $\bR^* = P_T\bu_1\bu_1^+$ are feasible under the stated conditions). Applying Lemma \ref{lemma.I-W} as in the 1st case, one obtains \eqref{eq.prop.r1.IPC.3} after some manipulations.

\subsection{Proof of Proposition \ref{prop.f12.prop}}
The proof is based on the following technical lemma.
\begin{lemma}
For a fixed $\mu_2$, $P_1(\mu_1,\mu_2)$ is a decreasing function of $\mu_1$. Likewise, for a fixed $\mu_1$, $P_2(\mu_1,\mu_2)$ is a decreasing function of $\mu_2$.
\end{lemma}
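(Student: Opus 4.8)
The plan is to recognize $\bR^*(\mu_1,\mu_2)$ of \eqref{eq.thm.R*.1} as the maximizer over $\bR\ge 0$ of the penalized concave objective
\bal
L(\bR;\mu_1,\mu_2)\;=\;C(\bR)-\mu_1\tr(\bR)-\mu_2\tr(\bW_2\bR),
\eal
and then to read off the monotonicity purely from optimality at two parameter values. That $\bR^*(\mu_1,\mu_2)$ is such a maximizer is immediate from the proof of Theorem~\ref{thm.R*}: the stationarity equation \eqref{eq.T.KKT} together with $\bM\ge 0$, $\bM\bR=0$ is exactly the first-order optimality condition for $\max_{\bR\ge 0}L(\bR;\mu_1,\mu_2)$, and concavity of $L$ in $\bR$ upgrades stationarity to global optimality (a finite maximizer exists whenever $\sN(\mu_1\bI+\mu_2\bW_2)\subseteq\sN(\bW_1)$, in particular for every $\mu_1>0$, which is the regime in which the bisection search operates).

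For the first claim I would fix $\mu_2$, take $0\le\mu_1<\mu_1'$, and abbreviate $\bR_a=\bR^*(\mu_1,\mu_2)$, $\bR_b=\bR^*(\mu_1',\mu_2)$. Optimality of $\bR_a$ at $(\mu_1,\mu_2)$ and of $\bR_b$ at $(\mu_1',\mu_2)$ yields
\bal
L(\bR_a;\mu_1,\mu_2)\ge L(\bR_b;\mu_1,\mu_2),\qquad
L(\bR_b;\mu_1',\mu_2)\ge L(\bR_a;\mu_1',\mu_2).
\eal
Adding these two inequalities, the terms $C(\bR_a)$, $C(\bR_b)$ and both $\mu_2$-weighted interference terms cancel, and after rearranging one is left with $(\mu_1'-\mu_1)\big(\tr(\bR_a)-\tr(\bR_b)\big)\ge 0$. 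Since $\mu_1'>\mu_1$, this forces $\tr(\bR_a)\ge\tr(\bR_b)$, i.e. $P_1(\mu_1,\mu_2)\ge P_1(\mu_1',\mu_2)$, so $P_1(\cdot,\mu_2)$ is non-increasing.

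The second claim is obtained by the mirror-image computation: fix $\mu_1$, take $\mu_2<\mu_2'$, set $\bR_a=\bR^*(\mu_1,\mu_2)$, $\bR_b=\bR^*(\mu_1,\mu_2')$, and add the optimality inequalities $L(\bR_a;\mu_1,\mu_2)\ge L(\bR_b;\mu_1,\mu_2)$ and $L(\bR_b;\mu_1,\mu_2')\ge L(\bR_a;\mu_1,\mu_2')$; this time the $C$-terms and the $\mu_1$-weighted trace terms cancel, leaving $(\mu_2'-\mu_2)\big(\tr(\bW_2\bR_a)-\tr(\bW_2\bR_b)\big)\ge 0$, hence $P_2(\mu_1,\mu_2)\ge P_2(\mu_1,\mu_2')$. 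There is essentially no obstacle in the computation itself; it is the textbook fact that the arg-max of a concave function moves monotonically against a linear tilt of the objective, and the matrix structure plays no role. The one point that deserves a sentence of care — and the reason I would first revisit the KKT derivation — is the identification of \eqref{eq.thm.R*.1} with the maximizer of the \emph{unconstrained-penalized} problem rather than of the original constrained one; once that is settled, the rest is the two-line exchange argument above.
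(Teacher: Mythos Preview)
Your argument is correct, and it is genuinely different from the paper's. The paper fixes $\mu_2$ and views the capacity as a function $C(P_1)$ of the total transmit power $P_1=\tr\bR^*$; by concavity of this value function one has $\partial^2 C/\partial P_1^2\le 0$, while the envelope/KKT relation gives $\partial C/\partial P_1=\mu_1$, so $\partial\mu_1/\partial P_1\le 0$ and hence $\partial P_1/\partial\mu_1\le 0$ (and symmetrically for $P_2$ and $\mu_2$). Your route instead identifies $\bR^*(\mu_1,\mu_2)$ as the maximizer of the linearly-tilted concave Lagrangian and uses the standard two-point exchange (comparative-statics) argument to obtain $(\mu_1'-\mu_1)(\tr\bR_a-\tr\bR_b)\ge 0$. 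Your argument is a bit more elementary in that it avoids differentiability of $C(P_1)$ and the inversion $\partial P_1/\partial\mu_1=(\partial\mu_1/\partial P_1)^{-1}$, and it makes explicit the one hypothesis actually needed (existence of a finite maximizer, i.e.\ $\sN(\mu_1\bI+\mu_2\bW_2)\subseteq\sN(\bW_1)$, guaranteed in particular for $\mu_1>0$). The paper's approach, on the other hand, ties the monotonicity directly to the concavity of the rate--power trade-off curve, which is perhaps more conceptually illuminating from an information-theoretic standpoint.
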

\begin{proof}
Let us consider the capacity as a function of the total transmit power $P_1(\mu_1,\mu_2)=tr\bR^*(\mu_1,\mu_2)$: $C(P_1)$. It is straightforward to see that $C(P_1)$ is a concave function (see e.g. \cite{Boyd-04}, exercise 5.32), so that
\bal
\partial^2 C(P_1)/\partial P_1^2 \le 0
\eal
In addition,
\bal
\partial C(P_1)/\partial P_1 = \mu_1
\eal
Combining these,
\bal
\frac{\partial^2 C(P_1)}{\partial P_1^2}= \frac{\partial\mu_1}{\partial P_1} \le 0
\eal
so that $\partial P_1/\partial\mu_1 \le 0$, as required.

The inequality $\partial P_2/\partial\mu_2 \le 0$ is proved in a similar way.
\end{proof}

Since $f_1(\mu_{10},\mu_2)=0$, it follows that either $\mu_{10}=0$ or $P_1(\mu_{10},\mu_2)=P_T$. In the latter case, $P_1(\mu_{1},\mu_2)\le P_T$ (from the above Lemma) and hence $f_1(\mu_{1},\mu_2)\le 0$ for any $\mu_1>\mu_{10}$, and the opposite inequalities hold for $\mu_1<\mu_{10}$. If $\mu_{10}=0$, then $P_1(\mu_{10},\mu_2)\le P_T$ and hence $f_1(\mu_{1},\mu_2)\le 0$ for any $\mu_1>\mu_{10}=0$ from the above Lemma. This proves the desired property of $f_1(\mu_1,\mu_2)$. The same property of $f_2(\mu_1,\mu_2)$ is proved in a similar way.

\subsection{Proof of Proposition \ref{prop.IBA.conv}}
First, we proof that the sequences $\{\mu_{1k}\}_{k=1}^{\infty}$ and $\{\mu_{2k}\}_{k=1}^{\infty}$ generated by the IBA are decreasing and increasing respectively, if $P_{1(2)}(\mu_{1},\mu_{2})$ are decreasing functions. Since $\mu_1, \mu_2$ are also bounded from below and above (as in \eqref{eq.mu.l.u}), this will ensure the required convergence property.

Consider iteration 1 of the IBA, $k=1$. Step 2 ensures that $f_1(\mu_{11},0)=0$, $\mu_{11}>0$ and hence $P_1(\mu_{11},0)=P_T$. If $P_2(\mu_{11},0)\le P_I$, then $(\mu_{11},0)$ is optimal (the IPC is inactive) and the algorithm terminates. Otherwise, $P_2(\mu_{11},0)> P_I$ and step 3 ensures that $f_2(\mu_{11},\mu_{21})=0$ with $\mu_{21}>\mu_{20}=0$ and hence $P_2(\mu_{11},\mu_{21})= P_I$.

Consider now iteration $k=2$. If $P_1(\mu_{11},\mu_{21}) = P_T$, then $(\mu_{11},\mu_{21})$ is optimal and the IBA terminates. Otherwise, $P_1(\mu_{11},\mu_{21}) < P_T$ (due to the decreasing property of $P_1$ and $\mu_{21}>0$) and step 2 ensures that $f_1(\mu_{12},\mu_{21})=0$ so that either $\mu_{12}=0$ (inactive TPC) or $P_1(\mu_{12},\mu_{21}) = P_T$ and, in both cases, $\mu_{12} < \mu_{11}$. If $P_2(\mu_{12},\mu_{21}) = P_I$, then, at step 3, $\mu_{22}=\mu_{21}$, $(\mu_{12},\mu_{22})$ is optimal and the IBA terminates. Otherwise, $P_2(\mu_{12},\mu_{21}) > P_I$, step 3 ensures that $f_2(\mu_{12},\mu_{22})=0$ so that $P_2(\mu_{12},\mu_{22})= P_I$ and hence $\mu_{22}>\mu_{21}$.

Continuing this indefinitely, one obtains
\bal\notag
&\mu_{11}\ge \mu_{12}\ge..\ge\mu_{1k}\ge...\\
&\mu_{21}\le \mu_{22}\le..\le\mu_{2k}\le...
\eal
as required. Also, if $\mu_{1k}=\mu_1^*$ at some step $k$, then $\mu_{2k}=\mu_2^*$ and the IBA terminates at this optimal point. Likewise, if $\mu_{2k}=\mu_2^*$ for some $k$, then $\mu_{1(k+1)}=\mu_1^*$, $\mu_{2(k+1)}=\mu_2^*$ and the IBA terminates.

It was shown above that the IBA terminates in one iteration under condition 1. We now verify that $P_{1(2)}(\mu_{1},\mu_{2})$ are decreasing functions under conditions 2 and 3. To this end, let $\gl_{1i}=\gl_i(\bW_1)$, $\gl_{2i}=\gl_i(\bW_2)$. When $\bW_1$ and $\bW_2$ have the same eigenvectors, $\bR^*$ has the same eigenvectors too. Hence,
\bal
P_{1}(\mu_{1},\mu_{2}) &= \sum_i ((\mu_1+\mu_2\gl_{2i})^{-1}-\gl_{1i}^{-1})_+\\
P_{2}(\mu_{1},\mu_{2}) &= \sum_i \gl_{2i}((\mu_1+\mu_2\gl_{2i})^{-1}-\gl_{1i}^{-1})_+
\eal
which are clearly decreasing functions of $\mu_{1},\mu_{2}$.

When $\bR^*(\mu_1,\mu_2)$ is full-rank, the $(\cdot)_+$ operator is redundant (all eigenmodes are active) and hence
\bal
\bR^* = \bW_{\mu}^{-1} - \bW_1^{-1}
\eal
so that
\bal
P_{1}(\mu_{1},\mu_{2}) &= \sum_i (\mu_1+\mu_2\gl_{2i})^{-1}-tr\bW_1^{-1}\\
P_{2}(\mu_{1},\mu_{2}) &= \sum_i \gl_{2i}(\mu_1+\mu_2\gl_{2i})^{-1}-tr\bW_2\bW_1^{-1}
\eal
which are clearly decreasing functions of $\mu_{1},\mu_{2}$.


\begin{thebibliography}{1}


\bibitem{Shafi-17} M. Shafi et al, 5G: A Tutorial Overview of Standards, Trials, Challenges, Deployment, and Practice, IEEE Journal Sel. Areas Comm., Special Issue on Deployment Issues and Performance Challenges for 5G, Part I, v. 35, N.6, pp. 1201-1221, June 2017.

\bibitem{Liang-17} W. Liang, Z. Ding, H.V. Poor, Non-Orthogonal Multple Access (NOMA) for 5G Systems, in V.W.S. Wong et al (Eds.), Key Technologies for 5G Wireless Systems, Cambridge University Press, Cambridge, UK, 2017.

\bibitem{Sanguinetti-17} L. Sanguinetti, A.L. Moustakas, M. Debbah, Interference Management in 5G Reverse TDD HetNets with Wireless Backhaul: A Large System Snalysis, IEEE JSAC, v. 33, N. 6, pp. 1187--1200, June 2015.

\bibitem{Zhang-17} H. Zhang, Y. Liao, L. Song, D2D-U: Device-to-Device Communications in Unlicensed Bands for 5G System, IEEE Trans. Comm., v.16, N.6, pp. 3507-3519, June 2017.

\bibitem{Song-17} L. Song et al, Resource Management in Non-Orthogonal Multiple Access Networks for 5G and Beyond, IEEE Network, pp. 8--14, July/August 2017.

\bibitem{Haykin-09} S. Haykin et al (Eds.), Cognitive Radio, Part 1: Practical Perspectives, and Part 2: Fundamental Issues, Proceedings of the IEEE, v. 97, n. 4 and 5, Apr. and May 2009.
\bibitem{Biglieri} E. Biglieri et al, MIMO Wireless Communications, Cambridge University Press, New York, 2007.
\bibitem{Tsybakov-65} B. S. Tsybakov, Capacity of Vector Gaussian Memoryless Channel, Problems of Information Transmission, v.1, n.1., 1965.
\bibitem{Telatar-95} I. E. Telatar, Capacity of Multi-Antenna Gaussian Channels, AT{\&}T Bell Labs, Internal Tech. Memo, June 1995, (European Trans. Telecom., v.10, no. 6, Dec. 1999).
\bibitem{Vu-11} M. Vu, ”MISO Capacity with Per-Antenna Power Constraint,” IEEE Trans. on Comm., vol. 59, no. 5, May 2011.
\bibitem{Loyka-17} S. Loyka, The Capacity of Gaussian MIMO Channels Under Total and Per-Antenna Power Constraints, IEEE Trans. Comm., v. 65, n. 3, pp. 1035--1043, Mar. 2017.
\bibitem{Dong-18} L. Dong, S. Loyka and Y. Li, The Secrecy Capacity of Gaussian MIMO Wiretap Channels Under Interference Constraints, IEEE Journal on Selected Areas in Communications (Special Issue on Physical Layer Security for 5G Wireless Networks), vol. 36, no. 4, pp. 704-722, April 2018.
\bibitem{Scurati-10} G. Scurati, D.P. Palomar, MIMO Cognitive Radio: A Game Theoretical Approach, IEEE Trans. Signal Processing, v. 58, n. 2, pp. 761--780, Feb. 2010.
\bibitem{Yang-13} Y. Yang et al, "Robust MIMO cognitive radio systems under interference temperature constraints, IEEE Journal on Selected Areas in Communications, v. 31, n. 11 pp. 2465--2482, Nov. 2013.
\bibitem{Huh-10} H. Huh et al, Multiuser MISO Transmitter Optimization for Intercell Interference Mitigation, IEEE Trans. Signal Proces., v. 58, n. 8, pp. 4272--4285, Aug. 2010
\bibitem{Zhang-12} L. Zhang et al, On Gaussian MIMO BC-MAC Duality With Multiple Transmit Covariance Constraints, IEEE Trans. Info. Theory, v. 58, n. 4, pp. 2064--2078, Apr. 2012.
\bibitem{Zhang-10} R. Zhang et al, Dynamic Resource Allocation in Cognitive Radio Networks, IEEE Signal Procesing Magazine, v.27, n.3, pp. 102-114, May 2010.
\bibitem{Yu-10} Y. Yu and Y. Hua, "Power Allocation for a MIMO Relay System With Multiple-Antenna Users," IEEE Trans. Signal Proces., vol. 58, no. 5, pp. 2823-2835, May 2010.
\bibitem{Liu-12} A. Liu et al, Polite Water-Filling for Weighted Sum-Rate Maximization in MIMO B-MAC Networks Under Multiple Linear Constraints, IEEE Trans. Signal Proces., vol. 60, no. 2, pp. 834-847, Feb. 2012.
\bibitem{Pham-18} T.M. Pham et al, On the MIMO Capacity with Multiple Linear Transmit Covariance Constraints, IEEE 87th Veh. Tech. Conf. (VTC Spring), Porto, Portugal, pp. 1-6, Jun. 2018.
\bibitem{Shang-13} X. Shang, B. Chen, Two-User Gaussian Interference Channels: An information Theoretic Point of View, Foundations and Trends in Communications and Information Theory, vol.10, no.3, pp. 247--378, Dec. 2013.
\bibitem{Sato-81} H. Sato, The capacity of the Gaussian Interference Channel Under Strong Interference, IEEE Trans. Info. Theory, vol.27, no.6, pp. 786--788, Nov. 1981.
\bibitem{Annapureddy-11} V.S. Annapureddy, V.V. Veeravalli, Sum Capacity of MIMO Interference Channels in the Low Interference Regime, IEEE Trans.  Info. Theory, vol.57, no.5, pp. 2565--2581, Apr. 2011.
\bibitem{Etkin-08} R.H. Etkin et al, Gaussian Interference Channel Capacity to Within One Bit, IEEE Trans. Info. Theory, vol.54, no.12, pp. 5534--5562, Nov. 2008.
\bibitem{Motahari-09} A.S. Motahari, A.K. Khandani, Capacity Bounds for The Gaussian Interference Channel, IEEE Trans. Info. Theory, vol.55, no.2, pp. 620--643, Feb. 2009.
\bibitem{Ye-03} S. Ye, R.S. Blum, Optimized Signaling for MIMO Interference Systems with Feedback, IEEE Trans. Sig. Proc., vol.51, no.11,, pp. 2839--2848, Dec. 2003.
\bibitem{Shi-11} Q. Shi et al, An Iteratively Weighted MMSE Approach to Distributed Sum-Utility Maximization for a MIMO Interfering Broadcast Channel, IEEE Trans. Sig. Proc., vol. 59, no. 9, pp. 4331--4340, Apr. 2011.
\bibitem{Peters-11} S.W. Peters, R.W. Heath, Cooperative Algorithms for MIMO Interference Channels, IEEE Trans. Veh. Tech., vol.60, no.1, pp. 206--218, Oct. 2010.
\bibitem{Lagen-16} S. Lagen et al, On the Superiority of Improper Gaussian Signaling in Wireless Interference MIMO Scenarios, IEEE Trans. Comm., vol.64, no.8, pp. 3350--3368, Jun. 2016.
\bibitem{Boyd-04} S. Boyd, L. Vandenberghe, Convex Optimization, Cambridge University Press, 2004.
\bibitem{Horn-85} R.A. Horn, C.R. Johnson, Matrix Analysis, Cambridge Univ. Press, 1985.
\bibitem{Zhang-99} F. Zhang, Matrix Theory: Basic Results and Techniques, Springer, 1999.
\bibitem{Chizhik-02} D. Chizhik et al, Keyholes, Correlations and Capacities of Multielement Transmit and Receive Antennas, IEEE Trans. Wireless Commun., vol. 1, no. 2, pp. 361–-368, Apr. 2002
\bibitem{Levin-08} G. Levin, S. Loyka, On the Outage Capacity Distribution of Correlated Keyhole MIMO Channels, IEEE Trans. Info. Theory, v. 54, N. 7, pp. 3232-3245, July 2008.
\bibitem{Loyka-17-2} S. Loyka, On Optimal Signaling over Gaussian MIMO Channels Under Interference Constraints, the 5th IEEE Global Conference on Signal and Information Processing (GlobalSIP), Montreal, Canada, Nov. 2017.
\bibitem{VanTrees-02} H.L. Van Trees, Optimum Array Processing, Wiley, New York, 2002.
\bibitem{Marzetta-16} L. Marzetta et al, Fundamentals of Massive MIMO, Cambridge University Press, 2016.
%\bibitem{Loyka-19} S. Loyka, The Capacity and Optimal Signaling over Gaussian MIMO Channels Under Interference Constraints (full version), Dec. 2019, available at http://www.site.uottawa.ca/$\sim$sloyka/papers/T-Comm/






\end{thebibliography}
\end{document}